\newtheorem{prop}{Proposition}[section]
\newtheorem{theorem}[prop]{Theorem}
\newtheorem{definition}[prop]{Definition}
\newtheorem{lemma}[prop]{Lemma}
\newtheorem{corollary}[prop]{Corollary}
\theoremstyle{remark}
\newcommand{\WF}{\textup{WF}}
\DeclareMathOperator{\supp}{supp}
\newcommand{\sE}{\mathcal{E}}
\newcommand{\sD}{\mathcal{D}}
\newcommand{\sF}{\mathcal{F}}
\newcommand{\sX}{\mathcal{X}}
\newcommand{\dg}[1]{\mathcal{D}'_{\Gamma_{#1}}}
\newcommand{\el}[1]{\mathcal{E}'_{\Lambda_{#1}}}
\newcommand{\egc}[1]{\mathcal{E}'_{\Gamma^c_{#1}}}
\newcommand{\nbhd}{neighbourhood of zero }
\newcommand{\hatotimes}{\mathbin{\hat\otimes}}
\newcommand{\pc}{{\mathrm{pc}}}
\newcommand{\fc}{{\mathrm{fc}}}
\newcommand{\ec}{{\mathrm{ec}}}
\title{A novel class of functionals for perturbative algebraic quantum field theory}
\author{Eli Hawkins\thanks{\href{mailto:eli.hawkins@york.ac.uk}{eli.hawkins@york.ac.uk}\orcidlink{0000-0003-2054-3152}}}
\author{Kasia Rejzner\thanks{\href{mailto:kasia.rejzner@york.ac.uk}{kasia.rejzner@york.ac.uk}\orcidlink{0000-0001-7101-5806}}}
\author{Berend Visser\thanks{\href{mailto:berend.visser@york.ac.uk}{berend3.14@gmail.com}\orcidlink{0000-0002-5518-3807}}}
\affil{Department of Mathematics\\ The University of York\\ United Kingdom}
\date{}
\begin{document}
\maketitle
\begin{abstract}
Perturbative Algebraic Quantum Field Theory (pAQFT) is based upon formal power series valued in spaces of functionals. 
This is usually done with  \textit{microcausal} functionals, which are defined using microlocal analysis and motivated by propagation of singularities. In this paper, we prove that the class of microcausal functionals is not closed under the Peierls (Poisson) bracket by showing that a Peierls bracket of regular functionals can fail to be smooth. Consequently, microcausal functionals are not a suitable basis for pAQFT.
To remedy this issue, we introduce the class of \emph{equicausal functionals}. 
We show that this class contains the local functionals and that it closes under the $\star$-product and Peierls bracket. Furthermore, we prove the time-slice axiom for equicausal functionals, using a chain homotopy. 
 The class of microcausal functionals is not closed under this chain homotopy, which strongly suggests that the class of microcausal functionals does not fulfill the time slice axiom. 
\end{abstract}
\tableofcontents

\section{Introduction}
 The functional formalism for field theory  applies techniques from infinite-dimensional geometry to the rigorous description of field theories in a Lagrangian setting. It fully embraces the fact that the  configuration space of a field theory is infinite-dimensional. Typically, this is the space of sections of some vector bundle over spacetime, or the subspace of solutions to the Euler-Lagrange equations. 

 This formalism has been developed mainly in \cite{dutsch2001algebraic,dutsch2003master,dutsch2007action,duetsch2001perturbative,Brunetti2009} and some of the infinite dimensional differential geometry has been spelled out in \cite{Brunetti2019}, and, in the context of the quantization of gauge systems, in \cite{Fredenhagen2013,Fredenhagen2012}. It combines the idea of deformation quantization as the basis for the construction of free field theory with the Epstein-Glaser \cite{EG} renormalisation technique and microlocal analysis insights going back to Radzikowski \cite{Radzikowski1996}. For a review, see \cite{Rejzner2016} and \cite{dutsch2019classical}.
 
The central notion in this approach is that of a functional, i.e., a function on the configuration space. The derivatives of functionals are distributions, so the Poisson bracket and $\star$-product of arbitrary smooth functionals are ill-defined, as they require pairing distributions together. A \textit{microcausal} functional is one whose derivatives are required (in terms of microlocal analysis) to have a restricted singular structure, stemming from the causal structure of the spacetime (hence the name `microcausal'). This restriction is designed precisely to render the pairings in these algebraic operations well defined. This was made possible due to the seminal paper of Radzikowski, where the Hadamard condition on a quasifree state was phrased in terms of a microlocal condition on its two-point function \cite{Radzikowski1996}. It opened the way for applying microlocal analysis techniques in QFT and since then, the approach has been extremely successful both in Minkowski spacetime and on more general globally hyperbolic backgrounds in a number of important contributions \cite{Brunetti1996,BF0,BF97,HW,HW01,Baer2009,Brunetti2003}.

In this paper we show that the class of microcausal functionals, which is commonly used in the literature, suffers from an unexpected and surprising defect: Although the Peierls bracket of two arbitrary microcausal functionals is a well defined functional, it is not always continuous (and hence not microcausal).
We prove this with an explicit example of two regular (hence microcausal) functionals whose Peierls bracket is discontinuous.
The fact that this problem has gone unnoticed for the past decade is presumably due to the fact that most authors have restricted their attention to \emph{polynomial functionals}. Physically, this restriction rules out many interesting theories, e.g.\ the sine-Gordon model (where the interaction is a cosine) and many relevant observables, e.g.\ gauge-invariant observables in Yang-Mills theories that model Wilson loops. 

We have found a related issue in the connection between on-shell and off-shell algebras in the functional formalism. In the on-shell approach, observables are  functionals on the space of solutions to the Euler-Lagrange equations, i.e.\ `on the mass-shell'. 
The modern approach,  using homological algebra (see e.g.\ \cite{Fredenhagen2012}),  is to work off shell and  study a \textit{(co-)chain complex} of multivector fields on the space of \textit{all} configurations (solutions or not). The space of on-shell observables is then defined as the (degree $0$) homology of the complex. 

It has been a `folklore theorem' for some time that, for free field theories, these approaches are equivalent. 
We prove this for the complex of arbitrary smooth multivector fields using a natural chain homotopy. This also implies the time-slice axiom. The microcausal complex is \textit{not} closed under this homotopy, so this proof fails there and strongly suggests that the homology of the microcausal complex is not equivalent to a space of on-shell functionals. This also indicates that the time-slice axiom fails for the microcausal complex.

The reason for all of these problems is the pointwise nature of the  definition of  microcausal functionals. To remedy this, we define \textit{equicausal} functionals in Definition \ref{equicausaldefinition}. These are microcausal functionals with a further boundedness condition on the way that the singular structure varies over the configuration space. This uses a version of Hörmander's topology on spaces of distributions with a prescribed wavefront set and the notion of \textit{equicontinuous} sets.

We show that fundamental operations like differentiation and integration preserve the class of equicausal functionals. From this, we show that equicausal functionals are closed under  the $\star$-product, and hence the Peierls bracket. We also show that the complex of equicausal multivector fields is closed under the homotopy mentioned above, so its homology is equivalent to a space of on-shell functionals and the time-slice axiom is satisfied.

This paper is structured as follows: In Section \ref{Background} we recall some background on functional analysis, i.e.\ calculus on infinite-dimensional spaces, and its applications in the functional formalism. We also give a (reasonably) self-contained summary of the microlocal techniques we use in this paper; in particular we summarize the topology and properties of spaces with prescribed wavefront sets, as we feel these topics are non-standard. In Section \ref{KoszulComplex}, we perform a detailed study of the Koszul complex of smooth multivector fields (without microlocal constraints), and show that it is a resolution of the algebra of on-shell functionals. As a consequence, we prove the time-slice axiom for multivector fields. In Section \ref{pathologicalfeatures}, we show that a Poisson bracket of microcausal functionals is in general not a smooth functional by giving an explicit counterexample. In Section \ref{equicausal} we give the definition of equicausal multivector fields, and prove some basic properties. We also give some examples by showing that both local functionals and Wick polynomials are equicausal. In Section \ref{timeslice}, we prove the time-slice axiom for equicausal multivector fields. Finally, in Section \ref{algebraicstructsection} we show that the $\star$-product closes on equicausal functionals, implying in particular that the Poisson bracket also closes, as it is the first order part of the $\star$-commutator. We close in Section \ref{outlook} by considering  alternatives to our definition, and indicate some future research directions. The appendices contain several technical proofs that we felt overburdened the main text.

\section{Background}\label{Background}
In this section we gather some background on smooth functionals, and on microlocal analysis. Experts in the field can likely skip most or all or these subsections. We do not provide proofs for most facts, but rather give a citation at the start of each section for the interested reader to follow. 

\subsection{Equicontinuous sets}
As the concept of equicontinuity is central in our discussion, we take some time here to remind the reader of the definition and some basic consequences. Throughout this subsection, $A,B$ and $C$ are fixed locally convex topological vector spaces. A good source for this material is chapter 32 of \cite{Treves1967}.

\begin{definition}\label{equicontinuous}
A set $H$ of linear maps $A\to B$ is \textbf{equicontinuous} if, for any \nbhd $V\subset B$, there is a \nbhd $U \subset A$ such that
    \begin{equation*}
        f(U) \subset V \, \forall \, f \in H.
    \end{equation*}
\end{definition}
We will also somewhat abusively say that `$H$ is equicontinuous from $A$ to $B$'. One should note that, in particular, any individual $f \in H$ will be continuous. An equicontinuous set of mappings is thus a collection of mappings that are `continuous at the same rate'. 

In the case of maps into $\mathbb{C}$, which will be most relevant to us, there is a particularly simple characterisation of equicontinuous sets: A set of continuous linear functionals $H \subset A'$ is equicontinuous, iff there is a continuous seminorm $p$ of $A$, such that
\begin{equation*}
    |f(x)| < p(x) \, \forall \, x \in A,\, f \in H.
\end{equation*}
As an illustration of the central importance of equicontinuous sets in functional analysis, we recall the following fact. 
\begin{prop}
    The topology of a topological vector space $A$ is equivalent to the topology of uniform convergence on equicontinuous subsets of $A'$, the continuous linear dual of $A$. 
\end{prop}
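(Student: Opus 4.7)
My plan is to show the two topologies on $A$ coincide by proving each is finer than the other, working with the natural seminorm descriptions of both.

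First, I would unwind the definitions. The topology of uniform convergence on equicontinuous subsets of $A'$ is generated by the seminorms $p_H(x) := \sup_{f \in H} |f(x)|$ as $H$ ranges over the equicontinuous subsets of $A'$. The original topology on $A$, since $A$ is locally convex, is generated by its family of continuous seminorms.

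For the easy direction (the equicontinuous topology is coarser than the original), I would use the characterization recalled just before the proposition: for each equicontinuous $H\subset A'$ there is a continuous seminorm $p$ on $A$ such that $|f(x)|\leq p(x)$ for all $x\in A$ and $f\in H$. Taking the supremum over $f\in H$ gives $p_H(x)\leq p(x)$, so every generating seminorm of the equicontinuous topology is dominated by a continuous seminorm of $A$, hence is itself continuous in the original topology. This shows every neighbourhood of zero in the equicontinuous topology contains a neighbourhood in the original topology.

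For the other direction, I would use polarity and the bipolar theorem. Let $U$ be a neighbourhood of zero in $A$; since $A$ is locally convex I may assume $U$ is absolutely convex and closed. Form the polar $U^\circ := \{f\in A' : |f(x)|\leq 1 \text{ for all } x\in U\}$. Then $U^\circ$ is equicontinuous, because the definition of $U^\circ$ directly produces the neighbourhood $U$ witnessing the equicontinuity condition in Definition \ref{equicontinuous} (take $V$ the closed unit disc in $\mathbb{C}$). The bipolar theorem, a standard consequence of the Hahn–Banach theorem, now gives $U^{\circ\circ}=U$. But $U^{\circ\circ}=\{x\in A : p_{U^\circ}(x)\leq 1\}$ is, by construction, a neighbourhood of zero in the equicontinuous topology. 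Thus every neighbourhood in the original topology contains one from the equicontinuous topology.

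The main obstacle is the second direction, which genuinely needs Hahn–Banach in the guise of the bipolar theorem; without local convexity there would be too few equicontinuous functionals to separate points and recover $U$. I would therefore cite Treves Chapter 32 for the bipolar theorem rather than reproduce its proof, keeping the argument self-contained up to that standard input.
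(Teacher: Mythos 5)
The paper states this proposition as a recalled fact and does not supply a proof, so there is nothing internal to compare against. Your argument is correct and is the standard one: the coarser-direction from the seminorm characterisation of equicontinuous sets of functionals, and the finer-direction from polars and the bipolar theorem. Both directions use local convexity, which is in force since the surrounding text fixes $A$, $B$, $C$ to be locally convex; it is worth flagging that the proposition as literally worded (``topological vector space'') would fail without that hypothesis, exactly for the reason you identify in your last paragraph. One small point of care in the second direction: after taking $U$ absolutely convex and closed, the bipolar theorem gives that $U^{\circ\circ}$ is the $\sigma(A,A')$-closed absolutely convex hull of $U$, and one then uses Hahn--Banach (again via local convexity) to identify the weak closure of the convex set $U$ with its original closure, yielding $U^{\circ\circ}=U$; you implicitly use this but it deserves a sentence.
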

This is often a useful viewpoint to take, as it allows one to move questions about convergence or continuity in $A$ to a similar question in $\mathbb{C}$. Unless explicitly mentioned otherwise, we take $A'$ to denote the \textit{strong} dual of $A$, i.e.\ the set of continuous linear maps $A \to \mathbb{C}$, equipped with the topology of uniform convergence on bounded sets of $A$.

We mention  some stability properties of equicontinuous sets. First off, if $H_1$ and $H_2$ are equicontinuous sets of linear maps $A\to B$, then their sum
\begin{equation*}
    H_1+H_2 = \{ f_1+f_2 \: |\: f_i\in H_i \} 
\end{equation*}
is again an equicontinuous set. Any subset of an equicontinuous set is also equicontinuous, as is any finite union of equicontinuous sets.\footnote{These facts can be summarised by saying that the equicontinuous subsets of $L(A,B)$ form a \textit{vector bornology}. This is in general a weaker bornology than the von Neumann bornology induced by the strong topology. It is the statement of the Banach-Steinhaus theorem that these bornologies match when $A$ is barreled and $B$ is locally convex. We refer the interested reader to \cite{HogbeNlend1977} for details on bornology.}
 
Furthermore, if $H$ is  equicontinuous from $A$ to $B$ and $L$ is equicontinuous from $B$ to $C$, then their composition is defined as
\begin{equation*}
    L \circ H = \{g\circ f \: | \: g \in L \: , \: f \in H \},
\end{equation*}
and this is equicontinuous from $A$ to $C$. In particular, by taking $L$ or $H$ to be a set containing a single continuous map, this implies that equicontinuous sets are stable under composition with continuous maps.
\subsection{Smooth functionals}
Another fundamental notion in the following is that of smooth functionals on locally convex topological vector spaces. As this notion is central to the theory we develop, we will gather some definitions and results in this section. Our main sources for these sections are \cite{Hamilton1982,Kriegl1997}, but we also recommend appendix A of \cite{Brunetti2019} for a summary that is tailored to our situation.
 
Let $A$ and $B$ denote locally convex topological vector spaces, and $F:A \to B$ a functional. We define the directional derivative of $F$ at $\varphi\in A$ in the direction $\psi \in A$ to be
\begin{equation*}
   \frac\delta{\delta\varphi}F(\varphi)\{\psi\} = 
    F^{(1)}(\varphi)\{\psi\} = \lim_{t\to 0}\frac{1}{t}\left(F(\varphi +t \psi)-F(\varphi)\right),
\end{equation*}
whenever this limit exists. A functional is continuously differentiable, or $C^1$, if its derivatives exist at all points and in all directions and define a \textbf{jointly} continuous map
\begin{align*}
    A \times A &\to B \\
    (\varphi,\psi) &\mapsto F^{(1)}(\varphi)\{\psi\}
\end{align*}
Equivalently, it can be shown that $F$ is $C^1$ if and only if there exists a continuous map $L: A\times A \times \mathbb{R} \to B$ such that
\begin{equation*}
    L(\varphi,\psi,t) = \begin{cases}
      \frac{1}{t}\left(F(\varphi +t \psi)-F(\varphi)\right) &t\neq 0, \\
      F^{(1)}(\varphi)\{\psi\} &t=0.
    \end{cases}
\end{equation*}
A similar statement holds for the partial derivative.

Similarly, the $n$-fold directional derivative is defined by 
\begin{equation*}
    F^{(n)}(\varphi)\{\psi_1,\ldots,\psi_n\} = \lim_{t_i\to 0}\frac{1}{t_1\ldots t_n}\left(F\left(\varphi +\sum_i t_i \psi_i\right)-F(\varphi)\right),
\end{equation*}
whenever the limit exists, and call a functional $C^n$ if all these quantities exist and form a continuous map $A\times A^n \to B$. A functional is \textbf{Bastiani} smooth if it is $C^n$ to all orders \cite{Bas64}. The following facts hold for such functionals:
\begin{itemize}
    \item $F$ is continuous.
    \item $F^{(n)}: A\times A^n \to B$ is multi-linear in its last $n$ arguments, which we typographically indicate by putting all linear arguments in braces rather than parentheses:
    \begin{equation*}
        F^{(n)}: (\varphi,\psi_1,\ldots,\psi_n) \mapsto F^{(n)}(\varphi)\{\psi_1,\ldots,\psi_n\}.
    \end{equation*}
    \item $F^{(n)}$ is the partial derivative of $F^{(n-1)}$ with respect to its nonlinear argument:
    \begin{equation*}
        F^{(n)}(\varphi)(\psi_1,\ldots,\psi_n) = \frac{\delta}{\delta\varphi}F^{(n-1)}(\varphi)\{\psi_1,\ldots,\psi_{n-1}\}\{\psi_n\}.
    \end{equation*}
    \item The fundamental theorem of calculus holds:
    \begin{equation*}\label{ftcalc}
        F(\varphi_0+\varphi_1)-F(\varphi_0) = \int_0^1  F^{(1)}(\varphi_0 +\lambda \varphi_1)\{\varphi_1\}d\lambda.
    \end{equation*}
    \item The chain-rule holds: If $F$ and $G$ are two composable smooth maps, then $G\circ F$ is also smooth and
    \begin{equation*}
        (G \circ F)^{(1)}(\varphi)\{\psi\} = G^{(1)}(F(\varphi))\left\{F^{(1)}(\varphi)\{\psi\}\right\}.
    \end{equation*}
    \item If $A=A_1 \times A_2$ is a product of vector spaces, then $F$ is smooth iff all of its partial derivatives exist and are continuous.
\end{itemize}

The notion of Bastiani smoothness is not the only one available in the literature. A functional $F:A\to B$ is \textbf{conveniently} smooth if, whenever $\gamma:\mathbb{R} \to A$ is a smooth curve, $F\circ \gamma: \mathbb{R} \to B$ is also smooth. This notion of smoothness is in general weaker than Bastiani smoothness, but is of course easier to check as curves are easier to handle than general nonlinear functionals. It was shown in Theorem 1 of \cite{Froelicher1982} that this notion of smoothness coincides with Bastiani smoothness if $A$ is a metric space and $B$ is complete. In this text, \textbf{smooth} will always mean Bastiani smooth.

Through the fundamental theorem of calculus, integration in topological vector spaces plays an important role in our constructions. Recall that, if $a:\mathbb{R} \to A$ is a continuous curve in a topological vector space, and $c\leq d$, then the integral 
\begin{equation*}
    \int_c^d a_t dt 
\end{equation*}
exists as an element of the completion $\hat{A}$. Furthermore, if $l:A \to B$ is a continuous linear map, then the equality
\begin{equation*}
   l \left(\int_c^d a_t \, dt\right) =  \int_c^d l (a_t) \, dt
\end{equation*}
holds in $\hat{B}$, where $l$ also denotes the map on the completions $\hat{A} \to \hat{B}$. We refer the reader to the literature for details, see e.g.\ Chapter 2 in \cite{Kriegl1997}. Throughout this text, we will need to integrate functionals depending on a real parameter. To show that this yields another smooth functional, we will use the following version of the Leibniz integral rule.
\begin{restatable}[Leibniz integral rule]{prop}{LeibnizIntegralRule}
\label{Leibnizintegralrule}
    Let $A,B$ be locally convex topological vector spaces, of which $B$ is complete. If $F:\mathbb{R}\times A\to B$ is a smooth functional, then the functional $G$ defined by
    \begin{equation*}
        G(\varphi) = \int_0^1 F(s,\varphi) ds
    \end{equation*}
    is smooth. Furthermore, its derivatives are given by
    \begin{equation*}
        G^{(n)}(\varphi)\{h_1,\ldots, h_n\} = \int_0^1 \frac{\delta^n}{\delta \varphi^n}F(s,\varphi) \{h_1,\ldots,h_n\} ds
    \end{equation*}
\end{restatable}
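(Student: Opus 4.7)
The plan is to prove $G$ is smooth by induction on the order of differentiability, reducing everything at each step to a continuity-of-parameters lemma for vector-valued integrals.

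First I would establish the following auxiliary fact: if $H:[0,1]\times X\to B$ is jointly continuous with $X$ locally convex and $B$ complete, then $x\mapsto \int_0^1 H(s,x)\,ds$ is well-defined and continuous on $X$. Well-definedness is immediate since $s\mapsto H(s,x)$ is continuous on a compact set with values in a complete space. For continuity, fix $x_0\in X$ and a continuous seminorm $p$ on $B$; by the standard estimate $p\!\left(\int_0^1 a_s\,ds\right)\le \sup_{s\in[0,1]} p(a_s)$, it suffices to make $p(H(s,x)-H(s,x_0))$ uniformly small in $s$ for $x$ in some neighbourhood of $x_0$. This is a tube-lemma argument: joint continuity of $H$ gives, for each $s$, an open set $V_s\times U_s$ around $(s,x_0)$ on which $p(H(s,x)-H(s,x_0))<\varepsilon$; a finite subcover $\{V_{s_i}\}$ of $[0,1]$ together with $U:=\bigcap_i U_{s_i}$ yields the uniform bound.

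Next I would deduce that $G$ is $C^1$ with the claimed derivative. Invoking the equivalent characterisation of $C^1$ recalled in the excerpt, applied in its partial-derivative form to the $A$-variable of $F$, the smoothness of $F$ produces a jointly continuous map $L:\mathbb{R}\times A\times A\times \mathbb{R}\to B$ that equals $t^{-1}\bigl(F(s,\varphi+t\psi)-F(s,\varphi)\bigr)$ for $t\ne 0$ and equals $\tfrac{\delta}{\delta\varphi}F(s,\varphi)\{\psi\}$ at $t=0$. Applying the auxiliary lemma to $L$ with parameter space $A\times A\times \mathbb{R}$ shows that $\tilde L(\varphi,\psi,t):=\int_0^1 L(s,\varphi,\psi,t)\,ds$ is jointly continuous. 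Linearity of the integral identifies $\tilde L(\varphi,\psi,t)$ with $t^{-1}\bigl(G(\varphi+t\psi)-G(\varphi)\bigr)$ for $t\ne 0$, and with $\int_0^1 \tfrac{\delta}{\delta\varphi}F(s,\varphi)\{\psi\}\,ds$ at $t=0$. This is precisely the continuous extension of the difference quotient that characterises $G\in C^1$, along with the stated derivative formula.

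For the inductive step I would observe that $(s,\varphi,\psi)\mapsto \tfrac{\delta}{\delta\varphi}F(s,\varphi)\{\psi\}$ is itself a smooth $B$-valued functional on $\mathbb{R}\times(A\times A)$, of exactly the same form as the original $F$ but on a slightly enlarged configuration space. Assuming the proposition inductively up to order $n-1$ and applying it to this new integrand produces the $n$-th derivative of $G$ as $\int_0^1 \tfrac{\delta^n}{\delta\varphi^n}F(s,\varphi)\{\psi_1,\dots,\psi_n\}\,ds$, with multilinearity in the $\psi_i$ inherited from the corresponding multilinearity of $\tfrac{\delta^n}{\delta\varphi^n}F$ under the linear integral. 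The main obstacle is the auxiliary parameter-dependence lemma in the first step: because $B$ is only assumed locally convex and not metrisable, sequential arguments do not suffice, and one must work with the defining family of continuous seminorms and exploit compactness of $[0,1]$ through the tube-lemma argument above. Once that lemma is in place, everything else is routine.
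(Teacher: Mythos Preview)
Your proposal is correct and follows essentially the same route as the paper: an auxiliary continuity-of-parameters lemma proved via compactness of $[0,1]$ (a tube-lemma argument), followed by the $L$-map characterisation of $C^1$ applied to the partial derivative in the $A$-variable, then iteration. One small point: in your tube-lemma sketch, joint continuity at $(s,x_0)$ only gives $p(H(s',x)-H(s,x_0))<\varepsilon$ on $V_s\times U_s$, not $p(H(s',x)-H(s',x_0))<\varepsilon$; the paper handles this with a $W+W\subset V$ step (equivalently an $\varepsilon/2$ triangle inequality), which you should make explicit.
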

As we are not aware of a proof in the literature in the generality that we require, we provide one in Appendix \ref{integralruleappendix}.

Finally, we discuss an alternative way of viewing the derivatives of a functional $F:A\to B$. \textit{A priori} $F^{(n)}$ is a jointly continuous map $A\times A^{\times n} \to B$, linear in the last $n$ arguments. Fixing $\varphi\in A$, it defines a map
\begin{equation}\label{partialevaluation}
    F^{(n)}(\varphi): (a_1,\ldots,a_n) \mapsto F^{(n)}(\varphi)\{a_1,\ldots,a_n\}
\end{equation}
that is $n$-linear and continuous. By definition, this is the same thing as a continuous linear map from the completed projective tensor product $A^{\hatotimes_\pi n}$ to $B$. 

Let $L_c(A^{\hatotimes_\pi n},B)$ denote the space of all such maps, equipped with the topology of uniform convergence on compact sets. We recall a general fact: Let $C$ and $D$ be topological vector spaces, and $X$ a topological space. If $f:X\times C \to D$ is a jointly continuous map that is linear in $C$, then $f$ defines a continuous map
    \begin{equation*}
        \bar{f}: X \to L_c(C,D),
    \end{equation*}
by
\begin{equation*}
    x\mapsto f(x,\_)=\left(c\mapsto f(x,c)\right).
\end{equation*}
This is nothing more than the observation that the topology on $L_c(C,D)$ is the restriction of the compact-open topology on $C^0(C,D)$, which has this property.
\begin{prop}\label{derissmoothfunctional}
    If $F:A\to B$ is a smooth functional and $n\in\mathbb{N}$, then the map
\begin{align*}
    \bar{F}^{(n)}:A &\to L_c(A^{\hatotimes_\pi n},B) \\
    \varphi &\mapsto F^{(n)}(\varphi)
\end{align*}
defined by equation \eqref{partialevaluation} is smooth. 
\end{prop}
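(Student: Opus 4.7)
The plan is to prove by induction on $k \geq 0$ that, for every $n \in \mathbb{N}$, the $k$-fold directional derivative of $\bar{F}^{(n)}$ exists, forms a jointly continuous map $A \times A^k \to L_c(A^{\hat\otimes_\pi n}, B)$, and is given by
\begin{equation*}
    (\bar{F}^{(n)})^{(k)}(\varphi)\{\psi_1,\ldots,\psi_k\} = F^{(n+k)}(\varphi)\{\psi_1,\ldots,\psi_k,\cdot\},
\end{equation*}
where the right-hand side denotes the continuous linear map $A^{\hat\otimes_\pi n} \to B$ associated by the universal property of the projective tensor product to the continuous $n$-linear map $(a_1,\ldots,a_n)\mapsto F^{(n+k)}(\varphi)\{\psi_1,\ldots,\psi_k,a_1,\ldots,a_n\}$. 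Once established, this entails Bastiani smoothness of $\bar{F}^{(n)}$.

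The base case $k=0$ asserts continuity of $\bar{F}^{(n)}$ and follows from the general currying principle recalled just before the statement, applied to the jointly continuous linear extension of $F^{(n)}$ to $A \times A^{\hat\otimes_\pi n}$. For the inductive step, fix $(\psi_1,\ldots,\psi_k) \in A^k$ and compute the directional derivative at $\varphi$ in direction $\psi_{k+1}$. The fundamental theorem of calculus applied to the smooth $B$-valued functional $\varphi \mapsto F^{(n+k)}(\varphi)\{\psi_1,\ldots,\psi_k,a_1,\ldots,a_n\}$ yields
\begin{multline*}
    \tfrac{1}{t}\bigl(F^{(n+k)}(\varphi+t\psi_{k+1}) - F^{(n+k)}(\varphi)\bigr)\{\psi_1,\ldots,\psi_k, a_1,\ldots,a_n\} \\
    = \int_0^1 F^{(n+k+1)}(\varphi+\lambda t \psi_{k+1})\{\psi_1,\ldots,\psi_k, a_1,\ldots,a_n, \psi_{k+1}\}\, d\lambda.
\end{multline*}
Viewed in $L_c(A^{\hat\otimes_\pi n}, B)$ via the base case applied to $F^{(n+k+1)}$ with its first $k+1$ linear slots held fixed, the integrand is jointly continuous in $(t, \lambda)$, so the integral exists in the completion of $L_c(A^{\hat\otimes_\pi n}, B)$. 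Uniform convergence on the compact interval $[0,1]$ as $t \to 0$ then gives a limit equal to $F^{(n+k+1)}(\varphi)\{\psi_1,\ldots,\psi_{k+1},\cdot\}$, with the trailing $\psi_{k+1}$ moved into its canonical position by symmetry of $F^{(n+k+1)}$ in its linear slots. Joint continuity of this $(k+1)$-fold derivative as a map $A \times A^{k+1} \to L_c(A^{\hat\otimes_\pi n}, B)$ follows from a further application of the currying principle, now to $F^{(n+k+1)}: A \times A^{k+1} \times A^n \to B$ regarded as jointly continuous and $n$-linear in the final $n$ arguments.

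The principal technical obstacle is the currying step itself. The cited principle takes as input a jointly continuous map $X \times C \to D$ linear in a topological vector space $C$, whereas $F^{(n)}: A \times A^n \to B$ is merely $n$-linear in its last $n$ arguments. Bridging this gap requires factoring $F^{(n)}$ through a jointly continuous linear extension $A \times A^{\hat\otimes_\pi n} \to B$, which rests on the standard (but nontrivial) fact that compact subsets of $A^{\hat\otimes_\pi n}$ are contained in the closed absolutely convex hull of elementary tensor products of compact subsets of $A$. This is what allows uniform convergence on compact subsets of the tensor product to be controlled by continuity data already present on $A^n$.
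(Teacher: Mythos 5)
Your proof is essentially correct, but it takes a different route than the paper's. The paper proceeds directly from the equivalent characterization of $C^1$ given earlier: since $F$ is smooth, $F^{(n)}$ is $C^1$ in $\varphi$, so there is a jointly continuous map $L:A\times A\times\mathbb{R}\times A^{\hat\otimes_\pi n}\to B$, linear in the last argument, that extends the difference quotient and agrees with $F^{(n+1)}$ at $t=0$. Applying the currying observation to $L$ yields a continuous $\overline{L}:A\times A\times\mathbb{R}\to L_c(A^{\hat\otimes_\pi n},B)$, and evaluating at $t=0$ gives both the value and the joint continuity of $\delta\bar F^{(n)}/\delta\varphi$ in one stroke; iterating finishes the proof. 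You instead use the fundamental theorem of calculus to write the difference quotient as a $\lambda$-integral and argue by uniform convergence on $[0,1]$ as $t\to 0$. This is longer and forces you to temporarily pass to the completion of $L_c(A^{\hat\otimes_\pi n},B)$; the FTC identity rescues you (the value is the difference quotient, which lives in the space), but the paper's choice of $L$ avoids that detour entirely and is the cleaner argument. Your inductive step also silently uses symmetry of $F^{(n+k+1)}$ in its linear slots to move $\psi_{k+1}$ into position; this is a standard consequence of Bastiani smoothness and is fine, but worth citing.

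The technical obstacle you flag at the end is a genuine one, and you are right that it is not trivial: to invoke the currying lemma with $C=A^{\hat\otimes_\pi n}$, you must know that $F^{(n)}$ (and, in the inductive step, $F^{(n+k+1)}$) factors through a \emph{jointly} continuous map $A\times A^{\hat\otimes_\pi n}\to B$, not merely a separately continuous one. Note, though, that the paper's own proof relies on exactly the same unstated fact: the map $L$ is presented as defined on $A\times A\times\mathbb{R}\times A^{\hat\otimes_\pi n}$ and jointly continuous, which again presupposes the joint continuity of the extension to the completed projective tensor product. So this is not a defect of your argument relative to the paper's; it is an implicit hypothesis in both. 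It is satisfied in the cases of interest, where $A$ is a nuclear Fr\'echet (hence metrizable) space, so continuity can be tested on sequences and one can appeal to Grothendieck's description of null sequences in $A^{\hat\otimes_\pi n}$. Your proposed remedy via compact sets being contained in closed absolutely convex hulls of elementary tensor products of compacts is one way to phrase a sufficient condition, but it is not clear that this holds at the stated level of generality; it would be safer to restrict to the metrizable case, which is what the paper actually uses.
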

\begin{proof}
The previous lemma implies that $\bar{F}^{(n)}$ is continuous for all $n$. The partial derivative of $F^{(n)}$ with respect to $\varphi$ equals $F^{(n+1)}$. Hence there exists a map $L$, linear in its last argument
    \begin{equation*}
        L:A\times A \times \mathbb{R} \times A^{\hatotimes_\pi{n}} \to B
    \end{equation*}
    such that for $t \neq 0$ 
    \begin{equation*}
       L(\varphi,\psi,t,g) = \begin{cases}
           \frac{1}{t}\left(F^{(n)}(\varphi + t \psi)\{g\} - F^{(n)}(\varphi)\{g\}\right) &t\neq0, \\
           F^{(n+1)}(\varphi)\{\psi;g\} &t=0.
        \end{cases}
    \end{equation*}  
This means that the map
\begin{align*}
    \overline{L}:\:&
        A\times A \times \mathbb{R} \to L_c(A^{\hatotimes_\pi n},B), \\
    &(\varphi,\psi,t) \mapsto L(\varphi,\psi,t,\_)
\end{align*}
is continuous. Hence it follows that
\begin{dmath*}
    \frac{\delta}{\delta \varphi} \bar{F}^{(n)}(\varphi)\{\psi\} =  \lim_{t\to 0} \frac{1}{t}\left(F^{(n)}(\varphi+t\psi)\{\_\}-F^{(n)}(\varphi)\{\_\}\right) = \lim_{t\to 0} \overline{L}(\varphi,\psi,t) = \overline{L}(\varphi,\psi,0) = F^{(n+1)}(\varphi)\{\psi,\_\},
\end{dmath*}
which is jointly continuous in $\varphi$ and $\psi$, again using the previous lemma. Similarly, we obtain
\begin{equation*}
    \frac{\delta^m}{\delta \varphi^m} \bar{F}^{(n)}(\varphi)\{\psi_1,\ldots, \psi_m\} = {F}^{(n+m)}(\varphi)\{\psi_1,\ldots,\psi_n,\_\}
\end{equation*}
so that $\bar{F}^{(n)}$ is continuously differentiable to all orders, again using the observation preceding this proposition.
\end{proof}
In what follows, we will drop the bar from the notation, and use $F^{(n)}$ for both maps interchangeably.

\subsection{Some Microlocal Analysis}
An important tool for the study of perturbative AQFT is microlocal analysis. This studies the singular structure of distributions, and describes pairings of distributions that would naively be ill-defined. An extensive discussion of this subject would take us too far afield, so we recall some facts that might not be common knowledge, and refer the reader to the literature for more details. For introductory treatments, we recommend \cite{Brouder2014}, as well as chapter 4 in \cite{Baer2009}. The standard reference for this subject is \cite{hormander2015analysis}.

Let $M$ be a manifold, we use the notation $\sD(M)$ for the smooth compactly supported functions on $M$,  $\sE(M)$ for the smooth functions on $M$, $\sD'(M)$ for the distributions on $M$, and $\sE'(M)$ for the compactly supported distributions on $M$. The space $\sE(M)$ has a natural topology, which can be described as the topology of uniform convergence of all derivatives on compact subsets of $M$. This turns it into a Fréchet space. The space $\sD(M)$ can then be exhibited as the inductive limit of $\sE(K)=\{\varphi \in \sE(M) \, | \, \supp(\varphi) \subset K\}$ for $K\subset M$ compact, and is given the inductive limit topology with respect to this diagram, turning it into an LF-space. Finally, the spaces $\sD'(M)$ and $\sE'(M)$ are endowed with their strong dual topologies. We refer the reader to \cite{Treves1967} for details on these topologies.

Recall the definition of distributional sections:
\begin{definition}
 Let $E\to M$ be a vector bundle. We denote by $\sE(M;E)$ the space of smooth sections of $E$, and by $\sD(M;E)$ the space of compactly supported sections. Similar to the scalar case, $\sE(M;E)$ has a natural Fréchet topology, and $\sD(M;E)$ is endowed with the inductive topology stemming from the inclusions of $\sE(E|_K;K)$ for compact $K\subset M$, turning it into an LF-space.
The \textbf{functional dual} of $E$ is the bundle
\begin{equation*}
    E^! = E^* \otimes D_M
\end{equation*}
where $D_M$ is the density bundle of $M$.

The distributional sections of $E$ are defined by
    \begin{equation*}
     \sD'(M;E)  \equiv \sD(M;E^!)'.
    \end{equation*}
Similarly, the compactly supported distributional sections are defined by
    \begin{equation*}
   \sE'(M;E)  \equiv \sE(M;E^!)'.
    \end{equation*}
We will always equip these spaces with their strong dual topology. 
\end{definition}

The central notion in microlocal analysis is that of the wavefront set of a distribution. This is a sharpening of the notion of singular support, in that it not only describes \textit{where} a distribution is singular, but also in what direction in Fourier space. Careful treatment of wavefront sets of distributions allows us to multiply distributions, pull them back, push them forward and evaluate them against each other, all as extensions of those concepts on smooth functions. In what follows, a \textbf{cone} in $\mathbb{R}^n\setminus\{0\}$ is a subset that is stable under multiplication by positive constants.
\begin{definition}
     Let $\Omega \subset \mathbb{R}^n$ be non-empty and open, and let $u\in \sD'(\Omega )$. A pair $(x,\xi) \in \Omega \times \mathbb{R}^n\setminus \{0\}$ is a \textbf{regular directed point} of $u$, if there exist 
     \begin{itemize}
         \item A conic neighbourhood $V \subset \mathbb{R}^n\setminus \{0\}$ of $\xi$,
         \item A test function $f\in \sD(\mathbb{R}^n)$ such that $f(x)\neq 0$,
     \end{itemize}
such that, for any $N\in \mathbb{N}$,
\begin{equation*}
    \sup_{\xi\in V}(1+|\xi|)^N |\widehat{fu}(\xi)| < \infty,
\end{equation*}
where a hat denotes the Fourier transform. The \textbf{wavefront set} $\WF(u)$ is the complement in $\mathbb{R}^n \times \mathbb{R}^n\setminus \{0\}$  of the set of regular directed points of $u$.
\end{definition}

We note immediately that, because the condition to be in the complement of the wavefront set is phrased in terms of conic \textit{neighbourhoods}, the wavefront set of a distribution is a closed set. Furthermore, it can be shown that it behaves like a subset of the cotangent bundle under diffeomorphisms. If $\psi: \Omega_1 \to \Omega_2$ is a diffeomorphism and $u\in\sD'(\Omega_2)$, then the pullback $\psi^* u \in \sD'(\Omega_1)$ is well-defined, as $\psi$ is in particular a submersion. If $d\psi$ denotes the differential of $\psi$, then
\begin{equation*}
    \WF(\psi^*u) = d\psi^* \WF(u)
\end{equation*}
see e.g.\ Theorem 8.2.4 and following discussion in \cite{hormander2015analysis}. This means that  the definition of wavefront set naturally extends to distributions on manifolds: If $v\in \sD'(M)$, then 
\begin{equation*}
  (x,\xi)\in \dot{T}^*M = \{ (y,\eta) \in T^*M \: | \: \eta\neq 0 \}  
\end{equation*}
is an element of $\WF(v)$ if and only if $(x,\xi) \in WF(v|_U)$ for some (and hence any) coordinate chart $U$ around $x$, where we have suppressed the coordinate function in the notation.

One can push the definition further still to include distributional sections of vector bundles. If $U\subset \mathbb{R}^n$ is an open region, and $\Vec{u}$ a distributional section of $\mathbb{R}^k \times U$, i.e.\ a $k$ component vector $(u_1, \ldots, u_k)$ of distributions on $U$, then we define
\begin{equation*}
    \WF(\Vec{u}) \equiv \bigcup_i \WF(u_i).
\end{equation*}
The transition functions of a vector bundle locally look like matrices of smooth functions. As multiplication by a smooth function leaves the wavefront set of a distribution invariant, we see that the definition lifts from this local model to arbitrary vector bundles.

We will be interested in spaces of distributions with prescribed wavefront sets. A \textbf{cone} $\Gamma\subset\dot{T}^*M$ is a set such that $(x,\xi)\in \Gamma \implies (x,\lambda\xi) \: \forall \: \lambda>0$. In particular, wavefront sets are cones. If $\Gamma \subset \dot{T}^*M$ is a closed cone and $\Lambda \subset \dot{T}^*M$  is an open cone, we introduce the following two classes of distributional sections:
    \begin{align*}
        \dg{}(M;E) &= \{u\in\sD'(M;E) \: | \: \WF(u) \subset \Gamma \}, \\
        \el{}(M;E) &= \{v\in\sE'(M;E) \: | \: \WF(v) \subset \Lambda \}.
    \end{align*}
When considering the trivial bundle $M\times \mathbb{R}\to M$, we will instead denote them by $\dg{}(M)$ and $\el{}(M)$. 

These spaces are topologised as follows, where we treat the scalar case first. The topology on $\sD'_\Gamma(M)$ is called the \textit{normal} topology, due to the fact that it turns $\dg{}(M)$ into a normal space of distributions. It is defined as follows:
\begin{definition}
    Let $M$ be a manifold and $\Gamma \subset \dot{T}^*M$ a closed cone. Let $\chi$ be a test function in $\sD(M)$, supported inside a coordinate patch so that Fourier transforms are well-defined and the cotangent bundle is trivial,\footnote{Again, we suppress the implicit choice of coordinate functions in the notation.} $N\in\mathbb{N}$ and $V$ a closed cone with $\textup{supp}(\chi)\times V\cap\Gamma = \emptyset$ and $B\subset \sD(M)$ is a bounded set. Then we define the following seminorms on $\dg{}(M)$:
    \begin{align*}
    P_{\chi,N,V}(u) &= \sup_{\xi \in V} (1+|\xi|)^N |\widehat{\chi u}(\xi)|,\\
    P_B(u) &= \sup_{f \in B} |u(f)|,
\end{align*}
 The \textbf{normal topology} on $\dg{}(M)$ is defined by these seminorms for all admissible choices of $\chi,N,V$ and $B$.
\end{definition}
For those familiar with it, this is nothing but Hörmander's topology, but using the seminorms of the strong topology on $\sD'(M)$ rather than those of the weak topology. We will \textbf{always} equip spaces of the form $\dg{}(M)$ with this topology. 

The topology of $\el{}(M)$ is a bit more involved, and rests on the following observation.  Due to Hörmander's criterion for multiplication of distributions, we can multiply $u\in \sD'_{-\Lambda^c}(M)$ and $v\in \el{}(M)$ to obtain another distribution $u v$. As this is a compactly supported distribution, we can evaluate it on the function that is $1$ everywhere:
\begin{equation*}
    \langle u, v\rangle = u v(1).
\end{equation*}
In \cite{Dabrowski2014}, Brouder and Dabrowski showed that, when $\sD'_{-\Lambda^c}(M)$ is endowed with the normal topology described above (note that $-\Lambda^c$ is closed as $\Lambda$ is open), this pairing turns $\el{}(M)$ into its dual. As such, we endow $\el{}(M)$ with the strong dual topology stemming from this duality. We mention here that is somewhat poorly behaved from the viewpoint of functional analysis. For example, as $\sD_{-\Lambda^c}$ is not a barrelled space, the Banach-Steinhaus theorem does \textbf{not} apply, meaning the equicontinuous subsets of $\el{}$ are a strict subset of the bounded sets. Furthermore, $\el{}$ is \textbf{not} complete.

As $\el{}(M)$ is the dual of $\sD'_{-\Lambda^c}(M)$, we can consider equicontinuous subsets $H\subset\el{}(M)$. The following characterisation of such sets, which is Lemma 5.3 of \cite{Brouder2016}, is of a fundamental importance to our investigation:
\begin{prop}\label{characterizationequicontinuous}
    Let $\Lambda$ be an open cone inside $\dot{T}^*M$. If $H\subset\sE'_\Lambda(M)$ is equicontinuous, then there exist a compact set $K \subset M$ and a \textbf{closed} cone $\Xi \subset \Lambda$ such that
\begin{itemize}
    \item $H$ is a subset of $\mathcal{D}'_\Xi(K)=\{u\in \mathcal{D}'_\Xi(M)\: | \: \supp(u)\in K\} $,
    \item $H$ is bounded in the topology that $\mathcal{D}'_\Xi(K)$ inherits as a subset of $\mathcal{D}'_\Xi(M)$.
\end{itemize}
\end{prop}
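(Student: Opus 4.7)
The plan is to convert the equicontinuity of $H$ into a single continuous seminorm on the predual $\mathcal{D}'_{-\Lambda^c}(M)$ that dominates the action of every $v \in H$, and then read off the support and wavefront constraints by evaluating this seminorm on carefully chosen smooth test functions. By the duality $\mathcal{E}'_\Lambda(M) = \mathcal{D}'_{-\Lambda^c}(M)'$ of Brouder--Dabrowski and the seminorm characterization of equicontinuity recalled after Definition \ref{equicontinuous}, equicontinuity of $H$ yields a continuous seminorm $p$ on $\mathcal{D}'_{-\Lambda^c}(M)$ with
\begin{equation*}
|\langle u,v\rangle| \le p(u) \qquad \forall\, u \in \mathcal{D}'_{-\Lambda^c}(M),\; v \in H.
\end{equation*}
By definition of the normal topology, $p$ is dominated by a finite combination of basic seminorms $P_{\chi_i, N_i, V_i}$ (with $\supp(\chi_i) \times (-V_i) \subset \Lambda$, equivalently $\supp(\chi_i) \times V_i$ disjoint from $-\Lambda^c$) and $P_{B_j}$ (with $B_j$ bounded in $\mathcal{D}(M)$, supported in a common compact $K_j$).

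For the support, set $K = \bigcup_i \supp(\chi_i) \cup \bigcup_j K_j$. For any $\phi \in \mathcal{D}(M)$ with support disjoint from $K$, each $\chi_i \phi$ and each integral $\int \phi f$ for $f \in B_j$ vanishes, hence $p(\phi)=0$. Since $\phi$ is smooth it has empty wavefront set, so $\phi \in \mathcal{D}'_{-\Lambda^c}(M)$ and $v(\phi) = \langle \phi, v\rangle = 0$ for all $v \in H$, giving $\supp(v) \subset K$.

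For the wavefront, define $\Xi = \bigcup_i \bigl(\supp(\chi_i) \times (-V_i)\bigr)$, a finite union of closed cones contained in $\Lambda$, hence itself a closed cone inside $\Lambda$. Given $(x_0, \xi_0) \notin \Xi$, pick $\chi \in \mathcal{D}(M)$ with $\chi(x_0) \neq 0$ and a closed conic neighbourhood $W_0$ of $\xi_0$ such that $\supp(\chi) \times W_0$ is disjoint from $\Xi$. Since $\chi e^{-i\eta\cdot}$ is smooth and compactly supported, it lies in $\mathcal{D}'_{-\Lambda^c}(M)$, so
\begin{equation*}
|\widehat{\chi v}(\eta)| = |v(\chi e^{-i\eta\cdot})| = |\langle \chi e^{-i\eta\cdot}, v\rangle| \le p(\chi e^{-i\eta\cdot}).
\end{equation*}
Each $P_{\chi_i, N_i, V_i}(\chi e^{-i\eta\cdot})$ rewrites via $\widehat{\chi_i \chi e^{-i\eta\cdot}}(\zeta) = \widehat{\chi_i \chi}(\zeta + \eta)$ as a supremum over $\zeta \in V_i$; for the indices with $\chi\chi_i \neq 0$, disjointness of the closed cones $-W_0$ and $V_i$ produces a constant $c>0$ with $|\zeta + \eta| \ge c(|\zeta|+|\eta|)$ on $V_i \times W_0$, and together with the Schwartz decay of $\widehat{\chi_i \chi}$ this yields decay in $|\eta|$ faster than any polynomial. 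The seminorms $P_{B_j}(\chi e^{-i\eta\cdot}) = \sup_{f \in B_j} |\widehat{\chi f}(\eta)|$ also decay rapidly, since $\{\chi f : f \in B_j\}$ is bounded in some $\mathcal{D}(K')$ and so has uniformly Schwartz Fourier transforms by Paley--Wiener. Thus $\widehat{\chi v}(\eta)$ decays rapidly on $W_0$ uniformly in $v \in H$, giving $(x_0, \xi_0) \notin \WF(v)$.

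Finally, boundedness of $H$ in the normal topology inherited by $\mathcal{D}'_\Xi(K)$ follows from the same estimates: any basic seminorm $P_{\chi, N, V}(v)$ with $\supp(\chi) \times V$ disjoint from $\Xi$ is controlled by $\sup_{\eta \in V}(1+|\eta|)^N p(\chi e^{-i\eta\cdot}) < \infty$ from the rapid decay established above, while $P_B(v) \le \sup_{f \in B} p(f) < \infty$ directly. The principal obstacle is the wavefront step: one must verify carefully that the cone-disjointness geometry converts the Schwartz decay of the finite list $\widehat{\chi_i \chi}$ into decay in $\eta$ that is both uniform in $v$ and strong enough to beat every polynomial weight $(1+|\eta|)^N$ appearing in the seminorms on $\mathcal{D}'_\Xi(M)$.
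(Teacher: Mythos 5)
The paper does not prove this proposition; it is quoted from Lemma~5.3 of the cited Brouder--Dang--H\'elein reference, so there is no in-text argument against which to compare your proposal. Judged on its own, your proof is correct and is a nice self-contained alternative to the abstract argument in that reference, which goes via the inductive-limit description $\sE'_\Lambda = \varinjlim \sD'_{\Xi_n}(K_n)$ and a regularity property of that limit. You instead work directly with the normal-topology seminorms: equicontinuity gives a single continuous seminorm $p$ on $\sD'_{-\Lambda^c}(M)$ dominating every $v\in H$; $p$ is in turn dominated by a constant times a maximum of finitely many generators $P_{\chi_i,N_i,V_i}$ and $P_{B_j}$; you then read $K$ off the supports, build $\Xi = \bigcup_i \supp(\chi_i)\times(-V_i)$ (a finite union of closed cones inside $\Lambda$, hence closed and inside $\Lambda$), and verify the support, wavefront, and boundedness claims by testing $p$ on $\chi e^{-i\eta\cdot}$ and using the standard cone-separation estimate $|\zeta+\eta|\ge c(|\zeta|+|\eta|)$ together with Paley--Wiener decay for the $P_{B_j}$ terms. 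This buys a very concrete bound that also makes the third bullet (boundedness in $\sD'_\Xi(K)$) entirely transparent, uniformly in $v\in H$ because the estimates depend only on $p$, not on $v$.

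Two small points worth tightening if you write this out in full. First, the seminorms $P_{\chi_i,N_i,V_i}$ are defined with $\chi_i$ supported in a coordinate patch, and your test cutoff $\chi$ near $x_0$ must be compared with each $\chi_i$ whose support meets $\supp(\chi)$; when these live in different charts you should shrink $\supp(\chi)$ into the common intersection and transport the cones by the chart transitions before invoking the cone-disjointness lemma. This is routine (the wavefront set is a coordinate-invariant notion) but should be said. Second, in the degenerate case where $p$ is dominated only by $P_{B_j}$ seminorms, your $\Xi$ is empty and the claim becomes that every $v\in H$ is smooth; your own argument actually already gives this (take $W_0=\mathbb{R}^n\setminus\{0\}$ and use the Paley--Wiener estimate for $\sup_{f\in B_j}|\widehat{\chi f}(\eta)|$), but it is worth noting explicitly that the construction handles this case rather than silently assuming at least one $P_{\chi_i,N_i,V_i}$ appears.
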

The tensor product of distributions will become important once we consider algebraic operations using distributions. Suppose that $N$ and $P$ are manifolds. Throughout this text, we (abusively) denote by $\underline{0}$ the zero section of different cotangent bundles, trusting that it is clear which one we mean from the context. If $u\in \sD'(N)$ and $v\in \sD'(P)$ then
\begin{equation*}
    \WF(u\otimes v) \subset \left[\WF(u)  \times \WF(v)\right] \cup \left[ \underline{0}  \times \WF(v)\right] \cup \left[\WF(u) \times \underline{0}\right].
\end{equation*}
Hence we introduce the following product of cones: 
\begin{definition}
    Let $\Gamma_a \subset \dot{T}^*N $ and $\Gamma_b\subset \dot{T}^*P$ be  cones. Their \textbf{dotted product} is defined by
\begin{equation}\label{dotproduct}
    \Gamma_a \dot{\times} \Gamma_b \equiv\left[\Gamma_a \times \Gamma_b\right] \cup \left[\underline{0} \times \Gamma_b\right] \cup \left[\Gamma_a \times \underline{0}\right] \subset \dot T^*(N\times P). 
\end{equation}
\end{definition}
We note that alternatively, we could have written
\begin{equation*}
    \Gamma_a \dot{\times} \Gamma_b = (\Gamma_a\times \underline{0}) \times (\underline{0} \times \Gamma_b) \setminus \underline{0}\times \underline{0}
\end{equation*}
Furthermore, as the zero section of a vector bundle is closed, the product of two closed cones is again closed. The product of open cones is in general not open. Recall the following general definition:
\begin{definition}
   Let $A,B$ and $C$ be topological vector spaces, a bilinear map $\Phi:A\times B \to C$ is \textbf{hypocontinuous} if, whenever $X\subset A$ and $Y\subset B$ are bounded sets, the sets 
    \begin{align*}
        \{\Phi(x,\_) \mid x \in X \} \subset L(B,C), \\
        \{\Phi(\_,y) \mid y \in Y\} \subset L(A,C),
    \end{align*}
are equicontinuous.
\end{definition}
Hypocontinuity is a weaker condition than joint continuity, but stronger than joint sequential continuity. Many bilinear operations on topological vector spaces fail to be continuous, but are hypocontinuous. The following fact is Theorem 4.6 in \cite{Brouder2016}:
\begin{prop}\label{hypocontinuous}
    Let $N$ and $P$ be manifolds and let $\Gamma_a \subset \dot{T}^*N $ and $\Gamma_b\subset \dot{T}^*P$ be closed cones. The tensor product of distributions restricts to a hypocontinuous map
\begin{align}\label{tensorprodhypo}
    \dg{a}(N) \times \dg{b}(P) \to \sD_{\Gamma_a \dot\times \Gamma_b}(N \times P ).
\end{align}
\end{prop}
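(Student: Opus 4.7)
The plan is to combine the standard wavefront-set inclusion $\WF(u\otimes v)\subset \Gamma_a\dot\times \Gamma_b$ (noted in the excerpt just above the definition of the dotted product) with explicit estimates on the two families of seminorms defining the normal topology on the target. The inclusion shows that $u \otimes v$ lands in $\sD'_{\Gamma_a \dot\times \Gamma_b}(N\times P)$, so the content of the proposition is the continuity estimates. By symmetry I fix a bounded set $Y\subset \dg{b}(P)$ and show that the family $H_Y = \{u\mapsto u\otimes v \mid v\in Y\}$ is equicontinuous from $\dg{a}(N)$ to $\sD'_{\Gamma_a\dot\times \Gamma_b}(N\times P)$; for each defining seminorm $q$ of the target topology it suffices to produce a continuous seminorm $p$ on $\dg{a}(N)$ and a constant $C_Y$ with $q(u\otimes v)\leq C_Y\, p(u)$ for all $u$ and all $v\in Y$.

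For the test-function seminorm $P_B$ with $B\subset \sD(N\times P)$ bounded, I use the identity $(u\otimes v)(f) = u(v_\ast f)$ where $(v_\ast f)(x) = v(f(x,\cdot)) \in \sD(N)$. The intermediate claim is that $\tilde B := \{v_\ast f \mid v\in Y,\, f\in B\}$ is bounded in $\sD(N)$: for any multi-index $\alpha$ the family $\{\partial^\alpha_x f(x,\cdot) \mid x\in N,\,f\in B\}$ is bounded in $\sD(P)$, and $Y$ is bounded in $\sD'(P)$ via the continuous inclusion $\dg{b}(P)\hookrightarrow \sD'(P)$, so all $x$-derivatives of $v_\ast f$ are uniformly bounded on a common compact support. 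Then $P_B(u\otimes v)\leq P_{\tilde B}(u)$ does the job.

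For the Fourier seminorm $P_{\chi,N_0,V}$, a partition of unity reduces to $\chi = \chi_1\otimes \chi_2$ supported in a product chart, so $\widehat{\chi(u\otimes v)}(\xi,\eta) = \widehat{\chi_1 u}(\xi)\,\widehat{\chi_2 v}(\eta)$. Set $\Sigma_a = \{\xi : (x,\xi)\in\Gamma_a \text{ for some } x\in\supp\chi_1\}\cup\{0\}$ and $\Sigma_b$ analogously; by closedness of $\Gamma_a,\Gamma_b$ and compactness of the supports these are closed cones, and a direct unpacking of the admissibility condition $\supp(\chi)\times V\cap (\Gamma_a\dot\times \Gamma_b)=\emptyset$ forces every $(\xi,\eta)\in V$ to satisfy $\xi\notin \Sigma_a$ or $\eta\notin\Sigma_b$. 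A conic partition of unity on the unit sphere then produces closed cones $V=V_a\cup V_b$ and a constant $c>0$ with $|\xi|\geq c|(\xi,\eta)|$ on $V_a$ and the projection to $\xi$-space disjoint from $\Sigma_a\setminus\{0\}$, and symmetrically for $V_b$. On $V_a$ the rapid decay of $\widehat{\chi_1 u}(\xi)$ (controlled by a continuous seminorm of $\dg{a}(N)$) combined with at most polynomial growth of $\widehat{\chi_2 v}(\eta)$ (bounded uniformly in $v\in Y$ by a continuous seminorm of $\dg{b}(P)$) yields rapid decay of the product in $|(\xi,\eta)|$, with the size estimate converting $\xi$-decay into $(\xi,\eta)$-decay; $V_b$ is symmetric.

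The main obstacle is extracting bounds uniform in $v\in Y$: because $\dg{b}(P)$ is not barrelled, Banach--Steinhaus is unavailable and boundedness does not imply equicontinuity. What saves the argument is that the defining seminorms of the normal topology on $\dg{b}(P)$ are continuous by construction and hence bounded on $Y$, which delivers exactly the uniformity required. The other delicate ingredient is the geometric conic decomposition $V=V_a\cup V_b$ simultaneously achieving the size estimate and the avoidance of $\Sigma_a$ and $\Sigma_b$; this uses closedness of $\Gamma_a$ and $\Gamma_b$ in an essential way, as it allows the pointwise alternative ``$\xi\notin\Sigma_a$ or $\eta\notin\Sigma_b$'' to be upgraded to a partition by \emph{closed} cones.
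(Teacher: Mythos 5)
The paper does not prove this statement: it is quoted as Theorem~4.6 of Brouder--Dang--H\'elein \cite{Brouder2016}, so there is no in-text proof of Proposition~\ref{hypocontinuous} to compare yours against. What follows is an assessment of your argument on its own terms; it is very likely the same strategy as the cited reference, since this is the natural route.

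Your structure is sound. The reduction to bounding the two families of normal seminorms is correct. For $P_B$, the Fubini identity $(u\otimes v)(f)=u(v_*f)$ with $v_*f(x)=v(f(x,\cdot))$ and the observation that $\tilde B$ has common compact support (the projection of the compact set supporting $B$) and uniformly bounded derivatives (since $\{\partial_x^\alpha f(x,\cdot)\}$ is bounded in $\sD(P)$ and $Y$ is bounded in $\sD'(P)$) is exactly right. For $P_{\chi,N_0,V}$, the alternative ``$\xi\notin\Sigma_a$ or $\eta\notin\Sigma_b$'' that you extract from the admissibility condition is correct (one needs to use all three pieces of the dotted product, as you implicitly do), and the compactness argument on $V\cap S^{n+m-1}$ producing a closed conic cover $V=V_a\cup V_b$ with the elliptic estimate $|\xi|\ge c|(\xi,\eta)|$ on $V_a$ is the right device. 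The closing diagnosis of why the $Y$-uniformity survives the failure of Banach--Steinhaus on $\dg{b}(P)$ is also accurate in spirit: the uniformity comes from boundedness of $Y$ on the seminorms you actually invoke, not from a general bounded-implies-equicontinuous principle.

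Two details are elided but standard and not structural. First, the reduction to $\chi=\chi_1\otimes\chi_2$ requires the usual convolution lemma (multiply by a tensor-product plateau function, Fourier transform, and use rapid decay of $\hat\chi$ to transfer estimates after slightly shrinking the cone); it is worth at least naming this step. Second, you should record why $\pi_\xi(V_a)$ is a closed cone: it is the cone generated by the compact set $\pi_\xi(K_a)\subset\mathbb{R}^n\setminus\{0\}$, where $K_a=V_a\cap S^{n+m-1}$, and $|\xi|\ge c>0$ on $K_a$ keeps it away from the origin. For the uniform polynomial bound on $\widehat{\chi_2 v}$ over $v\in Y$, note this does implicitly use that $\sD(P)$ is barrelled (so that $Y$, bounded in $\sD'(P)$, is equicontinuous there with uniformly bounded order on $\supp\chi_2$); the normal seminorms of $\dg{b}(P)$ alone do not literally deliver a single polynomial exponent $M$, so your closing sentence slightly overstates where the uniformity comes from in that one place.
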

It is a consequence of the preceding fact that the tensor product is continuous on spaces defined with respect to open cones. Note that, in contrast to the closed cone scenario, there is no `smallest' space to map into, as $\Lambda_a\dot\times \Lambda_b$ fails to be open in general.  
\begin{restatable}[]{prop}{tensorproductehypo}\label{tensorproductehypo}
Let $N$ and $P$ be manifolds, $\Lambda_a\subset\dot{T}^*N$ and $\Lambda_b \subset \dot{T}^*P$  open cones and $\Lambda\subset \dot T^* (N\times P)$ an open cone containing $\Lambda_a\dot\times\Lambda_b$. The tensor product of distributions restricts to a hypocontinuous map
\begin{equation}
     \sE'_{\Lambda_a}(N) \times\sE'_{\Lambda_b}(P) \to \sE'_{\Lambda}(N \times P).
\end{equation}
\end{restatable}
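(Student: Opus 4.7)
The plan is to first verify well-definedness and then hypocontinuity. For $u\in\sE'_{\Lambda_a}(N)$ and $v\in\sE'_{\Lambda_b}(P)$, the tensor product $u\otimes v$ is compactly supported in $\supp(u)\times\supp(v)$, and the standard wavefront bound $\WF(u\otimes v)\subset\WF(u)\dot\times\WF(v)$ combined with the hypothesis $\Lambda\supset\Lambda_a\dot\times\Lambda_b$ places $\WF(u\otimes v)$ inside $\Lambda$. Hence the map lands in $\sE'_\Lambda(N\times P)$.

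For hypocontinuity, by symmetry I would fix a bounded set $X\subset\sE'_{\Lambda_a}(N)$ and show that the family $\{v\mapsto x\otimes v\}_{x\in X}$ is equicontinuous from $\sE'_{\Lambda_b}(P)$ to $\sE'_\Lambda(N\times P)$. The strategy is to unwind the strong dual topologies on both sides and reduce to Proposition \ref{hypocontinuous}. Using Proposition \ref{characterizationequicontinuous}, $X$ (after passing to its equicontinuous enlargement) sits inside $\sD'_{\Xi_a}(K_a)$ for some compact $K_a\subset N$ and closed cone $\Xi_a\subset\Lambda_a$, and is bounded there.

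A basic \nbhd $V\subset\sE'_\Lambda(N\times P)$ is by construction the polar of a bounded set $B\subset\sD'_{-\Lambda^c}(N\times P)$. The equicontinuity I seek is therefore the statement that I can produce a \nbhd $U\subset\sE'_{\Lambda_b}(P)$ with $|(x\otimes v)(b)|\le 1$ for all $x\in X$, $v\in U$, $b\in B$. Writing $(x\otimes v)(b)=\langle v,T(x,b)\rangle$, where $T(x,b)\in\sD'_{-\Lambda_b^c}(P)$ is the partial contraction obtained by pairing $x$ against $b$ in the $N$-variable, it would suffice to show that $T$ restricts to a hypocontinuous map $\sD'_{\Xi_a}(K_a)\times\sD'_{-\Lambda^c}(N\times P)\to\sD'_{-\Lambda_b^c}(P)$. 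Granted this, $T(X,B)$ is a bounded subset of $\sD'_{-\Lambda_b^c}(P)$ (by the closed-cone hypocontinuity of Proposition \ref{hypocontinuous} applied pointwise), and its polar in the dual $\sE'_{\Lambda_b}(P)$ is exactly the \nbhd $U$ required.

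The hard part will be the microlocal bookkeeping for $T$: establishing that the partial contraction is well-defined and hypocontinuous with wavefront set contained in $-\Lambda_b^c$. This comes down to combining Hörmander's product theorem with the pushforward bound along the projection $N\times P\to P$, and the crucial input is the inclusion $-\Lambda^c\subset-(\Lambda_a\dot\times\Lambda_b)^c$ coming from the hypothesis. The fact that the statement is phrased for a \emph{general} open $\Lambda\supset\Lambda_a\dot\times\Lambda_b$ rather than $\Lambda_a\dot\times\Lambda_b$ itself is forced by the fact that the dotted product of open cones need not be open, so no canonical smallest target exists; the flexibility to enlarge $\Lambda$ is precisely what makes room for the closed-cone argument.
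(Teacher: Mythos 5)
Your outline runs into a genuine obstruction at the key step. You fix a bounded set $X\subset\sE'_{\Lambda_a}(N)$, as the definition of hypocontinuity requires, and then invoke Proposition~\ref{characterizationequicontinuous} ``after passing to its equicontinuous enlargement.'' But there is no such enlargement: equicontinuity is a \emph{strictly stronger} condition than boundedness in these spaces. The paper flags this explicitly when introducing the topology on $\el{}$: because $\sD'_{-\Lambda^c}$ is not barrelled, Banach--Steinhaus fails, and the equicontinuous subsets of $\el{}$ form a strict subset of the bounded ones. Proposition~\ref{characterizationequicontinuous} only tells you that an \emph{equicontinuous} set is trapped in some $\sD'_{\Xi_a}(K_a)$; a merely bounded $X$ need not admit any such uniform closed cone and compact set, and there is no canonical way to cook one up. So the reduction to Proposition~\ref{hypocontinuous} never gets off the ground. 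The ``partial contraction'' $T$ is also extra machinery that Proposition~\ref{hypocontinuous} does not immediately cover, but that is a secondary issue compared to the bounded-versus-equicontinuous gap.

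The paper sidesteps the problem entirely by proving \emph{separate} continuity first and then upgrading. Fixing a single $v\in\sE'_{\Lambda_b}(P)$, one writes $\sE'_{\Lambda_a}(N)$ as the inductive limit $\varinjlim_l \sD'_{\Xi_l}(K_l)$ (Lemma~10 of~\cite{Dabrowski2014}), checks that $\_\otimes v$ is continuous on each step via Proposition~\ref{hypocontinuous}, and concludes continuity on the whole space by the universal property of the colimit. This avoids ever needing a structural description of bounded subsets of $\sE'_{\Lambda_a}$. Then, since $\sE'_{\Lambda_a}$ and $\sE'_{\Lambda_b}$ are themselves barrelled (Proposition~28 of~\cite{Dabrowski2014}), a standard theorem (Treves, Theorem~41.2) upgrades separate continuity of a bilinear map on barrelled spaces to hypocontinuity. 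If you want to salvage your approach, you would essentially have to reprove some form of that upgrading theorem by hand; it is cleaner to use it directly, which is what the paper does.
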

As we are not aware of a proof of this fact in the literature, we give one in Appendix \ref{technicallemmata}. We combine the previous facts into a technical lemma that will be instrumental in showing closure of the $\star$-product for equicausal functionals in Section \ref{algebraicstructsection}. 
\begin{prop}\label{extensiontensorprod} In the same situation as in the previous proposition, if $H\subset \sE'_{\Lambda_a}(N)$ and $L\subset \sE'_{\Lambda_b}(P)$ are equicontinuous subsets, then $H\otimes L$ is an equicontinuous subset of $\sE'_{\Lambda}(N\times P)$. Alternatively phrased, $H\otimes L$ defines an equicontinuous set of linear mappings
\begin{equation*}
    \sD'_{\Gamma}(N\times P) \to \mathbb{C}
\end{equation*}
for any closed cone $\Gamma$ satisfying
\begin{equation*}
    (\Lambda_a \dot\times \Lambda_b)\cap -\Gamma = \emptyset. 
\end{equation*}
\end{prop}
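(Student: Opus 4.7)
The plan is to verify, for the set $H\otimes L$, the criterion of Proposition \ref{characterizationequicontinuous} by assembling the corresponding data that the same proposition yields for $H$ and $L$ separately. I treat the cited characterisation as biconditional: the necessary direction is exactly what is stated, and the sufficiency is the content of the underlying polar-set description from \cite{Brouder2016}, although one can bypass this dependence as indicated at the end.

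Applying Proposition \ref{characterizationequicontinuous} to $H\subset\mathcal{E}'_{\Lambda_a}(N)$ and to $L\subset\mathcal{E}'_{\Lambda_b}(P)$ produces compact sets $K_N\subset N$, $K_P\subset P$ and closed cones $\Xi_a\subset\Lambda_a$, $\Xi_b\subset\Lambda_b$ such that $H$ is bounded in $\mathcal{D}'_{\Xi_a}(K_N)$ and $L$ is bounded in $\mathcal{D}'_{\Xi_b}(K_P)$. The natural candidates for $H\otimes L$ are then $K:=K_N\times K_P$, which is compact, and $\Xi:=\Xi_a\dot{\times}\Xi_b$, which is closed (the dotted product of two closed cones is closed, as noted in the text) and sits inside $\Lambda_a\dot{\times}\Lambda_b\subset\Lambda$. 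Since $\supp(u\otimes v)=\supp(u)\times\supp(v)\subset K$, every element of $H\otimes L$ belongs to $\mathcal{D}'_\Xi(K)$. Proposition \ref{hypocontinuous} provides the tensor product as a hypocontinuous map $\mathcal{D}'_{\Xi_a}(N)\times\mathcal{D}'_{\Xi_b}(P)\to\mathcal{D}'_\Xi(N\times P)$; hypocontinuous bilinear maps send products of bounded sets to bounded sets (fixing one bounded factor yields an equicontinuous family, which is uniformly bounded on the other bounded factor), so $H\otimes L$ is bounded in $\mathcal{D}'_\Xi(N\times P)$ and, combined with the support condition, in $\mathcal{D}'_\Xi(K)$. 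The reverse direction of Proposition \ref{characterizationequicontinuous} then yields equicontinuity of $H\otimes L$ in $\mathcal{E}'_\Lambda(N\times P)$.

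For the alternative phrasing, let $\Gamma$ be any closed cone in $\dot T^*(N\times P)$ with $(\Lambda_a\dot{\times}\Lambda_b)\cap-\Gamma=\emptyset$. Since $\Xi\subset\Lambda_a\dot{\times}\Lambda_b$, we have $\Xi\cap-\Gamma=\emptyset$, so H\"ormander's criterion makes $(u\otimes v)\cdot w$ a compactly supported distribution for every $u\otimes v\in H\otimes L$ and $w\in\mathcal{D}'_\Gamma(N\times P)$, and the pairing $\langle u\otimes v,w\rangle:=(u\otimes v)w(1)$ is well-defined. Equicontinuity of $H\otimes L$ as a family of linear maps $\mathcal{D}'_\Gamma(N\times P)\to\mathbb{C}$ then follows from the hypocontinuity of distributional multiplication applied to the bounded set $H\otimes L\subset\mathcal{D}'_\Xi(K)$ established above, pairing the resulting equicontinuous family of $\mathcal{D}'$-valued maps with the test function $1$, which is admissible because all products have support in the fixed compact set $K$. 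The main obstacle is the reliance on the converse of Proposition \ref{characterizationequicontinuous}; to avoid it, one can instead verify equicontinuity directly by combining the normal-topology seminorms that control $H$ and $L$ with the hypocontinuity of the tensor product to produce uniform bounds against any equicontinuous subset of $\mathcal{D}'_{-\Lambda^c}(N\times P)$.
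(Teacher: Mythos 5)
Your proof is correct and follows the same strategy as the paper's: invoke Proposition~\ref{characterizationequicontinuous} to place $H$ and $L$ inside bounded subsets of $\sD'_{\Xi_a}(K_N)$ and $\sD'_{\Xi_b}(K_P)$, then use hypocontinuity of the tensor product (Proposition~\ref{hypocontinuous}) to conclude that $H\otimes L$ is bounded in $\sD'_{\Xi_a\dot\times\Xi_b}(K_N\times K_P)$ and hence equicontinuous in $\sE'_\Lambda(N\times P)$. For the ``alternative phrasing'' the paper is a bit slicker: since $(\Lambda_a\dot\times\Lambda_b)\cap-\Gamma=\emptyset$ is exactly $\Lambda_a\dot\times\Lambda_b\subset-\Gamma^c$ with $-\Gamma^c$ open, it just applies the first assertion with $\Lambda=-\Gamma^c$ and then reads off the equicontinuity on $\sD'_\Gamma=\sD'_{-(-\Gamma^c)^c}$ from the duality $\el{}=\left(\sD'_{-\Lambda^c}\right)'$, rather than reconstructing the pairing by hand via H\"ormander's multiplication criterion and evaluation at $1$ as you do; both arguments are sound. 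Finally, the ``obstacle'' you flag --- reliance on the converse direction of Proposition~\ref{characterizationequicontinuous} --- is not a gap specific to your proof: the paper's own one-line passage from boundedness in $\sD'_{\Xi\dot\times\Delta}(K\times Q)$ to equicontinuity in $\sE'_\Lambda$ uses exactly the same converse, which is indeed available from the polar-set characterisation in \cite{Brouder2016}.
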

\begin{proof}
  Write $H \subset \sD'_{\Xi}(K) $ and $L\subset \sD'_{\Delta}(Q)$ for $\Xi,\Delta$ closed cones and $K,Q$ compact. The tensor product of distributions is hypocontinuous as a map
\begin{equation*}
  \sD'_\Xi(K) \times \sD'_\Delta(Q) \to  \sD'_{\Xi\dot{\times}\Delta}(K\times Q).  
\end{equation*}
As the image of a product of bounded sets by a hypocontinuous map is bounded, see e.g.\ \cite{Horvath1966}, we find that $H\otimes L$ is bounded in $\sD'_{\Xi\dot{\times}\Delta}(K\times Q)$, and hence is equicontinuous as a subset of $\sE'_{\Lambda}(N\times P)$. 

For the final statement, we note that $\left(\Lambda_a \dot\times\Lambda_b\right)\cap -\Gamma = \emptyset$ implies that $H\otimes L$ is an equicontinuous subset of $\sE'_{-\Gamma^c}$, so that it is an equicontinuous set of linear functionals on $\sD'_{-(-\Gamma^c)^c}=\dg{}$.
\end{proof}

We close this section by briefly discussing the bundle-valued case. Let $U_i$ be a locally finite open cover of $M$, trivializing $E$. Then the trivialisations gives isomorphisms
\begin{equation*}
    \dg{}(U_i;E|_{U_i}) \cong \mathcal{D}'_{\Gamma_{i}}(U_i)^k,
\end{equation*}
where $\Gamma_i = \Gamma|_{U_i}$ and $k$ is the rank of $E$. A partition of unity argument shows that the diagram 
\begin{equation*}\label{colimitdg}
    \begin{tikzcd}
        &\mathcal{D}'_{\Gamma_{i}}(U_i)^k \ar[r] & \mathcal{D}'_{\Gamma_{ij}}(U_i \cap U_j)^k \ar[<->,dd] \\
\dg{}(M \text{;}\, E) \ar[ur] \ar[dr] &&\\
&\mathcal{D}'_{\Gamma_{j}}(U_j)^k \ar[r] & \mathcal{D}'_{\Gamma_{ji}}(U_j \cap U_i)^k 
    \end{tikzcd}
\end{equation*}
is a colimit diagram of vector spaces, where the vertical arrow is a transition function of the bundle $E$. We endow $\dg{}(M;E)$ with the initial locally convex vector topology with respect to this diagram, making it a colimit diagram in LCTVS. It follows from a standard argument that this topology does not depend on the choice of cover.

The results quoted in this section have straightforward generalisations to the bundle valued case. For example, one can show that
\begin{equation*}
    (\dg{}(E))' \cong \sE'_{-\Gamma^c}(E^!),
\end{equation*}
and that the tensor product defines a hypocontinuous map 
\begin{align*}
    \dg{a}(E_a) \times \dg{b}(E_b) \xrightarrow{\otimes} \sD_{\Gamma_a \dot\times \Gamma_b}(E_a\boxtimes E_b).
\end{align*}
and similar for open cones. Here $E_a\boxtimes E_b\to N\times P$ denotes the exterior tensor product, whose fibre over $(x,y)\in N \times P$ is given by $E_{a,x}\times E_{b,y}$. All of these facts follow from combining the relevant fact for the scalar case with the diagram \eqref{colimitdg}.  
\subsection{Functionals on spacetime}
We now specialise to the case of interest for perturbative algebraic quantum field theory. We consider a spacetime $M$, equipped with a vector bundle $E$ that describes the field content of the theory. We take this to be fixed for the remainder of this section. The \textit{configuration space} of the theory is the space of smooth sections of this bundle. To abbreviate notation, we write $\sE\equiv\sE(M,E)$, and $\sE^!=\sE(M,E^!)$.

We will use functionals on $\sE$ to model observables of a field theory. These functionals have a notion of support on $M$ (in addition to the support as a map on $\mathcal{E}$, which is  less useful in this context). 
\begin{definition}
   The \textbf{spacetime support} of a functional $F$ is defined through its complement as 
\begin{align*}
    \textup{supp}(F)^c = \{ x \in M \: | \: \exists U \textup{ open neighbourhood of $x$ such that } \varphi|_{U^c}= \tilde{\varphi}|_{U^c} \implies F(\varphi) = F(\tilde{\varphi}) \}.
\end{align*}
\end{definition}
Roughly speaking, the support of a functional is that region of $M$ where $F$ is sensitive to perturbation. We mention the following characterisation of support from Lemma III.3 in \cite{Brouder2018}:
\begin{lemma}\label{derivativessupport}
    The support of a functional $F$ can be characterised as
    \begin{equation*}
        \supp(F) = \overline{\bigcup_{\varphi\in\sE}\supp(F^{(1)}(\varphi))}
    \end{equation*}
\end{lemma}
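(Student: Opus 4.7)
The plan is to prove both inclusions using the fundamental theorem of calculus stated earlier in the excerpt. First, observe that the complement of $\supp(F)$ is open directly from the definition (the same neighbourhood $U$ witnesses $y \notin \supp(F)$ for every $y \in U$), so $\supp(F)$ is closed; consequently it will contain the full right-hand side as soon as it contains the un-closed union $\bigcup_\varphi \supp(F^{(1)}(\varphi))$. For this inclusion I argue by contrapositive: if $x \notin \supp(F)$ with witness open neighbourhood $U$, then for any $\psi \in \sD(M;E)$ with $\supp(\psi) \subset U$ and any $\varphi \in \sE$, the curve $t \mapsto \varphi + t\psi$ agrees with $\varphi$ on $U^c$, so $F(\varphi + t\psi) = F(\varphi)$ for all $t \in \mathbb{R}$. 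Differentiating at $t=0$ yields $F^{(1)}(\varphi)\{\psi\} = 0$, so $F^{(1)}(\varphi)$ vanishes as a distribution on $U$, giving $\supp(F^{(1)}(\varphi)) \cap U = \emptyset$.

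For the reverse inclusion I again argue by contrapositive. Assume $x$ lies outside $\overline{\bigcup_\varphi \supp(F^{(1)}(\varphi))}$ and pick an open $U \ni x$ with $U \cap \supp(F^{(1)}(\varphi)) = \emptyset$ for every $\varphi$. Using local compactness of $M$, shrink $U$ to an open $V \ni x$ with $\overline{V} \subset U$; I claim $V$ witnesses $x \notin \supp(F)$. Indeed, if $\varphi|_{V^c} = \tilde{\varphi}|_{V^c}$, then $\psi := \tilde{\varphi} - \varphi$ is a smooth section vanishing on $V^c$, so $\supp(\psi) \subset \overline{V} \subset U$. The fundamental theorem of calculus gives
\begin{equation*}
    F(\tilde{\varphi}) - F(\varphi) = \int_0^1 F^{(1)}(\varphi + t\psi)\{\psi\}\,dt,
\end{equation*}
and for each $t$ the distribution $F^{(1)}(\varphi + t\psi)$ has support in $U^c$ while $\supp(\psi) \subset U$, so the two supports are disjoint and the integrand vanishes identically. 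Hence $F(\tilde{\varphi}) = F(\varphi)$, as required.

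The main subtlety is the intermediate step of replacing $U$ by a strictly smaller $V$ with $\overline{V} \subset U$: if one attempted to use $U$ directly, the support of $\psi$ could touch $\partial U$, and \emph{a priori} $F^{(1)}(\varphi + t\psi)$ could also have support meeting $\partial U$, so the pairing would not be manifestly zero. Separating the two supports by a strict nesting of a closed set inside an open set is the device that sidesteps all boundary issues and makes the argument go through cleanly.
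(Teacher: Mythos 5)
Your proof is correct. The paper itself does not prove this lemma; it is quoted from Lemma III.3 of Brouder, Dang, Laurent-Gengoux and Rejzner (2018), and the argument there likewise proceeds via the fundamental theorem of calculus, so your approach is essentially the same as in the cited reference. Both inclusions are handled properly: the first direction correctly reduces to showing $F^{(1)}(\varphi)$ vanishes on $U$ by testing against sections supported inside $U$, and the second direction correctly applies
\begin{equation*}
    F(\tilde{\varphi}) - F(\varphi) = \int_0^1 F^{(1)}(\varphi + t\psi)\{\psi\}\,dt
\end{equation*}
(with $\psi$ compactly supported so that the pairing with the compactly supported distribution $F^{(1)}(\varphi + t\psi)$ vanishes by disjointness of supports). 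Your remark about passing from $U$ to a smaller $V$ with $\overline{V} \subset U$ is the right instinct: as stated, $\supp(\psi)\subset\overline{U}$ and $\supp(F^{(1)}(\varphi+t\psi))\subset U^{c}$ could in principle meet along $\partial U$, so the strict nesting is what makes the vanishing of the pairing an immediate consequence of disjoint supports rather than requiring a more delicate order-of-vanishing argument at the boundary.
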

Note that, even though the $\supp(F^{(1)}(\varphi))$ is compact for all individual $\varphi\in\sE$, the union over all $\varphi$ is in general non-compact. We denote the space of all compactly supported smooth functionals on $\sE$ by $\mathcal{F}(E)$ or by $\sF$ when no confusion can arise. 

We introduce also the notion of multivector field. 
\begin{definition} 
A \textbf{$k$-vector field} is a smooth functional 
\begin{equation*}
    X: \sE \to  \sE'_a(M^k,E^{\boxtimes k}),
\end{equation*}
where the subscript $a$ indicates the subspace of completely antisymmetrised distributional sections of $\sE'(M^k,E^{\boxtimes k})$. 
\end{definition}

For each $\varphi$, the distributional section $X(\varphi)$ has an intrinsic notion of support on spacetime, in addition to the notion of support it has as a functional on $\sE$. The following definition takes both of these notions of support into account.
\begin{definition}
The \textbf{spacetime support} of a k-vector field $X$ is
\begin{equation*}
    \supp(X) = \overline{\bigcup_{\varphi \in \sE} \supp(X^{(1)}(\varphi)) \cup \pi_1\supp(X(\varphi))} 
\end{equation*}
where $\pi_1$ is the projection onto the first variable. We denote the set of compactly supported $k$-vector fields on $\sE$ by $\mathcal{X}^k(E)$, or by $\mathcal{X}^k$ when no confusion can arise.
\end{definition}
We note that, due to the anti-symmetry of $k$-vector fields, any other choice of projection gives an equivalent definition. As a word of warning, even though we call these objects \textit{vector fields}, the pairing between vector fields and functionals will in general not be well-defined, as the derivative of a functional is a distributional object and hence will not pair with other distributional objects without proper care.

We spell out the meaning of Proposition \ref{derissmoothfunctional} explicitly in this scenario. Recall that, if $E_a,E_b$ are two vector bundles over $N$ and $P$, then their spaces of sections are nuclear Fréchet spaces. Their completed tensor product is isomorphic to the space of sections of the exterior tensor product bundle $E_a\boxtimes E_b$: 
\begin{equation*}
    \sE(N;E_a) \hatotimes \sE(P;E_b) \cong \sE(N\times P; E_a\boxtimes E_b). 
\end{equation*}
For trivial bundles, this is shown in Theorem 51.6 in \cite{Treves1967}. The general statement then follows from the fact that any vector bundle can be exhibited as a summand of a trivial bundle.

As nuclear Fréchet spaces are Montel, the compact-open topology is equivalent to the bounded-open topology:
\begin{equation*}
L_c(\sE^{\hatotimes n},\sE'(M^k;E^{\boxtimes k})) = L_b(\sE^{\hatotimes n},\sE'(M^k,E^{\boxtimes k})),
\end{equation*}
see e.g.\ Proposition 34.5 in \cite{Treves1967}. Using standard theorems on nuclear Fréchet spaces, see e.g.\ equations 50.16 through 19 in the same reference, we have the following version of the Schwartz kernel theorem:
\begin{dmath*}
L_b(\sE^{\hatotimes n},\sE'(M^k;E^{\boxtimes k})) \cong \left(\sE^{\hatotimes n}\right)^'\hatotimes \ \sE'(M^k;E^{\boxtimes k}),\cong \left(\sE(M^n;E^{\boxtimes n})\ \hatotimes \  \sE(M^k;E^{!\boxtimes k})\right)^',\cong \sE'(M^{n+k};E^{! \boxtimes n}\boxtimes E^{\boxtimes k}).
\end{dmath*}
Hence we can view $X^{(n)}$ as a functional valued in this space of distributional sections. We use this viewpoint to discuss the singular structure of the derivatives of multivector fields.

\subsection{Dynamics}\label{dynamics}
 We consider a theory where the dynamics are given by a linear, Green hyperbolic operator as defined in \cite{Baer2015}: 
\begin{equation*}
    P: \sE \to \sE^!,
\end{equation*}
Morally speaking, we view this operator as the linearisation of the Euler-Langrange equations corresponding to some local action. As we are mainly interested in algebraic properties of the QFT defined by this operator we take it to be fixed, rather than letting it depend on a configuration $\varphi\in\sE$. The case where $P$ is allowed to vary is treated in \cite{Brunetti2019}. 

By definition of a Green hyperbolic operators, there are unique retarded and advanced Green's functions. Following Bär in \cite{Baer2015}, we view them as maps on sections with past or future compact support:
\begin{align*}
    \Delta^R:\sE^!_{\pc} \to \sE^{}_{\pc} \\
    \Delta^A:\sE^!_{\fc} \to \sE^{}_{\fc} \\
\end{align*}
that invert $P$ when restricted to those spaces. Through the Schwartz-kernel Theorem, we can equivalently view $\Delta^{R/A}$ as bidistributions valued in the square of $E$:
\begin{equation*}
    \Delta^{R/A} \in \sD'(M^2, E^{\boxtimes 2})
\end{equation*}

From the advanced and retarded Green's functions, one can then define the \textit{commutator function} to be
\begin{equation*}
    \Delta = \Delta^R - \Delta^A: \sE^!_c \to \sE,
\end{equation*}
which is well defined as $\sE^!_c \subseteq \sE^!_{\pc} \cap \sE^!_{\fc}$. This function has the property that $P \circ \Delta = 0$, i.e.\ it maps compactly supported data to solutions of $P$.

The classic example of a Green hyperbolic operator is that of a normally hyperbolic operator. If $P$ is normally hyperbolic, then it follows from the propagation of singularities theorem that 
\begin{equation} 
\label{normally_hyperbolic}
    \WF(\Delta) = \{(x,y;\xi,\xi') \subset \dot{T}^*M^2 \, | \, (x,\xi) \sim (y,-\xi')\},
\end{equation}
where $(x,\xi) \sim (y,-\xi')$ if the vector $g^{-1}(\xi,\_)$ is tangent to a \textbf{null} geodesic from $x$ to $y$, such that $-\xi'$ is the parallel transport of $\xi$ along this curve. This implies that $\xi$ and $\xi'$ are both null vectors, one future pointing and one past pointing. 

There are other interesting Green hyperbolic operators that we would like our constructions to apply to however. For example, the Proca field is Green hyperbolic but not normally hyperbolic. It turns out that the usual Hadamard condition that one applies to states of the Klein-Gordon theory is not the correct one for this theory, see e.g.\ \cite{Fewster_2003,Moretti_2023}. Another example where more general microlocal constraints are needed when studying birefringence in pre-metric electrodynamics, as in \cite{Fewster_2018A}. Finally, when considering theories induced by an action principle as in \cite{Brunetti2019}, the operator $P$ is allowed to vary with the configuration $\varphi$. It was realised in \cite{Dabrowski2014c} that one needs to restrain the speed of propagation at different configurations both above and below, so that more general cones than the light cone become relevant.

Hence we choose to take a general approach, and impose minimal constraints on the singular structure of the commutator function. Looking at \eqref{normally_hyperbolic}, the only feature that is relevant to us here is the fact that the two parts of the covectors are in `opposite' directions, one along the forward light cone, and one along the backward light cone. We keep this directional feature, but generalise the cones that we use in the definition. 

So let $\mathcal{V}\subset \dot T^* M$ be a closed cone, such that the convex hull of $\mathcal{V}$ does not intersect the zero section of $T^*M$. In particular, this implies that $\mathcal{V}\cap -\mathcal{V} = \emptyset$. We view the cone $\mathcal{V}$ to be part of the background data of the field theory, and take it to be fixed for the remainder of the text. We impose then, in addition to the fact that $P$ is Green hyperbolic, that the wavefront set of the commutator function of $P$ is bounded by
\begin{equation}\label{wavefrontsetcommutatorfunction}
    \WF(\Delta) \subset \left( \mathcal{V} \times -\mathcal{V}\right) \cup \left( -\mathcal{V} \times \mathcal{V}\right).
\end{equation}
The commutator function is too singular for the definition of the $\star$-product of local functionals, as it might not have a well defined square (as can be seen from the wavefront set immediately). To continue our analogy with normally hyperbolic operators, see e.g.\ \cite{Radzikowski1996}, we also assume existence of a (generalised) Hadamard two point function: This is a distributional bisolution to the equations of motion, i.e.\ 
\begin{align*}
    (P\otimes \mathbbm{1})\Delta^+ = 0 \\
    (\mathbbm{1} \otimes P) \Delta^+ =0,
\end{align*}
whose antisymmetric part equals $\frac{i}{2}\Delta$. Furthermore, we impose the \textit{generalized Hadamard condition}:
\begin{equation}\label{wavefronttwoppoint}
    \text{WF}(\Delta^+) \subset \mathcal{V} \times -\mathcal{V}.
\end{equation}
In a sense, this is a frequency splitting of $\Delta$, as we have discarded all covectors along $-\mathcal{V}$ in the first argument. We refer the reader to \cite{Fewster2024} for a discussion of the generalised Hadamard condition and its consequences. Throughout the rest of this text, we assume that an operator $P$ satisfying these conditions is given, and that a choice of two-point function $\Delta^+$ has been made. 

As an aside, we mention that by generalizing the forward light cone to a general cone $\mathcal{V}$, very little of our construction rests on the presence of a spacetime metric at this point. In fact, the only part where it still features is in fixing the causal structure on $M$. In that sense, we could have assumed our background structure to be given by a general manifold $\mathcal{M}$, endowed with a suitable notion of causal structure and a field of cones that bounds the propagation of singularities arising from the Green's functions of $P$. 
\section{The Koszul complex for smooth functionals}\label{KoszulComplex}
Morally speaking, we want to view the configurations $\varphi$ in the kernel of $P$ as the `physical' configurations of the theory. As such, the observables of the theory should be given by functionals on $\textup{Sol} = \ker(P)$. We will denote the set of all compactly supported functionals $\textup{Sol} \to \mathbb{C}$ by $\mathcal{F}_S$.

It is more convenient however, in light of discussing interactions, to take an off-shell approach, i.e.\ to consider functionals $F$ on the whole of $\sE$. We denote the inclusion map of $\textup{Sol}$ into $\sE$ by $\iota$. The pullback by this map defines a restriction map,
\begin{align*}
    \iota^*:\:&\mathcal{F} \to \mathcal{F}_S \\
    &F \mapsto F|_{\textup{Sol}}.
\end{align*}
This map is surjective: any functional on $\textup{Sol}$ can be precomposed with a projection onto $\textup{Sol}$ to obtain a functional on $\sE$. We show below that continuous projections exist. The kernel of $\iota^*$ is the \textit{on-shell ideal}
\begin{equation*}
   \ker(\iota^*) = \{F\in \mathcal{F} \: | \: F|_{\textup{Sol}} = 0 \}.
\end{equation*}

The main goal of this section is to describe a resolution, in the sense of homological algebra, of the map $\iota^*$ in terms of multivector fields. That is, we will extend the map $\iota^*$ to be part of a chain complex
\begin{equation*}\label{Koszul}
    \ldots \xrightarrow{} \mathcal{X}^2 \xrightarrow{\delta} \mathcal{X}^1 \xrightarrow{\delta} \mathcal{F}\xrightarrow{\iota^*} \mathcal{F}_S \xrightarrow{} 0,
\end{equation*}
that is exact. We put $\mathcal{F}$ in degree $0$ here, and $\mathcal{F}_S$ in degree $-1$.

Let $X\in\mathcal{X}^k$, we map it to a $(k-1)$-vector field as follows:
\begin{equation}\label{differential}
    \delta (X) (\varphi) = X(\varphi)\{P\varphi,\_\}.
\end{equation}
Clearly, as $k$-vector fields are antisymmetric, this operator squares to zero in degrees $>1$. Furthermore, if $X \in \mathcal{X}^1$ and $\varphi \in \textup{Sol}$, we have
\begin{equation*}
    \delta X (\varphi) = X(\varphi)\{P\varphi\} =0,
\end{equation*}
so that $\delta \mathcal{X}^1 \subset \ker(\iota^*)$. We shall show that this is an equality, and that the homology of the complex $\left(\mathcal{X}^\bullet,\delta\right)$ vanishes in higher degrees. In the literature of gauge theory quantization, see e.g.\ \cite{Hennaux}, it is also said that `the on-shell ideal is generated by the equations of motion'.

Our main trick is to decompose a general configuration $\varphi$ into a part along Sol, and a complementary part. For this, we first define a projection on the solution space. This projection is not unique as there is no canonical structure, like an inner product, that we can use to define it. We parameterise the arbitrariness of our approach by a function $\theta$, with the property that
\begin{itemize}
\item 
$\{x \in M  \: | \: \theta(x)=0\}$ is future compact, and
\item 
$\{x \in M  \: | \: \theta(x)=1\}$ is past compact.
\end{itemize}
As an example of such a function, we might select two Cauchy surfaces for $M$, one lying strictly in the future of the other, and set $\theta$ equal to zero to the past and equal to one to the future, varying smoothly between the two surfaces. Later, the proof of the time-slice axiom will boil down to making a judicious choice for $\theta$, but for now we allow it to be arbitrary.

We define the following operator:
\begin{align*}
\alpha: \:& \sE^! \to \sE,\\
    &h \mapsto  \Delta^A (1-\theta) h+\Delta^R \theta h, 
\end{align*}
which is well-defined due to the support properties of $\theta$. It is a one-sided inverse to $P$, as it satisfies
\begin{equation*}
    P\alpha h = h.
\end{equation*}
Using this map, we can define a projection onto the solution space:
\begin{equation}\label{gamma0}
\begin{aligned}
    \gamma_0  : \:& \mathcal{E}\to \mathcal{E}, \\
     &\varphi \mapsto \varphi - \alpha P\varphi .
\end{aligned}
\end{equation}
We extend this to a map that takes a configuration $\varphi$ to the line from $\gamma_0\varphi$ to $\varphi$:
\begin{align*}
    \gamma:\:&  [0,1] \times \sE \to \sE, \\
            &(\lambda,\varphi) \mapsto \varphi + (\lambda-1)\alpha P \varphi.
\end{align*}
It is straightforward to check that $\gamma$ is continuous.

If $F$ is a functional on $\sE$, then we can pull back by $\gamma_0$:
\begin{equation*}
    \gamma_0^* F(\varphi) = F(\gamma_0 \varphi) .
\end{equation*}
As $\gamma_0$ is the identity on $\textup{Sol}$, we get that $F-\gamma_0^* F$ vanishes on the solution space. The operator $\gamma_0^*$ changes the support of the functional it acts on in the following way.
\begin{prop}\label{supportsprop}
    Let $B$ be any topological vector space. If $F: \sE \to B$ is a smooth functional, not necessarily compactly supported, then 
\begin{equation}\label{supportpullback}
    \supp(\gamma_0^*F) \subset J \supp(F) \cap \supp(d\theta).
\end{equation}
Similarly, if $\lambda \neq 0$, then
\begin{equation}\label{supportpullbacklambda}
    \supp(\gamma_\lambda^*F) \subset \left (J \supp(F) \cap \supp(d\theta) \right) \cup \supp(F).
\end{equation}
If $F$ is compactly supported, then so is $\gamma_\lambda^* F$ for all $\lambda\in [0,1]$.
\end{prop}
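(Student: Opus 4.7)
The plan is to work directly from the definition of spacetime support, tracking how $\gamma_\lambda$ propagates a perturbation. For configurations $\varphi,\tilde\varphi$ agreeing outside an open set $U$, write $\psi := \varphi - \tilde\varphi$, so that
\[
\gamma_\lambda\varphi - \gamma_\lambda\tilde\varphi \;=\; \psi + (\lambda-1)\alpha P\psi.
\]
Everything will reduce to controlling $\supp(\alpha P\psi)$ in two complementary ways.

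For the causal inclusion $\supp(\gamma_\lambda^*F) \subset J\supp(F)$, which works for every $\lambda\in[0,1]$, pick $x\notin J\supp(F)$ and shrink $U$ so that $J(\overline U)\cap\supp(F)=\emptyset$. Since $P$ is a differential operator and the Green's functions satisfy $\supp(\Delta^{R/A}h)\subset J^{\pm}(\supp h)$, we get $\supp(\alpha P\psi)\subset J(\overline U)$, and hence the difference $\gamma_\lambda\varphi-\gamma_\lambda\tilde\varphi$ is supported in $J(\overline U)$, disjoint from $\supp(F)$. Thus $F(\gamma_\lambda\varphi)=F(\gamma_\lambda\tilde\varphi)$ and $x\notin\supp(\gamma_\lambda^*F)$.

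For the role of $\supp(d\theta)$, the algebraic observation is that whenever $\theta\equiv c$ is constant on $\supp(\psi)$, then $\theta P\psi = cP\psi$, $(1-\theta)P\psi=(1-c)P\psi$, and
\[
\alpha P\psi \;=\; (1-c)\Delta^A P\psi + c\Delta^R P\psi \;=\; (1-c)\psi + c\psi \;=\; \psi,
\]
using that $\Delta^{R/A}$ are two-sided inverses of $P$ on compactly supported sections. Consequently, if $x\notin\supp(d\theta)$ and $U$ is a neighbourhood of $x$ on which $\theta$ is constant, the identity gives $\gamma_\lambda\varphi-\gamma_\lambda\tilde\varphi=\lambda\psi$. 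For $\lambda=0$ this vanishes, proving $\supp(\gamma_0^*F)\subset\supp(d\theta)$ and, combined with the causal bound, \eqref{supportpullback}. For $\lambda\neq 0$ the difference $\lambda\psi$ sits in $U$; so when also $x\notin\supp(F)$ we may further shrink $U$ to be disjoint from $\supp(F)$, giving $\supp(\gamma_\lambda^*F)\subset\supp(F)\cup\supp(d\theta)$. Intersecting with $J\supp(F)$ and using $\supp(F)\subset J\supp(F)$ then yields \eqref{supportpullbacklambda}.

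The compact-support assertion reduces to showing that $J\supp(F)\cap\supp(d\theta)$ is compact whenever $\supp(F)$ is, since both support bounds fit inside $\supp(F)\cup(J\supp(F)\cap\supp(d\theta))$. This is a geometric lemma in globally hyperbolic spacetime: the future-compactness of $\{\theta=0\}$ and the past-compactness of $\{\theta=1\}$ confine $\supp(d\theta)$ to a region whose intersection with $J(K)$ is compact for every compact $K$. I expect this bookkeeping with causality conditions to be the main (mild) technical obstacle; the functional-analytic content otherwise collapses to the single identity $\alpha P\psi=\psi$ that arises as soon as $\theta$ is constant on $\supp(\psi)$.
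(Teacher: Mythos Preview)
Your argument is correct and follows essentially the same route as the paper: both proofs use the causal propagation bound $\supp(\gamma_\lambda\psi)\subset J(\supp\psi)$ for the $J\supp(F)$ inclusion, the identity $\alpha P\psi=\psi$ when $\theta$ is constant on $\supp\psi$ for the $\supp(d\theta)$ inclusion, and then appeal to the spatially-compact/temporally-compact intersection lemma (Lemma~1.9 in B\"ar) for the compactness claim. Your explicit reduction to compactly supported $\psi$ makes the Green-inverse step slightly cleaner than the paper's phrasing, but the logic is otherwise identical.
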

\begin{proof}
Suppose that $x\notin J\supp(F)$. As $J\supp(F)$ is closed, $\supp(F)^\perp=\left(J\supp(F)\right)^c $ is an open neighbourhood of $x$. Suppose that $\psi \in \sE$ is supported in $\supp(F)^\perp$, then it follows from the support properties of $\Delta^{A/R}$ that
\begin{equation*}
    \supp(\gamma_\lambda \psi) \subset J \left(\supp(F)^\perp\right) \subset \supp(F)^c.
\end{equation*}
Hence
    \begin{equation*}\label{psinotinsupport}
       \gamma_\lambda^*F(\varphi+\psi)= F(\gamma_\lambda\varphi+\gamma_\lambda\psi) =\gamma_\lambda^*F(\varphi).
    \end{equation*}
for all $\varphi \in \sE$, so that $x\notin \supp(\gamma_\lambda^* F)$.

Suppose that $x\notin \supp(d\theta)$, then there exists a neighbourhood $U$ of $x$ so that $\theta$ is constant on $U$, with value $t$. If $\psi$ is supported inside $U$, then
\begin{equation}\label{thetaconstant}
    \gamma_\lambda \psi = \psi - (\lambda-1)\left(\Delta^A(1-t)P\psi - \Delta^R t P\psi\right)=\lambda \psi,
\end{equation}
where we have used that $(1-t)P\psi$ is past compact and $tP\psi$ is future compact to cancel the propagators. Therefore
\begin{equation*}
       \gamma_\lambda^*F(\varphi+\psi)= F(\gamma_\lambda\varphi+\lambda\psi)
\end{equation*}
 If $\lambda =0$, this equals $\gamma_0^*F(\varphi)$, meaning that $x\notin \supp(\gamma_0^*F)$. If $\lambda\neq 0$ and additionally $\supp(\psi)\cap \supp(F)=\emptyset$, we get that $x \notin \supp(\gamma_\lambda^*F)$.

Putting all this together, we have shown that
\begin{align*}
    J\supp(F)^c\cup \supp(d\theta)^c &\subset\supp(\gamma_0^*F)^c, \\
    J\supp(F)^c\cup \left (\supp(d\theta)^c\cap \supp(F)^c\right) &\subset \supp(\gamma_\lambda^*F)^c
\end{align*}
and equations \eqref{supportpullback} and \eqref{supportpullbacklambda} follow from taking complements on both sides. 

Finally, if $\supp(F)$ is compact, then $J\supp(F)$ is spatially compact, whereas $\supp(d\theta)$ is temporally compact by the assumptions on $\theta$. As the intersection of a spatially compact set with a temporally compact set is compact, see e.g.\ Lemma 1.9 in \cite{Baer2015}, it follows that $\supp(\gamma_\lambda^*F)$ is compact.
\end{proof}

With these preliminaries out of the way we now define an explicit homotopy operator that will imply that the complex \eqref{Koszul} is exact. In degree $-1$, i.e.\ on $\mathcal{F}_S$, we pull back by the map $\gamma_0$, which we now (and only now) view as a map $\sE \to \textup{Sol}$:
\begin{align*}
    \mathcal{H}_{-1}:\:   &\mathcal{F}_S \to \mathcal{F},\\
                        &F \mapsto F \circ \gamma_0.
\end{align*}
The relation between this map and $\gamma_0^*$ is given by
\begin{equation*}
    \gamma_0^* = \mathcal{H}_{-1} \circ \iota^*.
\end{equation*}
In higher orders, we define maps
\begin{equation*}
    \mathcal{H}_l :\mathcal{X}^l \to \mathcal{X}^{l+1}\\
\end{equation*}
by
\begin{equation}\label{defhomotopyoperator}
    \mathcal{H}_l Y(\varphi)\{h_1,\ldots,h_{l+1}\} = \sum_{i=1}^{l+1} (-1)^{i-1}\int_0^1 Y^{(1)}(\gamma_\lambda \varphi)\{\alpha h_i;h_1, \ldots,\widehat{h_i},\ldots,h_{l+1}\}\lambda^l d\lambda.
\end{equation}
It follows from Propositions \ref{Leibnizintegralrule}, \ref{derissmoothfunctional} and \ref{supportsprop} that $\mathcal{H}_lY \in \sX^{l+1}$. When no confusion can arise, we drop the subscript $l$ and denote all these maps by $\mathcal{H}$. These maps fit into the following diagram:
 \begin{equation}
 \begin{tikzcd}[column sep = large]
         \ldots \ar[r]  &\mathcal{X}^2 \ar[r,"\delta"] \ar[ld,"\mathcal{H}_2"'] &\mathcal{X}^1 \ar[r,"\delta"] \ar[ld,"\mathcal{H}_1"'] &\mathcal{F} \ar[r,"\iota^*"] \ar[ld,"\mathcal{H}_0"'] &\mathcal{F}_S \ar[ld,"\mathcal{H}_{-1}"'] \ar[r] &0 \\
         \ldots \ar[r]  &\mathcal{X}^2 \ar[r,"\delta"] &\mathcal{X}^1 \ar[r,"\delta"] &\mathcal{F} \ar[r,"\iota^*"] &\mathcal{F}_S \ar[r] &0 
 \end{tikzcd}
 \end{equation}
\begin{prop}\label{propostionhomoperator}
The maps $\{\mathcal{H}_l\}_{l=-1}^\infty$ define a chain homotopy between $\mathbbm{1}$ and $0$, that is
\begin{equation}\label{homotopyoperator}
    \mathbbm{1}= \delta \mathcal{H} + \mathcal{H} \delta
\end{equation}
in degrees $> 0$, and
\begin{equation}\label{homotopyoperatordegree0}
  \mathbbm{1}= \delta \mathcal{H} + \gamma_0^*  
\end{equation}
in degree $0$.
\end{prop}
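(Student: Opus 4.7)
The plan is to establish both \eqref{homotopyoperator} and \eqref{homotopyoperatordegree0} by direct calculation, relying on three elementary algebraic identities: $P\alpha = \mathbbm{1}$, $P\gamma_\lambda\varphi = \lambda P\varphi$ (which follows from the previous identity and the definition of $\gamma_\lambda$), and $\tfrac{d}{d\lambda}\gamma_\lambda\varphi = \alpha P\varphi$. Proposition \ref{Leibnizintegralrule} justifies differentiating under the integral sign throughout, so the content of the argument is essentially combinatorial.

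For \eqref{homotopyoperatordegree0}, I first compute
$$\delta\mathcal{H}_0 F(\varphi) = \mathcal{H}_0 F(\varphi)\{P\varphi\} = \int_0^1 F^{(1)}(\gamma_\lambda\varphi)\{\alpha P\varphi\}\, d\lambda.$$
Since $\tfrac{d}{d\lambda}\gamma_\lambda\varphi = \alpha P\varphi$, the chain rule identifies the integrand with $\tfrac{d}{d\lambda}F(\gamma_\lambda\varphi)$, so the fundamental theorem of calculus yields $F(\varphi) - F(\gamma_0\varphi) = F(\varphi) - \gamma_0^*F(\varphi)$, giving the identity in degree zero.

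For \eqref{homotopyoperator} in degree $l\geq 1$, let $Y\in\mathcal{X}^l$ and evaluate both $\delta\mathcal{H}_l Y$ and $\mathcal{H}_{l-1}\delta Y$ on test sections $(h_1,\ldots,h_l)$. Expanding $\delta\mathcal{H}_l Y(\varphi)\{h_1,\ldots,h_l\} = \mathcal{H}_l Y(\varphi)\{P\varphi, h_1,\ldots,h_l\}$ via the definition of $\mathcal{H}_l$ splits off the $i=1$ term
$$\int_0^1 Y^{(1)}(\gamma_\lambda\varphi)\{\alpha P\varphi; h_1,\ldots,h_l\}\lambda^l\, d\lambda,$$
and an alternating sum over $i\geq 2$ in which $\alpha h_j$ plays the role of the functional-derivative direction while $P\varphi$ sits in the distributional slots. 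For $\mathcal{H}_{l-1}\delta Y$, the product rule gives
$$(\delta Y)^{(1)}(\varphi)\{\psi; g_1,\ldots,g_{l-1}\} = Y^{(1)}(\varphi)\{\psi; P\varphi, g_1,\ldots,g_{l-1}\} + Y(\varphi)\{P\psi, g_1,\ldots,g_{l-1}\}.$$
Inserted into the definition of $\mathcal{H}_{l-1}$, the first piece (after using $P\gamma_\lambda\varphi = \lambda P\varphi$ to promote $\lambda^{l-1}$ to $\lambda^l$) cancels exactly the $i\geq 2$ contributions to $\delta\mathcal{H}_l Y$. For the second piece, $P\alpha h_j = h_j$ and antisymmetry of $Y(\gamma_\lambda\varphi)$ convert each of the $l$ surviving summands into $Y(\gamma_\lambda\varphi)\{h_1,\ldots,h_l\}$, yielding $l\int_0^1 Y(\gamma_\lambda\varphi)\{h_1,\ldots,h_l\}\lambda^{l-1}\,d\lambda$.

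Summing the surviving pieces, the total integrand becomes
$$Y^{(1)}(\gamma_\lambda\varphi)\{\alpha P\varphi; h_1,\ldots,h_l\}\lambda^l + l\, Y(\gamma_\lambda\varphi)\{h_1,\ldots,h_l\}\lambda^{l-1} = \frac{d}{d\lambda}\bigl(Y(\gamma_\lambda\varphi)\{h_1,\ldots,h_l\}\lambda^l\bigr),$$
and the fundamental theorem of calculus produces $Y(\varphi)\{h_1,\ldots,h_l\}$; the lower boundary at $\lambda=0$ vanishes thanks to the $\lambda^l$ factor since $l\geq 1$. The main obstacle is bookkeeping: carefully managing the alternating signs in the definition of $\mathcal{H}_l$, applying the product rule to $\delta Y$, and recognising that antisymmetry of $Y$ delivers exactly the factor $l$ needed to combine with the derivative of $\lambda^l$. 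There are no further analytic difficulties, since differentiation of smooth maps on locally convex spaces commutes with the integrals in question by Proposition \ref{Leibnizintegralrule}.
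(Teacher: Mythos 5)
Your proof is correct and follows essentially the same approach as the paper: both reduce to the fundamental theorem of calculus applied to $\lambda \mapsto \lambda^l\, Y(\gamma_\lambda\varphi)\{h_1,\ldots,h_l\}$, combined with the identities $P\alpha=\mathbbm{1}$ and $P\gamma_\lambda\varphi=\lambda P\varphi$, antisymmetry of $Y$, and the Leibniz rule applied to $\delta Y$. The only difference is expository — you expand $\delta\mathcal{H}$ and $\mathcal{H}\delta$ first and then recognize the surviving integrand as a total derivative, whereas the paper states the FTC identity up front and matches the two expansions against it — but the underlying calculation is identical.
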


\begin{proof}
Let $X \in \mathcal{X}^k(M)$, where we allow $k=0$ as well. Let $\varphi \in \sE$ and $h_1,\ldots, h_k\in \sE^!$. From the fundamental theorem of calculus applied to the function
\begin{equation*}
   \lambda \mapsto X(\gamma_\lambda \varphi)\{\lambda h_1,\ldots, \lambda h_k\},
\end{equation*}
we note that
\begin{multline}\label{fundth}
   X(\varphi)\{h_1,\ldots,h_k\} - X(\gamma_0\varphi)\{0,\ldots,0\} \\
    =\int_0^1 X^{(1)}(\gamma_\lambda \varphi)\{\alpha P \varphi; h_1,\ldots h_k\}\lambda^k d\lambda + k\int_0^1X(\gamma_\lambda \varphi)\{h_1,\ldots,h_k\} \lambda^{k-1} d\lambda
\end{multline}

In particular, if $k=0$, this is exactly equation \eqref{homotopyoperatordegree0}. Hence we take $k>0$ in what follows. We calculate, for $g\in \sE$:
\begin{equation*}
    \frac{\delta}{\delta \varphi} (\delta X(\varphi)))\{g\} = X^{(1)}(\varphi)\{g;P\varphi, \, \_\,\} + X(\varphi)\{Pg,\,\_\,\}.  
\end{equation*}
Feeding this into $\mathcal{H}$, we find that 
\begin{dmath}\label{Hdelta}
    \mathcal{H}\delta X (\varphi)\{h_1,\ldots h_k\} = \sum_{i=1}^{k} (-1)^{i-1}\int_0^1 (\delta X)^{(1)}(\gamma_\lambda \varphi)\{\alpha h_i;h_1, \ldots,\widehat{h_i},\ldots,h_{k}\}\lambda^{k-1} d\lambda
    =\sum_{i=1}^k (-1)^{i-1}\int_0^1 X^{(1)}(\gamma_\lambda \varphi)\{\alpha h_i;P\gamma_\lambda\varphi,h_1,\ldots,\widehat{h_i},\ldots,h_k\} \lambda^{k-1} d\lambda
    +\sum_{i=1}^k (-1)^{i-1}\int_0^1 X(\gamma_\lambda \varphi)\{P\alpha h_i,h_1, \ldots,\widehat{h_i},\ldots,h_{k}\}\lambda^{k-1}d\lambda
    = \sum_{i=1}^k (-1)^{i-1} \int_0^1 X^{(1)}(\gamma_\lambda \varphi)\{\alpha h_i;P\varphi,h_1,\ldots,\widehat{h_i},\ldots,h_k\} \lambda^{k} d\lambda
    +k\int_0^1 X(\gamma_\lambda \varphi)\{h_1 \ldots,h_{k}\}\lambda^{k-1}d\lambda.
\end{dmath}
In the last step, we have used that 
\begin{align*}
    P\gamma_\lambda \varphi &=  \lambda P\varphi,
\end{align*}
For the other term, we calculate
\begin{dmath}\label{deltaH}
    \delta \mathcal{H}X (\varphi)\{h_1,\ldots,h_k\} = \mathcal{H}X(\varphi)\{P\varphi,h_1,\ldots,h_k\} 
    = \int_0^1 X^{(1)}(\gamma_\lambda \varphi)\{\alpha P\varphi;h_1,\ldots,h_k\}\lambda^k d\lambda
    +\sum_{i=1}^k (-1)^i \int_0^1 X^{(1)}(\gamma_\lambda \varphi)\{\alpha h_i;P\varphi,h_1,\ldots,\widehat{h_i},\ldots,h_k\}\lambda^k d\lambda.
\end{dmath}
At this point the proposition follows from combining equations \eqref{fundth}, \eqref{Hdelta} and \eqref{deltaH}, noting that the alternating factors cancel exactly.
\end{proof}

\begin{corollary}
        The complex $\left(\mathcal{X}_{}^\bullet(E),\delta\right)$ is a resolution of the space of on-shell functionals $\mathcal{F}_S(E)$. 
\end{corollary}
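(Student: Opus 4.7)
The plan is to derive exactness of the complex $\ldots \to \mathcal{X}^2 \to \mathcal{X}^1 \to \mathcal{F} \to \mathcal{F}_S \to 0$ at every position as a formal consequence of the chain homotopy identities established in Proposition \ref{propostionhomoperator}, together with a small check that $\iota^*$ is surjective.

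First I would verify surjectivity of $\iota^*$. Given $G \in \mathcal{F}_S$, I propose $\mathcal{H}_{-1}G = G \circ \gamma_0$ as a preimage. Since $\gamma_0\varphi = \varphi$ whenever $P\varphi = 0$, the composition $\gamma_0 \circ \iota$ is the identity on $\textup{Sol}$, so $\iota^* \mathcal{H}_{-1}G = G$. The nontrivial point is that $\mathcal{H}_{-1}G$ must lie in $\mathcal{F}$, i.e.\ be compactly supported. This is handled by essentially the same argument as in Proposition \ref{supportsprop}: the support of $G \circ \gamma_0$ is contained in $J\supp(G) \cap \supp(d\theta)$, which is the intersection of a spatially compact set with a temporally compact set and therefore compact by Lemma 1.9 in \cite{Baer2015}.

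For exactness at $\mathcal{F}$, suppose $F \in \ker(\iota^*)$. Then $\gamma_0^* F = \mathcal{H}_{-1}(\iota^* F) = 0$, and the degree-$0$ homotopy identity \eqref{homotopyoperatordegree0} yields $F = \delta \mathcal{H}_0 F$, which exhibits $F$ as lying in the image of $\delta: \mathcal{X}^1 \to \mathcal{F}$. The reverse inclusion $\delta\mathcal{X}^1 \subset \ker(\iota^*)$ is already noted in the text following \eqref{differential}. For exactness at $\mathcal{X}^k$ with $k \geq 1$, let $X \in \ker(\delta)$; the identity \eqref{homotopyoperator} immediately gives $X = \delta \mathcal{H} X + \mathcal{H} \delta X = \delta \mathcal{H} X$, so $X$ lies in the image of $\delta$. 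Nilpotence $\delta^2 = 0$ was noted directly after equation \eqref{differential} and closes the circle.

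The genuine obstacles have already been overcome in the preceding propositions: the construction of $\alpha$ out of the advanced and retarded Green's functions (which required the support properties of $\theta$), the verification that $\mathcal{H}$ lands in compactly supported smooth multivector fields (via Propositions \ref{Leibnizintegralrule}, \ref{derissmoothfunctional} and \ref{supportsprop}), and the homotopy identity itself. Once those are in hand, the corollary reduces to the elementary algebraic bookkeeping above, with the compact-support check for $\mathcal{H}_{-1}G$ being the only new observation needed.
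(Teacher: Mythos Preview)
Your proposal is correct and is precisely the standard deduction the paper leaves implicit: exactness follows formally from the homotopy identities of Proposition \ref{propostionhomoperator}, with the only additional point being compact support of $\mathcal{H}_{-1}G$, which you handle by the argument of Proposition \ref{supportsprop}. The paper states this as an immediate corollary without further comment, so there is nothing to compare beyond the fact that you have written out the routine algebra explicitly.
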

\subsection{Time-slice axiom}
A desirable (perhaps even defining) property of an algebraic quantum field theory is the time-slice axiom. Morally speaking, this axiom states that the observables relating to a spacetime $M$ can be localised arbitrarily close to a Cauchy surface of that spacetime. More concretely, if $N\subset M$ is a region in spacetime, then we can define
\begin{equation*}
    \sX(N) = \{ F \in \sX(M) \: | \: \supp(F) \subset N \}.
\end{equation*}
We have dropped the bundle $E$ from the notation for now. There is an obvious inclusion map $i:\sX(N)\to \sX(M)$. An AQFT on $M$ satisfies the time-slice axiom if $i$ is an isomorphism when $N$ contains a Cauchy surface of $M$. In our present differential graded context, we impose instead that it is a quasi-isomorphism. The resolution from the previous section can be used to prove the time-slice axiom for smooth compactly supported functionals.
\begin{theorem}\label{quasiinverse}
If $N\subset M$ is a region containing a Cauchy surface for $M$, then the inclusion map $i:\sX(N) \to \sX(M)$ is a quasi-isomorphism. 
\end{theorem}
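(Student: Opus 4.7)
The plan is to construct a chain-homotopy inverse to $i$ using the operators $\mathcal{H}$ and $\gamma_0^*$ of the previous subsection, for a choice of $\theta$ tailored to $N$. First, since $N$ contains a Cauchy surface for $M$, I can pick two Cauchy surfaces $\Sigma_\pm \subset N$ with $\Sigma_-$ strictly in the past of $\Sigma_+$, and choose $\theta\in \sE(M)$ equal to $0$ on $J^-(\Sigma_-)$, equal to $1$ on $J^+(\Sigma_+)$, and smoothly interpolating in between; then $\supp(d\theta)$ is a temporally compact slab contained in $N$. With this $\theta$, construct $\alpha$, $\gamma_\lambda$, and $\mathcal{H}$ as in Section \ref{KoszulComplex}. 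The crucial consequence of Proposition \ref{supportsprop} is that $\supp(\gamma_0^* F) \subset J\supp(F) \cap \supp(d\theta) \subset N$ for every $F\in\sF(M)$, and this intersection is compact, so $\gamma_0^*$ factors as $i\circ\bar\pi_0$ for a map $\bar\pi_0:\sF(M)\to \sF(N)$. Extending by zero in higher degrees yields a chain map $\bar\pi:\sX^\bullet(M)\to\sX^\bullet(N)$, since $(\gamma_0^* \delta X)(\varphi)=X(\gamma_0\varphi)\{P\gamma_0\varphi\}=0$ on $\sX^1$ because $\gamma_0\varphi\in\textup{Sol}$.

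The homotopy identity of Proposition \ref{propostionhomoperator} then reads $\mathbbm{1}_{\sX(M)}-i\bar\pi = \delta\mathcal{H}+\mathcal{H}\delta$, which upon passing to homology yields $H_\bullet(i)\circ H_\bullet(\bar\pi)=\mathbbm{1}_{H_\bullet(\sX(M))}$. This gives one half of the claimed quasi-isomorphism. For the other half one must show $\bar\pi\circ i\sim \mathbbm{1}_{\sX(N)}$, and the natural candidate is the restriction of $\mathcal{H}$ to $\sX(N)$, provided $\mathcal{H}$ preserves this subcomplex. I expect this last step to be the main obstacle. Examining formula \eqref{defhomotopyoperator}, the \emph{functional} support of $\mathcal{H}_l Y$ is controlled by $\supp(Y)\cup(\supp(d\theta)\cap J\supp(Y))$, which lies inside $N$ when $Y\in\sX(N)$ (by the same argument as in the proof of Proposition \ref{supportsprop}, applied to the derivative). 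However, the \emph{distributional} spacetime support of $\mathcal{H}_l Y(\varphi)$ in each of its $l+1$ variables is \emph{a priori} only bounded by $J\supp(Y)$, which can exceed $N$ — indeed $J(N)=M$ since $N$ contains a Cauchy surface.

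Resolving this will require a refinement: the idea would be to replace $\mathcal{H}_l Y$ by a $\delta$-equivalent $(l+1)$-vector field confined to $N$, exploiting the causal structure of $\alpha = \Delta^A(1-\theta)+\Delta^R\theta$ and the fact that the ``excess'' support outside $N$ is a solution-like artefact that can be cancelled by a boundary. Concretely, for $Y\in\sX(N)$ one decomposes the test sections $h_i$ according to the past/future partition supplied by $\theta$ and uses the support properties of $\Delta^{A/R}$ to show that the piece of $\mathcal{H}_l Y$ living outside $N$ is itself in the image of $\delta$, so that a corrected homotopy $\tilde{\mathcal{H}}$ satisfies both the homotopy identity and $\tilde{\mathcal{H}}(\sX(N))\subset \sX(N)$. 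Once this technical point is established the theorem follows, since $\tilde{\mathcal{H}}$ then witnesses $\bar\pi\circ i \sim \mathbbm{1}_{\sX(N)}$ and combining the two directions gives that $i$ is a quasi-isomorphism.
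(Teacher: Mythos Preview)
Your setup --- the choice of $\theta$, the chain map $\bar\pi=(\gamma_0^*,0,0,\ldots)$, and the homotopy identity $\mathbbm{1}-i\bar\pi=\delta\mathcal{H}+\mathcal{H}\delta$ on $\sX(M)$ --- is exactly what the paper does. Where you diverge is in attempting the reverse direction $\bar\pi\circ i\sim\mathbbm{1}_{\sX(N)}$ by hand. The paper never does this: having just proved the Corollary that $\sX^\bullet$ is a resolution, it knows $H_k(\sX(M))=0$ for $k\geq1$, and the same result applied to $N$ (which is a globally hyperbolic spacetime in its own right) gives $H_k(\sX(N))=0$. Thus $i_*$ is trivially an isomorphism in positive degrees, and the single identity $i_*\Theta_*=\mathrm{id}$ in degree~$0$, together with $H_0\cong\sF_S$ on both sides and $\mathrm{Sol}(M)\cong\mathrm{Sol}(N)$, is enough. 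That is the content of ``nothing more to check''.

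Your direct route would also succeed, but the obstacle you describe is illusory. Your bound $\pi_1\supp(\mathcal{H}_lY(\varphi))\subset J\supp(Y)$ is far too crude: it ignores the structure of $\alpha=\Delta^A(1-\theta)+\Delta^R\theta$. Take $h$ supported near a point $x$ in the future of $\Sigma^+$, so $\theta\equiv1$ there; then $\alpha h=\Delta^R h$ is supported in $J^+(x)$ only --- it propagates \emph{further away} from the slab, not back towards it. If $x\notin N$ and $N$ is causally convex, then $J^+(x)\cap N=\emptyset$: any $y\in J^+(x)\cap N$ together with some $p\in\Sigma^+\subset N$ in the causal past of $x$ would place $x$ causally between $p$ and $y$, forcing $x\in N$. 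The symmetric argument handles $x$ in the past of $\Sigma^-$, and points in the intermediate slab already lie in $N$. So $\mathcal{H}$ \emph{does} preserve $\sX(N)$, and the elaborate ``$\delta$-correction'' you sketch is neither needed nor, as written, a proof.
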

\begin{proof}
The proof is an adaptation of a standard argument to the functional formalism. Fix two Cauchy surfaces $\Sigma^+$ and $\Sigma^-$, both contained in $N$, such that $\Sigma^+ \subset I^+(\Sigma^-)$. Take a function $\theta \in \mathcal{E}(M)$ which is identically 1 to the future of $\Sigma^+$ and identically 0 to the past of $\Sigma^-$. This means in particular that $d\theta = 0$ outside of $N$.

We claim that the maps
\begin{align*}
    &\Theta_0 = \gamma_0^* \\
    &\Theta_k = 0 , \ k>0            
\end{align*}
form a quasi-inverse to $i$, where $\gamma_0$ is defined with respect to the $\theta$ just chosen.
It follows from Lemma \ref{supportsprop} that 
\begin{equation*}
    \supp(\gamma_0^*F) \subset J \supp(F) \cap \supp(d\theta) \subset N,
\end{equation*}
so that 
\begin{equation*}
    \gamma_0^* : \sF(M)\to \sF(N).
\end{equation*}
Obviously $0 \in \sX^k$ is supported in $N$ (as it has empty support), so that $\Theta: \sX(M)\to \sX(N)$. We note that $\gamma_0^* \circ \delta =0$ so that $\Theta$ is a chain map. Furthermore, in Proposition \ref{propostionhomoperator}, we showed that 
\begin{equation*}
    F-\gamma_0^*F = \delta \mathcal{H} F
\end{equation*}
 so that $F$ and $\gamma_0^*F$ are equivalent in homology. As the homology of the complex of multivector fields is trivial in higher degrees, there is nothing more to check.
\end{proof}

\section{Undesirable features of microcausal functionals}\label{pathologicalfeatures}
So far we have considered the space of \textbf{all} smooth functionals on $\sE$ to describe a physical system. However, this class is too large for the purposes of field theory. In defining algebraic structures, like the Poisson bracket or the $\star$-product, we pair the derivatives of two functionals with a distribution. For example, the Poisson bracket of a classical field theory is formally given by
\begin{equation}\label{PoissonBracket}
    \{F,G\}(\varphi) = \Delta\left(F^{(1)}(\varphi) \otimes G^{(1)}(\varphi)\right),
\end{equation}
which is in general not well-defined as we are matching distributional indices in this prescription.

It is at this point that microlocal analysis is introduced into the functional formalism, by restricting to the \textit{microcausal} functionals. These are functionals with a prescribed singular structure for their derivatives. This class is small enough to allow the Poisson bracket and the $\star$-product to be well-defined, but at the same time large enough to contain all functionals that are of interest in perturbative treatments of QFT, i.e.\ the multilocal functionals. We employ a slight generalisation to the usual concept, to allow for  using the general cone $\mathcal{V}$ we introduced in Section \ref{dynamics}.\footnote{This makes the name microcausal somewhat misleading, as the cone $\mathcal{V}$ need no longer be tied to the causal structure of $M$. One recovers the usual definition when $\mathcal{V}$ is the forward light cone.} 
\begin{definition}
We define a sequence of cones $\Gamma_n \subset \dot T^*(M^n)$ by
\begin{equation*}
    \Gamma_n = \mathcal{V}^{\dot\times n} \cup (-\mathcal{V})^{\dot\times n}.
\end{equation*}
A $k$-vector field $X$ is microcausal if
\begin{equation*}
    \WF\left(X^{(n)}(\varphi)\right) \cap \Gamma_{n+k} = \emptyset  \: \forall \: \varphi \in \mathcal{E}.
\end{equation*}

We denote the set of compactly supported microcausal $k$-vector fields by $\mathcal{X}^k_{\mu c}(E)\subset \mathcal{X}^k(E)$, and the set of compactly supported microcausal functionals by $\mathcal{F}_{\mu c}(E)= \sX^0_{\mu c}(E)$. When no confusion can arise, we will denote them by $\sF_{\mu c}$ and $\sX^k_{\mu c}$.
\end{definition}
A different, more standard, way of phrasing this is to say that $X$ is microcausal iff
\begin{equation*}
    X^{(n)}(\varphi) \in \egc{n+k}(M^{n+k};E^{! \boxtimes n}\boxtimes E^{\boxtimes k}) \: \forall \: n\in\mathbb{N}, \varphi \in \sE,
\end{equation*}
noting that $\Gamma_m = - \Gamma_m$ for all $m$. However, we find it more convenient to view $X^{(n)}(\varphi)$ as a map that extends to $\dg{n+k}$, which is why we favour this definition. We note also that some authors include compact support in the definition of microcausal functionals, but we choose to disentangle the singular behaviour and the support properties so as to be more in line with \cite{Brouder2018}. 
Because of the explicit form of the wavefront set of $\Delta$, given in equation \eqref{wavefrontsetcommutatorfunction}, the pairing in the Poisson bracket in equation \eqref{PoissonBracket} is well-defined for microcausal functionals. 

Even though this definition is standard in the literature, there are several problematic features relating to this class that have thus far gone unnoticed. The first is that the homotopy operator $\mathcal{H}$ from the previous section does not respect the subcomplex of microcausal multivector fields. The problem is that, for $X\in\sX^k_{\mu c}$, an integral of the form
 \begin{equation*}
     \int_0^1 X^{(n+1)}(\gamma_\lambda \varphi) d \lambda
 \end{equation*}
 does not respect wavefront structure in general, as the integrand is not guaranteed to be bounded in $\egc{n+1+k}$ as $\lambda$ varies. It may therefore happen that unwanted singularities pop up when performing the integration in equation \eqref{defhomotopyoperator}. Ultimately, this is an obstruction to proving the time-slice axiom for microcausal functionals using the customary method. A detailed counterexample to this fact will appear in \cite{Thesis}.

The second, more serious, problematic feature is that the Poisson-bracket of two microcausal functionals is not smooth in general, or even continuous. When attempting to calculate the derivative of $\{F,G\}$, it is tempting to use the Leibniz rule to conclude that
\begin{equation}\label{derivativeofbracket}
    \{F,G\}^{(1)}(\varphi)\{h\} \overset{?}{=} \Delta\left(F^{(2)}(\varphi)\{h,\_\}\otimes G^{(1)}(\varphi)\right) + \Delta\left(F^{(1)}(\varphi),G^{(2)}(\varphi)\{h,\_\}\right)
\end{equation}
The right-hand side of this equation is well defined, and is given by pairing $h$ with an element of $\egc{1}$, due to the fact that $F$ and $G$ are microcausal, see e.g.\ Section 3 \cite{Brunetti2019}. However, despite the fact that there exists an obvious candidate for the derivative, it is not guaranteed that it matches
\begin{equation*}
    \lim_{t\to 0} \frac{1}{t}\left( \{F,G\}(\varphi +t h)-\{F,G\}(\varphi )\right),
\end{equation*}
or indeed that the limit even exists. While $F^{(1)}$ and $G^{(1)}$ take values in the subspace $\egc{1}(M)\subset \sE'(M)$, they are not assumed to be smooth with respect to the  topology on that space. Whilst $\{F,G\}$ is well defined as a mapping $\sE \to \mathbb{C}$, it is \textit{a priori} not clear whether $\{F,G\}$ is even continuous. This turns out to not be true in general, and in Propositions~\ref{Regular} and \ref{Counterexample} we give an explicit counterexample. For concreteness, we choose to take a trivial bundle $E$, so that $\sE = \sE(M)$ for the remainder of this section. 

A \textbf{regular} functional is a smooth, compactly supported functional $F:\sE \to \mathbb{C}$, such that $F^{(n)}(\varphi) \in \mathcal{D}(M^n)$ for all $ \varphi \in \sE$. In particular, regular functionals are microcausal (whether one includes compact support in that definition or not). Suppose further that we have an element $W \in \sD'(M\times M)$. We define a paring of regular functionals as follows:
\begin{equation}\label{Wbracket}
    \{F,G\}_W (\varphi) = W(F^{(1)}(\varphi) \otimes G^{(1)}(\varphi)),
\end{equation}
which is well-defined as both $F^{(1)}$ and $G^{(1)}$ are elements of $\sD(M)$. We first simplify the problem by rephrasing it in terms of maps with finite dimensional domain.
\begin{definition}\label{Regular map}
A smooth map $A:\mathbb{R}^n \to \sE'(M)$ is \textbf{regular} if for any multi-index $\alpha\in\mathbb{N}^n$ and $\xi\in\mathbb{R}^n$,
\begin{equation*}
     \partial^\alpha A(\xi)\in \sD(M),
\end{equation*}
and 
$\bigcup_{\xi\in \mathbb{R}^n} \supp A(\xi)$ is compact.
\end{definition}

\begin{prop}
\label{Regular}
    If the functional $\{F,G\}_W$ is smooth for all regular functionals $F$ and $G$, then the map 
    \begin{equation*}
        W(A\otimes B) : (\xi,\zeta) \mapsto W(A(\xi)\otimes B(\zeta))
    \end{equation*}
     is smooth for all regular maps $A$ and $B$.
\end{prop}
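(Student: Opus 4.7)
The plan is to reduce smoothness of $V(x,y):=W(A(x)\otimes B(y))$ to the hypothesised smoothness of $\{F,G\}_W$ via the chain rule. Specifically, I will construct an affine (hence smooth) map $\Phi:\mathbb{R}^{2n}\to\sE$ and two regular functionals $F,G\in\sF$ such that $F^{(1)}(\Phi(x,y))=A(x)$ and $G^{(1)}(\Phi(x,y))=B(y)$ as elements of $\sD(M)$. Then $V=\{F,G\}_W\circ\Phi$ is smooth as a composition of smooth maps.

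For the construction, regularity of $A$ and $B$ makes $K_A:=\overline{\bigcup_x\supp A(x)}$ and $K_B:=\overline{\bigcup_y\supp B(y)}$ compact. Choose two disjoint non-empty open sets $V_1,V_2\subset M\setminus(K_A\cup K_B)$, and select dual families of test functions $\eta_i,\mu_i\in\sD(V_1)$ with $\int\eta_i\mu_k=\delta_{ik}$ and analogously $\zeta_j,\nu_j\in\sD(V_2)$. Put $\pi_1(\varphi)_i=\int\eta_i\varphi$ and $\sigma_1(x)=\sum_ix_i\mu_i$, so $\pi_1\sigma_1=\mathrm{id}$ and $\pi_1\sigma_2=0$ by $V_1\cap V_2=\emptyset$; define $\pi_2,\sigma_2$ analogously. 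Set $\Phi(x,y)=\sigma_1(x)+\sigma_2(y)$ and
\[
F(\varphi)=\langle A(\pi_1(\varphi)),\,\varphi-\sigma_1(\pi_1(\varphi))\rangle,\qquad G(\varphi)=\langle B(\pi_2(\varphi)),\,\varphi-\sigma_2(\pi_2(\varphi))\rangle.
\]

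The remaining steps are routine. First I would check that $F,G$ are regular: they are smooth as compositions of smooth maps with the jointly continuous pairing $\sD(M)\times\sE\to\mathbb{C}$, their spacetime supports lie inside $K_A\cup V_1$ and $K_B\cup V_2$ respectively, and each functional derivative expands into tensor products of the $\eta_i,\mu_i$ with the test functions $\partial^\alpha A(\pi_1(\varphi))$, so $F^{(n)}(\varphi),G^{(n)}(\varphi)\in\sD(M^n)$. The chain rule then yields
\[
F^{(1)}(\varphi)\{h\}=\langle A'(\pi_1(\varphi))\pi_1(h),\,\varphi-\sigma_1(\pi_1(\varphi))\rangle+\langle A(\pi_1(\varphi)),\,h-\sigma_1(\pi_1(h))\rangle.
\]
At $\varphi=\Phi(x,y)$ one has $\pi_1(\varphi)=x$ and $\varphi-\sigma_1(\pi_1(\varphi))=\sigma_2(y)\in\sD(V_2)$. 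The first term then vanishes because $A'(x)\pi_1(h)\in\sD(K_A)$ while $V_2\cap K_A=\emptyset$; in the second term the correction $\sum_i\pi_1(h)_i\langle A(x),\mu_i\rangle$ vanishes because $\supp\mu_i\subset V_1$ is disjoint from $K_A$. Hence $F^{(1)}(\Phi(x,y))=A(x)$, and the argument for $G$ is identical. Consequently $\{F,G\}_W(\Phi(x,y))=W(A(x)\otimes B(y))=V(x,y)$, and smoothness of $V$ follows from the hypothesis together with smoothness of $\Phi$.

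The main subtlety is precisely this matching identity $F^{(1)}(\Phi(x,y))=A(x)$. A naive choice like $F(\varphi)=\langle A(\pi_1(\varphi)),\varphi\rangle$ produces an unwanted extra term $\langle A'(\pi_1(\varphi))\pi_1(h),\varphi\rangle$ that does not vanish at $\varphi=\Phi(x,y)$; correcting it by an antiderivative $f(\pi_1(\varphi))$ would require the $1$-form $\sum_i\langle A(x),\mu_i\rangle\,dx^i$ on $\mathbb{R}^n$ to be closed, which fails for generic $A$. The disjoint-support trick—placing the probes $\eta_i,\mu_i$ entirely outside $K_A$—makes all pairings of $A$ with them vanish identically, finessing this cohomological obstruction and letting the reduction go through.
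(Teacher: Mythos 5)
Your proof is correct and takes essentially the same approach as the paper: construct an affine map $\Phi$ and regular functionals $F,G$ built by pairing $A(\pi_1(\varphi))$ with $\varphi$, then chain rule. However, your final paragraph is wrong about the ``naive choice'' $F(\varphi)=\langle A(\pi_1(\varphi)),\varphi\rangle$ failing, and this is worth correcting because it is precisely what the paper does. At $\varphi=\Phi(x,y)=\sigma_1(x)+\sigma_2(y)$, the unwanted term $\langle A'(x)\pi_1(h),\Phi(x,y)\rangle$ vanishes by exactly the same disjoint-support reasoning you already invoke one sentence earlier: $A'(x)\pi_1(h)$ is supported in $K_A$, while $\Phi(x,y)$ is supported in $V_1\cup V_2\subset M\setminus K_A$. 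So there is no cohomological obstruction to avoid; the $1$-form $\sum_i\langle A(x),\mu_i\rangle\,dx^i$ you worry about is identically zero once the $\mu_i$ live outside $K_A$. Your subtraction $\varphi-\sigma_1(\pi_1(\varphi))$ is therefore harmless but unnecessary; the paper's simpler $F(\varphi)=A(\vec\beta\varphi)\{\varphi\}$ already works, with the cancellation coming from $\partial^\alpha A(x)\{\varphi_i\}=0$ (their eq.\ labelled \texttt{phisinkernel}).
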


\begin{proof}
    Let $A:\mathbb{R}^n\to \sE'(M)$ and $B:\mathbb{R}^m\to \sE'(M)$ be regular maps. The proof proceeds by constructing two regular functionals that `lift' $A$ and $B$ to $\sE$ in an appropriate fashion. We choose $\varphi_1,\ldots,\varphi_n, \tilde{\varphi}_1,\ldots, \tilde{\varphi}_m\in \sE$ linearly independent, satisfying
    \begin{equation*}
        \supp(\varphi_i)\cap\left(\bigcup_{\xi\in\mathbb{R}^n} \supp(A(\xi)) \cup \bigcup_{\zeta\in\mathbb{R}^m} \supp(B(\zeta))\right) = \emptyset,
    \end{equation*}
and similar for $\tilde{\varphi_j}$, which is possible by the assumption of compact support on $A$ and $B$. This implies in particular that
\begin{equation}\label{phisinkernel}
    \partial^\alpha A(\xi)\{\varphi_i\}=\partial^\alpha A(\xi)\{\tilde{\varphi}_i\}=0,
\end{equation}
for any multi-index $\alpha\in\mathbb{N}^n$, and similarly for $B$. Furthermore, we choose $\beta_i,\tilde{\beta}_j \in \sD(M)$ for $i=1,\ldots, n$ and $j=1,\ldots,m$, normalized such that
\begin{align*}
    &\beta_i(\varphi_j) =\delta_{ij},\\
    &\tilde{\beta}_i(\tilde{\varphi_j})=\delta_{ij}, \\
    &\beta_i(\tilde{\varphi}_j)= 0 = \tilde{\beta}_i(\varphi_j),
\end{align*}
where we view the $\beta$'s as distributions by a slight bit of abusive notation. This is possible through a Gram–Schmidt orthonormalisation procedure. This defines a continuous linear map
\begin{align*}
    \Vec{\beta}:\:    &\sE \to \mathbb{R}^n, \\
                    &\varphi \mapsto (\beta_i(\varphi))_{i=1}^n.
\end{align*}
We define $\vec{\tilde{\beta}}$ similarly. Finally, we define the map
\begin{align*}
    \Phi:\:   &\mathbb{R}^n \times \mathbb{R}^m \to \sE, \\
            &(\xi,\zeta) \mapsto \sum_{i=1}^n \xi_i\varphi_i + \sum_{j=1}^m \zeta_i\tilde{\varphi}_j,
\end{align*}
which is smooth, and we note that 
\begin{align*}
    \vec{\beta}\Phi (\xi,\zeta) = \xi, \\
    \vec{\tilde\beta}\Phi (\xi,\zeta) = \zeta,
\end{align*}
from the normalisation of the $\beta_i$ and $\tilde\beta_j$.

Now set
\begin{align*}\label{definition}
    F(\varphi) = A(\Vec{\beta}\varphi)\{\varphi\},\\ 
   G(\varphi) = B(\Vec{\tilde{\beta}}\varphi)\{\varphi\}.
\end{align*}
Both $F$ and $G$ are smooth functionals by the chain rule. We show that $F$ is regular, the case for $G$ being similar. The first derivative of $F$ is
\begin{dmath}\label{derofF}
    F^{(1)}(\varphi)\{h\} = 
    A(\vec{\beta}\varphi)\{h\} +
    \sum_{i=1}^m \left(\beta_i(h)\partial_i A\right)(\vec{\beta}\varphi)\{\varphi\}.
    \end{dmath}
From this expression, we see that $F^{(1)}(\varphi)\in \mathcal{D}(M)$, as $h$ gets smeared against either $\beta_i$ or $A(\vec{\beta}\varphi)$, both of which are smooth functions. At higher order, an induction argument shows that 
\begin{equation*}
    F^{(k)}(\varphi)\{h^{\otimes k}\} = \left[\left(\sum_{i=1}^n \beta_i (h) \partial_i\right)^{ k} A\right](\vec\beta \varphi)\{\varphi\} + (k-1)\left[\left(\sum_{i=1}^n \beta_i (h) \partial_i\right)^{k-1} A\right](\vec\beta \varphi)\{ h\}.
\end{equation*}
Here $h$ gets smeared against either $\beta_i$ or some partial derivative of $A$, both of which are smooth functions. Hence $F$ is a regular functional. 

Now, by hypothesis, $\{F,G\}_W$ is a smooth functional. Furthermore, we note that
\begin{equation*}
    F^{(1)}(\Phi (\xi,\zeta))\{h\} = A(\xi)\{h\}+ \sum_{i=1}^m \beta_i(h)\partial_i A(\xi)\left\{\Phi(\xi,\zeta)\right\} = A(\xi)\{h\}
\end{equation*}
by equation \eqref{phisinkernel}, so that
\begin{align*}
    F^{(1)} \circ \Phi (\xi,\zeta) = A(\xi),\\
    G^{(1)} \circ \Phi (\xi,\zeta) = B(\zeta).
\end{align*}
Putting things together, we have shown that $W(A\otimes B) = \{F,G\}_W \circ \Phi$, which is smooth by the chain rule.
\end{proof}

\begin{lemma}
\label{IOFT}
If $f\in\sD(\mathbb{R}^n)$,  $K\in\mathbb{N}$, and $|\:\cdot\:|$ is the euclidean norm, then
\begin{equation*}
\tilde f(\xi) = 
\begin{cases}
    \lvert\xi\rvert^{-K}\hat f(\xi/|\xi|^2) & \xi\neq0\\
    0 & \xi=0
\end{cases}
\end{equation*}
defines a smooth function, $\tilde f\in\sE(\mathbb{R}^n)$.
\end{lemma}
\begin{proof}
First, recall that the Fourier transform $\hat f$ is smooth and falls off faster than any power at $\infty$, so $\tilde f$ is clearly continuous over $\mathbb R^n\smallsetminus 0$. At $0$,
\begin{align*}
\lim_{\xi\to0}\tilde f(\xi) = \lim_{\zeta\to\infty} \lvert\zeta\rvert^K \hat f(\zeta) = 0 = \tilde f(0) ,
\end{align*}
so $\tilde f$ is continuous everywhere.

Denote $\zeta := \xi/\lvert\xi\rvert^2$, so that $\tilde f(\xi) = \lvert\xi\rvert^{-K}\hat f(\zeta)$. Differentiating,
\begin{equation*}
\frac{\partial}{\partial\xi^i}\lvert\xi\rvert^{-K} = -K \lvert\xi\rvert^{-K-2}\xi_i = -K \lvert\xi\rvert^{-K}\zeta_i ,
\end{equation*}
and
\begin{equation*}
\frac{\partial \zeta_j}{\partial \xi_i} = -2\lvert\xi\rvert^{-2}\zeta_j\xi_i + \lvert\xi\rvert^{-2} \delta_{ij} = \lvert\zeta\rvert^2 \delta_{ij} - 2\zeta_i\zeta_j .
\end{equation*}
By the product rule and chain rule, 
\begin{align*}
\frac{\partial}{\partial\xi_i}\tilde f(\xi) &= -K \lvert\xi\rvert^{-K}\zeta_i \hat f(\zeta) + \lvert\xi\rvert^{-K} \left(\lvert\zeta\rvert^2 \delta_{ij} - 2\zeta_i\zeta_j\right)\frac{\partial}{\partial\zeta_j}\hat f(\zeta)
\\ &=\tilde{g_i}(\xi) ,
\end{align*}
where
\begin{align*}
g_i(x) &= iK \nabla_i f(x)  + i \nabla^2\bigl(x_i f(x)\bigr) - 2i \nabla_i\nabla_j\left(x_j f(x)\right)  ,
\end{align*}
since the Fourier transform effectively replaces $\zeta_i \to -i\nabla_i$ and $\frac{\partial}{\partial \zeta_i}\to -i x_i$.
Differentiation and multiplication by $x_i$ preserve smoothness and compact support, therefore $g_i\in\sD(\mathbb{R}^n)$.

By induction, this shows that \emph{any} derivative (of any order) of $\tilde f$ is equal to $\tilde h$ for some $h\in\sD(\mathbb R^n)$ and is therefore continuous. Therefore, $\tilde f$ is smooth.
\end{proof}

Despite the seemingly innocuous definition of a regular map, these objects can be quite badly behaved, as the following proposition shows.

\begin{prop}
\label{Counterexample}
    If $\WF(W)\neq 0$ and $n=\dim M$, then there exists a pair of regular maps 
    \begin{equation*}
        A,B:\mathbb{R}^n\to\sD(M)\subset\sE'(M)
    \end{equation*}
    such that $(\xi,\zeta) \mapsto W(A(\xi)\otimes B(\zeta))$ is not continuous.
\end{prop}
\begin{proof}
Let $(x_1,x_2;\eta_1,\eta_2) \in \WF(W)$, and select coordinate charts defined on neighbourhoods $U_i$ of the $x_i$. We recall that the wavefront set on manifolds is defined with respect to an arbitrary choice of coordinates, and that all choices are equivalent. Hence we may identify the $U_i$ with open subsets of $\mathbb{R}^n$, and $T_{x_i}^*M$ with $\mathbb{R}^n$. In order not to overburden the notation, we make these identifications implicitly throughout this proof.

The integral kernel for the Fourier transform on $U_1$ is $e_{\xi}:\mathbb{R}^n\to\mathbb{C}$, defined for  $\xi \in \mathbb{R}^n$ by
\begin{equation*}
    e_{\xi}(x) = \exp(- i\langle \xi,x\rangle).
\end{equation*}
As $(x_1,x_2;\eta_1,\eta_2) \in \WF(W|_{U_1\times U_2})$,
 there are test functions $\chi_i\in D(U_i)$  such that $\chi_i(x_i) \neq 0$, and such that
 \begin{align*}
     \Psi:\:& \mathbb{R}^{2n}\to\mathbb{C},\\
     & (\xi,\zeta) \mapsto W(\chi_1 e_{\xi}\otimes\chi_2 e_{\zeta})
 \end{align*}
 is not of rapid decay on \textbf{any} conic neighbourhood of $(\eta_1,\eta_2)$, where we view $\chi_1 e_{\xi}$ and $\chi_2e_{\zeta}$ as functions on the whole of $M$ by extending by $0$ outside of $U_i$. 
 
 In particular, $\Psi$ is not of rapid decay on $\mathbb{R}^{2n}$, so that there is an $N\in \mathbb{N}$ such that $|(\xi,\zeta)|^N \Psi(\xi,\zeta)$ is unbounded. This in turn implies, by the binomial theorem, that there are $K,L \in \mathbb{N}$ such that $|\xi|^K |\zeta|^L \Psi(\xi,\zeta)$ is unbounded. 

In order to make this unbounded function into a discontinuous function, we turn $\mathbb{R}^n$ inside out and define
\begin{equation*}
A(\xi) = \begin{cases}
      |\xi|^{-K}\chi_1 e_{\xi/|\xi|^2}\: &\xi \neq 0,\\
     0 & \xi= 0,
    \end{cases} \in \sD(M)
\end{equation*}
and
\begin{equation*}
B(\zeta) = \begin{cases}
      |\zeta|^{-L}\chi_2 e_{\zeta/|\zeta|^2}\: &\zeta \neq 0,\\
     0 & \zeta= 0.
    \end{cases}\in \sD(M)
\end{equation*}
The map $(\xi,\zeta) \mapsto W(A(\xi),B(\zeta))$ is unbounded on any neighbourhood of $(0,0)$, and hence cannot be continuous. Now we just need to show that $A$ and $B$ are regular (in the sense of Def.~\ref{Regular map}). We do this for $A$, as the case for $B$ is identical.

Note that for any $\xi\in\mathbb R^n$, $\supp A(\xi) \subseteq \supp \chi_1 \subset U_1$, so the compact support condition is satisfied and we can work within $U_1$. We just need to show that $A$ is smooth as a map into $\sE'(U_1)$.
 It suffices to check this in the weak topology, as sequences in $\sE'(U_1)$ converge weakly if and only if they converge strongly, see e.g.\ Proposition 34.6 of \cite{Treves1967}. For any $f\in\sE(U_1)$ and  $\xi \neq 0$, we calculate, by the definition of Fourier transform,
\begin{align*}\label{uonfunction}
    A(\xi)\{f\} &= 
\begin{cases}
|\xi|^{-K} \widehat{\chi_1 f}\left({\xi}/{|\xi|^2}\right) & \xi\neq0 \\
0 & \xi=0
\end{cases}\\
&= \widetilde{\chi_1f}(\xi)
\end{align*}
in the notation of Lemma~\ref{IOFT}. By Lemma~\ref{IOFT}, this is smooth for any $f\in\sE(M)$, which means that $A$ is weakly smooth and implies that $A$ is strongly smooth. Therefore, $A$ is a regular map.
\end{proof}
 We arrive at the following theorem.
\begin{theorem}
    The bracket $\{F,G\}_W$ is a smooth functional for all regular functionals $F$ and $G$ if and only if $\WF(W)=\emptyset.$
\end{theorem}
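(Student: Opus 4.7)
The plan is to combine the two propositions that immediately precede the theorem to handle the ``only if'' direction, and to handle the ``if'' direction by a direct smooth-composition argument using that distributions with empty wavefront set are precisely smooth functions.

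For the ``only if'' direction, I would argue by contrapositive. Assume $\WF(W)\neq \emptyset$. By the second preceding proposition, there exist regular maps $A:\mathbb{R}^n\to\sE'(M)$ and $B:\mathbb{R}^m\to\sE'(M)$ for which $(x,y)\mapsto W(A(x)\otimes B(y))$ is not continuous, in particular not smooth. The first preceding proposition, applied in its contrapositive form (and inspecting its constructive proof which lifts $A,B$ to regular functionals $F,G$ via the auxiliary maps $\vec\beta$ and $\Phi$), then yields regular $F$ and $G$ such that $\{F,G\}_W$ fails to be smooth.

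For the ``if'' direction, assume $\WF(W)=\emptyset$, so that $W$ is a genuine smooth function on $M\times M$. The plan is to exhibit $\{F,G\}_W$ as a composition of three smooth maps. First, Proposition \ref{derissmoothfunctional} gives that $\varphi\mapsto F^{(1)}(\varphi)$ and $\varphi\mapsto G^{(1)}(\varphi)$ are smooth maps $\sE\to \sE'(M)$. Second, the tensor product of compactly supported distributions is a (jointly) continuous bilinear, hence smooth, map $\sE'(M)\times\sE'(M)\to \sE'(M\times M)$, so $\varphi\mapsto F^{(1)}(\varphi)\otimes G^{(1)}(\varphi)$ is smooth into $\sE'(M\times M)$. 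Third, because $W$ is smooth, it pairs continuously with compactly supported distributions, so $u\mapsto W(u)$ is a continuous linear, hence smooth, functional on $\sE'(M\times M)$. Composing these three smooth maps and applying the chain rule recalled in the background section yields smoothness of $\{F,G\}_W$.

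I expect the main subtlety to be bookkeeping of topologies in the ``if'' direction, specifically checking that the tensor product and the pairing against $W$ are smooth in the strong dual topologies used throughout the text; once this is granted, the argument is routine. The ``only if'' direction is essentially a repackaging of the two preceding propositions, so no genuine obstacle arises there.
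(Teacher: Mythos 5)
Your proposal is correct and follows essentially the same argument as the paper: the ``only if'' direction is obtained exactly by chaining the two preceding propositions (the first in contrapositive form), and the ``if'' direction factors $\{F,G\}_W$ through Proposition~\ref{derissmoothfunctional}, the jointly continuous tensor product on $\sE'(M)$, and pairing with the smooth kernel $W$, just as the paper does by citing Theorem 41.1 of Tr\`eves. The one point you flag as a potential subtlety --- joint continuity of $\otimes$ on $\sE'(M)\times\sE'(M)$ in the strong dual topology --- is exactly what that theorem supplies, since $\sE'(M)$ is a DF-space and separately continuous bilinear maps out of a product of DF-spaces are automatically jointly continuous.
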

\begin{proof}
    We have already shown that, if $\WF(W)\neq\emptyset$, then there exist regular functionals $F$ and $G$ such that $\{F,G\}_W$ is not smooth. Conversely, suppose that $\WF(W)=\emptyset$, which means that $W\in \sE(M^2)$. Hence it can be viewed as a continuous map
    \begin{equation*}
        \sE'(M)\times \sE'(M) \xrightarrow{\otimes} \sE'(M^2) \xrightarrow{W} \mathbb{C},
    \end{equation*}
    recalling that the tensor product is continuous on $\sE'$, see e.g.\ Theorem 41.1 of \cite{Treves1967}. As $F^{(1)}$ and $G^{(1)}$ are smooth functionals into $\sE'(M)$ by Proposition \ref{derissmoothfunctional}, the chain rule implies that $\{F,G\}_W$ is a smooth functional.
\end{proof}

\section{Equicausal multivector fields}\label{equicausal} 
The rest of this paper is devoted to describing a subclass of the microcausal multivector fields that do allow us to perform these operations. Broadly speaking, we need some smoothness condition for the maps 
\begin{equation}
    F^{(n)}: \sE \to \egc{n}
\end{equation}
to ensure that the Leibniz rule holds for the Poisson bracket and $\star$-product. Furthermore, we need these derivatives to be `integrable' along curves $\gamma$ in $\sE$:
\begin{equation}\label{integralreq}
    \int_a^b F^{(n)}\circ \gamma (t) dt \in \egc{n}
\end{equation}
to ensure that the homotopy operator in equation \eqref{defhomotopyoperator} does not spawn additional singularities. A class of functionals that achieves both these goals is the following:
\begin{definition}\label{equicausaldefinition}
    A $k$-vector field $X$ is \textbf{equicausal} if it is microcausal and, whenever $C\subset \sE$ is a compact set, the set of linear mappings
    \begin{equation*}
        X^{(n)}(C): \dg{n+k}(E^{\boxtimes n} \boxtimes E^{!\boxtimes k}) \to \mathbb{C}
    \end{equation*}
    is equicontinuous, as defined in Definition \ref{equicontinuous}. We denote the complex of compactly supported equicausal multivector fields on $\sE$ by $\sX^\bullet_\ec(E)$, or just by $\sX^\bullet_\ec$ when no confusion can arise. As before, we write also $\sF_\ec=\sX^0_\ec$.
\end{definition}
We unpack the definition somewhat for the case of functionals. If $F$ is a smooth functional, $\varphi \in \sE$ and $n \in \mathbb{N}$, then  $F^{(n)}(\varphi)$ is a continuous map $\sE^n \to \mathbb{C}$. To say that $F$ is microcausal means that there exists a continuous extension $\tilde{F}^{(n)}(\varphi)$
\begin{equation*}
\begin{tikzcd}[column sep = large, row sep = small]
    \sE^n \ar[dr,bend left,"F^{(n)}(\varphi)"] \ar[dd,"\iota"]& \\
    & \mathbb{C}\\
    \dg{n} \ar[ur,bend right,"\tilde{F}^{(n)}(\varphi)"']&
\end{tikzcd}
\end{equation*}
which is necessarily unique as $\sE^n$ is dense in $\dg{n}$. To say that $F$ is equicausal means that, whenever $C\subset \sE$ is a compact set, $\Tilde{F}^{(n)}(C)$ is an equicontinuous set.

The attentive reader will notice that we have not assumed any additional smoothness conditions on $F^{(n)}$. This is because the local boundedness of the derivatives is in fact enough to prove that $F^{(n)}$ is conveniently smooth into $\egc{n}$, which will be strong enough for our purposes. The proof of this fact rests on two technical lemmas about curves of distributions valued in spaces of the form $\el{}$. We give their proofs in Appendix \ref{technicallemmata}.

\begin{restatable}{lemma}{Curveiscontinuous}\label{Curveiscontinuous}
Let $\tilde{E}\to N$ be a vector bundle and $\Lambda \subset \dot{T}^*N$ an open cone. If
\begin{equation*}
    u: \mathbb{R} \to \sE'(N;\tilde{E})
\end{equation*} 
is a smooth curve that maps bounded intervals to equicontinuous subsets of $\el{}(N;\tilde{E})$, and $Z\in \sD'_{-\Lambda^c}(N;\tilde{E}^!)$, then the map
\begin{equation*}
    t \mapsto u_t(Z)
\end{equation*}
is continuous $\mathbb{R}\to\mathbb{C}$. Furthermore, if $a\leq b$, then $\int_a^b u_t dt \in \el{}(N;\tilde{E})$, and
\begin{equation}
    \left(\int_a^b u_t dt\right)(Z) = \int_a^b u_t (Z)dt.
\end{equation}
\end{restatable}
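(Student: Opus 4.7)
The plan is to establish continuity of $t\mapsto u_t(Z)$ by an approximation argument combining smoothness into $\sE'$ with the equicontinuity hypothesis, and then to derive the integration claim by duality.

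Fix $t_0\in\mathbb{R}$ and a bounded open interval $I\ni t_0$. By hypothesis $\{u_t\}_{t\in I}$ is an equicontinuous subset of $\el{}(N;\tilde E)$; Proposition~\ref{characterizationequicontinuous} then places it inside $\mathcal{D}'_\Xi(K)$ for some closed cone $\Xi\subset\Lambda$ and some compact $K\subset N$, bounded in the normal topology. Since $\el{}$ is the strong dual of $\sD'_{-\Lambda^c}(N;\tilde E^!)$, equicontinuity is equivalent to the existence of a continuous seminorm $p$ on $\sD'_{-\Lambda^c}(N;\tilde E^!)$ with
\[
|u_t(Z)|\leq p(Z) \qquad \text{uniformly in } t\in I,\ Z\in\sD'_{-\Lambda^c}(N;\tilde E^!).
\]

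Given $Z$, I would approximate it by smooth sections $Z_\varepsilon\in\sE(N;\tilde E^!)\subset\sD'_{-\Lambda^c}(N;\tilde E^!)$ with $Z_\varepsilon\to Z$ in the normal topology, constructed by localising to charts and convolving with mollifiers. For each fixed $\varepsilon$, $t\mapsto u_t(Z_\varepsilon)$ is the evaluation of the smooth curve $u:\mathbb{R}\to\sE'$ at the smooth section $Z_\varepsilon\in\sE$, hence continuous in $t$. The estimate
\[
|u_t(Z)-u_t(Z_\varepsilon)|\leq p(Z-Z_\varepsilon) \qquad (t\in I)
\]
shows $u_t(Z_\varepsilon)\to u_t(Z)$ uniformly in $t\in I$, so $t\mapsto u_t(Z)$ is a uniform limit of continuous functions on $I$ and therefore continuous. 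Since $t_0$ was arbitrary, continuity holds on all of $\mathbb{R}$.

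For the integration statement, define $J(Z)=\int_a^b u_t(Z)\,dt$, which is well-defined by the continuity just proved. The bound $|J(Z)|\leq(b-a)\,p(Z)$, using the equicontinuity seminorm for $[a,b]$, shows $J$ is a continuous linear functional on $\sD'_{-\Lambda^c}$, i.e.\ $J\in\el{}(N;\tilde E)$, and the defining formula is precisely the identity claimed in the statement. To identify $J$ with the weak integral $\int_a^b u_t\,dt$ that exists in $\widehat{\sE'}$ by continuity of $u$ into $\sE'$, observe that both functionals agree when tested against any $f\in\sE\subset\sD'_{-\Lambda^c}$, and so coincide as elements of $\sE'$.

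The main obstacle is the approximation step: demonstrating that every $Z\in\sD'_{-\Lambda^c}(N;\tilde E^!)$ arises as a limit of smooth sections in the \emph{normal} topology, not merely in the strong topology of $\sD'$. Concretely, standard strong-dual seminorms $P_B$ are easy to control with mollifiers, but the Fourier-decay seminorms $P_{\chi,N,V}$ must be handled separately in coordinate charts with an explicit stationary-phase type estimate. This is folklore in Hörmander-style microlocal analysis, but is the only non-formal ingredient in the proof; with it in hand, the rest reduces to the equicontinuity argument above.
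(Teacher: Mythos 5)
Your proposal is correct, but it takes a genuinely different route from the paper. The paper proves continuity of $t\mapsto u_t(Z)$ directly: it uses Proposition~\ref{characterizationequicontinuous} to place $\{u_t\}$ in a bounded subset of $\sD'_\Xi(K)$, introduces a microlocal partition of unity separating the closed cones $\Xi$ and $\WF(Z)$, splits the dual pairing into high- and low-frequency Fourier integrals, bounds each integrand uniformly in $t$ using the seminorms $P_{\chi,N,V}$ and $P_B$ together with Lemma~\ref{fouriertransformsmooth}, and concludes by dominated convergence. Your approach sidesteps all of this by invoking density of smooth sections in $\sD'_{-\Lambda^c}$ with respect to the normal topology: once you grant that, the continuity of $t\mapsto u_t(Z_\varepsilon)$ for smooth $Z_\varepsilon$ is immediate from smoothness of $u$ into $\sE'$, and the equicontinuity seminorm $p$ turns the approximation $Z_\varepsilon\to Z$ into uniform-in-$t$ convergence $u_t(Z_\varepsilon)\to u_t(Z)$, whence continuity passes to the limit. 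The two proofs thus package the same microlocal input differently: the paper folds the microlocal estimate into a dominated-convergence argument, while you isolate it as a single density lemma that is then used softly. Your lemma is a real theorem --- it is essentially the statement that $\sD'_\Gamma$ with the normal topology is a \emph{normal} space of distributions in the sense of Brouder--Dabrowski \cite{Dabrowski2014}, and is the reason the topology bears that name --- so there is no gap, but you are right to flag it as the non-formal ingredient, since the cut-off-and-mollify construction must be shown to converge in the $P_{\chi,N,V}$ seminorms as well as the $P_B$ ones, which is exactly the kind of estimate the paper makes explicit. The integration step in your proposal (extend $J(Z)=\int_a^b u_t(Z)\,dt$ by the $(b-a)\,p(Z)$ bound, then identify $J$ with the weak integral by testing against $\sE\subset\sD'_{-\Lambda^c}$) is identical to the paper's. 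The only minor omission is that you do not spell out the reduction of the vector-bundle case to the scalar one; the paper does this explicitly via a trivializing cover and partition of unity, but the reduction is routine.
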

We note here that this lemma implies in particular that \eqref{integralreq} is satisfied for equicausal functionals.

\begin{restatable}{lemma}{CurveisSmooth}\label{CurveisSmooth}
In the same situation as in Lemma~\ref{Curveiscontinuous}, if 
\begin{equation*}
    u: \mathbb{R} \to \sE'(N;\tilde{E})
\end{equation*} 
 is a smooth curve with the property that, for all $n\in \mathbb{N}$, the curve $\partial_t^n u$ maps bounded intervals to equicontinuous subsets of $\el{}(N;\tilde{E})$, then $u$ is smooth as a curve in $\el{}(N;\tilde{E})$.
\end{restatable}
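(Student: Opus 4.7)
The plan is to prove by induction on $n$ that $u$ is $C^n$ as a curve into $\sE'_\Lambda(N;\tilde E)$, where the target carries its strong dual topology as the dual of $\sD'_{-\Lambda^c}(N;\tilde E^!)$. The principal tool is the characterisation (recalled in the Equicontinuous sets subsection) that an equicontinuous set $H\subset\sE'_\Lambda$ is precisely one controlled by a single continuous seminorm $p$ on $\sD'_{-\Lambda^c}$: $|v(Z)|\le p(Z)$ for all $v\in H$ and $Z\in\sD'_{-\Lambda^c}$. Combined with Lemma~\ref{Curveiscontinuous}, this lets us upgrade the pointwise (weak) differentiability of $u$ given by the hypothesis to genuine strong-dual smoothness.

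For the base step (continuity) I would fix a bounded interval $I$ and apply the fundamental theorem of calculus in $\sE'(N;\tilde E)$ to obtain $u_t-u_s=\int_s^t \partial_\tau u\, d\tau$ for $s,t\in I$. The $n=1$ hypothesis says $\{\partial_\tau u\}_{\tau\in I}$ is equicontinuous in $\sE'_\Lambda$, so Lemma~\ref{Curveiscontinuous} places the integral inside $\sE'_\Lambda$ and commutes it with evaluation against any $Z\in\sD'_{-\Lambda^c}$. Picking a continuous seminorm $p_1^I$ bounding this equicontinuous family, one gets $|u_t(Z)-u_s(Z)|\le|t-s|\,p_1^I(Z)$. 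Taking the supremum over a bounded set $B\subset\sD'_{-\Lambda^c}$ (on which $p_1^I$ is bounded) yields Lipschitz continuity of $u$ in the strong dual topology of $\sE'_\Lambda$.

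For the inductive step, I would show that the pointwise derivative $\partial_t u$, which exists in $\sE'(N;\tilde E)$ and lies in $\sE'_\Lambda$ by hypothesis, is actually the derivative in $\sE'_\Lambda$. Applying the FTOC twice one writes, for each $Z\in\sD'_{-\Lambda^c}$,
\begin{equation*}
\frac{u_{t+h}(Z)-u_t(Z)}{h}-\partial_t u(Z)=\frac{1}{h}\int_t^{t+h}\!\!\int_t^\tau \partial^2_s u(Z)\,ds\,d\tau,
\end{equation*}
which is bounded in modulus by $\tfrac{|h|}{2}\,p_2^I(Z)$ thanks to the $n=2$ hypothesis. The same argument as in the base step, now taking the sup over bounded $B$, shows that the difference quotient converges to $\partial_t u$ in $\sE'_\Lambda$. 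Iterating the same argument with $\partial_t^k u$ in place of $u$ and using the equicontinuity of $\partial_t^{k+1} u$ and $\partial_t^{k+2} u$ on bounded intervals gives both the existence of the $k$th derivative in $\sE'_\Lambda$ and its continuity as a curve there.

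The main technical obstacle is that $\sE'_\Lambda$ is neither barrelled nor complete, so one cannot invoke Banach--Steinhaus-type arguments to promote pointwise statements to uniform ones. This is precisely what the equicontinuity assumption substitutes for: it provides, for every bounded interval, a single continuous seminorm on the predual that controls the whole family, and hence the uniform estimates needed to see that the pointwise derivative is the derivative in the strong dual topology. Apart from this, the proof is a routine combination of the FTOC (already established for $\sE'$-valued curves) with Lemma~\ref{Curveiscontinuous} to ensure that every integral we write actually lives inside $\sE'_\Lambda$ and pairs correctly with elements of $\sD'_{-\Lambda^c}$.
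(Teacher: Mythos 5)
Your proposal is correct and follows essentially the same route as the paper: express the Taylor remainder $u_{t+h}-u_t-h\,\partial_t u_t$ as an integral of the second derivative, use the equicontinuity hypothesis to control it by a single continuous seminorm on $\sD'_{-\Lambda^c}$ uniformly over a bounded interval, invoke Lemma~\ref{Curveiscontinuous} to identify the integral as an element of $\sE'_\Lambda$ that pairs correctly against $Z\in\sD'_{-\Lambda^c}$, and then iterate by applying the argument to $\partial_t^k u$. The only cosmetic difference is that you write the remainder as an iterated double integral from the FTOC, while the paper writes the equivalent single-integral Taylor form; the estimate and the underlying idea are the same.
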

\begin{theorem}\label{ecimpliesconvenient}
If $X\in\sX^k_\ec(E)$, then $X^{(n)}$ is a conveniently smooth as a functional 
\begin{equation*}
    \sE \to \egc{n+k}(M^{n+k};E^{!\boxtimes n}\boxtimes E^{\boxtimes{k}}).
\end{equation*}    
\end{theorem}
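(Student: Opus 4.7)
The plan is to reduce the claim to Lemma \ref{CurveisSmooth}. Fixing a smooth curve $\gamma: \mathbb{R} \to \sE$ and setting $u(t) := X^{(n)}(\gamma(t))$, two properties must be verified: that $u$ is smooth as a curve into the ambient $\sE'(M^{n+k}; E^{!\boxtimes n}\boxtimes E^{\boxtimes k})$, and that every derivative $\partial_t^l u$ sends bounded intervals to equicontinuous subsets of $\egc{n+k}$. The first is immediate from Proposition \ref{derissmoothfunctional} (which makes $X^{(n)}$ Bastiani smooth into the ambient space), combined with the Bastiani chain rule. Iterating the chain rule also writes $\partial_t^l u(t)$ as a finite linear combination of terms of the form $X^{(n+j)}(\gamma(t))\{\gamma^{(m_1)}(t), \ldots, \gamma^{(m_j)}(t); \_\}$, indexed by ordered partitions $m_1 + \cdots + m_j = l$ with $m_i \geq 1$, so by linearity the equicontinuity check reduces to handling a single such summand.

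For a bounded interval $I$, the closures $K_0 := \gamma(\overline{I})$ and $K_i := \gamma^{(m_i)}(\overline{I})$ are compact in $\sE$, so equicausality of $X$ directly delivers the equicontinuous family $H := \{X^{(n+j)}(\varphi) : \varphi \in K_0\}$ of linear maps $\dg{n+j+k} \to \mathbb{C}$. The main technical step, which I expect to be the principal obstacle, is converting the partial evaluation $X^{(n+j)}(\varphi)\{c_1, \ldots, c_j; \_\}$ into a composition of two equicontinuous families so that the stability of equicontinuity under composition can be invoked. I would factor it through the tensor product $T(c_1, \ldots, c_j; g) = c_1 \otimes \cdots \otimes c_j \otimes g$ (reordering factors to place the $c_i$ in the first $j$ slots). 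Smooth sections have empty wavefront set and $\underline{0} \dot\times \Gamma_{n+k} \subset \Gamma_{n+j+k}$, so $T$ lands inside $\dg{n+j+k}$.

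Because the continuous multilinear tensor product $\sE^j \to \sE(M^j; E^{\boxtimes j})$ sends $K_1 \times \cdots \times K_j$ to a bounded set, the bundle version of Proposition \ref{hypocontinuous} makes $L := \{T(c_1, \ldots, c_j; \_) : c_i \in K_i\}$ an equicontinuous family of maps $\dg{n+k} \to \dg{n+j+k}$. The composition $H \circ L$ is then an equicontinuous family $\dg{n+k} \to \mathbb{C}$, i.e.\ an equicontinuous subset of $\egc{n+k}$, which contains the diagonal subset exhausting the image of the summand over $t \in I$. Summing over the finitely many partitions yields equicontinuity of $\partial_t^l u(I)$, and Lemma \ref{CurveisSmooth} concludes the proof.
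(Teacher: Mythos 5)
Your proposal is correct and follows essentially the same route as the paper's proof: reduce to Lemma \ref{CurveisSmooth}, compute $\partial_t^l u$ via the chain rule (Fa\`a di Bruno), factor each summand as the composite of a partial tensor insertion $\dg{n+k}\to\dg{n+j+k}$ with the extended functional $X^{(n+j)}(\gamma(t))$, and invoke hypocontinuity of the tensor product together with compactness of the curve's images to get equicontinuity of each factor, hence of the composite. The only cosmetic differences are that the paper indexes by set partitions rather than ordered compositions and considers the (smaller) one-parameter family $\bigl\{\bigotimes_{I\in\pi}\gamma^{(|I|)}_t\bigr\}_{t}$ directly rather than the full product $K_1\times\cdots\times K_j$; neither affects the argument.
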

\begin{proof}
    Let $\gamma:\mathbb{R} \to \sE$ be a smooth curve. We show that $X^{(n)}\circ \gamma$ is a curve as in the previous lemma, with $\Lambda = \Gamma_{n+k}^c$ and $\tilde{E}=E^{!\boxtimes n}\boxtimes E^{\boxtimes{k}}$. We calculate, using Faá di Bruno's formula:
    \begin{equation*}
        \left(X^{(n)}\circ \gamma\right)^{(m)}(t) = \sum_{\pi\in P_m} X^{(n+|\pi|)}(\gamma_t)\left\{\bigotimes_{I\in\pi} \gamma^{(I)}_t \otimes \_ \right\} 
    \end{equation*}
where the sum runs over all partitions $\pi$ of the set $(1,\ldots,m)$. We can treat each of the terms separately, so that we are looking at a curve of the form
\begin{equation*}
    \Phi_t: Z \mapsto X^{(n+|\pi|)}(\gamma_t)\left\{\bigotimes_{I\in\pi} \gamma^{(|I|)}_t \otimes Z \right\}
\end{equation*}
for some fixed partition $\pi$. Let $(a,b)\subset \mathbb{R}$ be a bounded interval. We view $\Phi_t$ as a composite
\begin{equation*}
    \Phi_t: \dg{n+k} \xrightarrow{\left(\bigotimes_{I\in\pi} \gamma^{(|I|)}_t\right)\otimes \_} \dg{n+k+|\pi|}\xrightarrow{X^{(n+|\pi|)}(\gamma_t)} \mathbb{C}.
\end{equation*}
Both of these factors form equicontinuous sets as $t$ ranges across $(a,b)$. For the first, this follows from the fact that the tensor product is a hypocontinuous map
\begin{equation*}
    \sE^{|\pi|}\times \dg{n+k} \to \dg{n+k+|\pi|},
\end{equation*}
as $\left\{\bigotimes_{I\in\pi} \gamma^{(|I|)}_t\right\}_{t\in(a,b)}\subset \sE^{|\pi|}$ is bounded, and
\begin{equation}
    \emptyset \dot{\times} \Gamma_{n+k} \subset \Gamma_{n+k+|\pi|}.
\end{equation} For the second, this is a consequence of the fact that $X$ is equicausal. The theorem then follows because the composition of equicontinuous sets is equicontinuous.
\end{proof}
\subsection{Examples of equicausal functionals}
We give some examples of equicausal functionals, our main result being that both local functionals in the sense of \cite{Brouder2018}, as well as Wick polynomials, are equicausal. For notational convenience, we shall again take $E$ to be trivial in this section, but all statements admit straightforward extensions to the scenario with nontrivial bundles. We shall prove first an explicit criterion for equicausality.
\begin{prop}\label{checkec}
Suppose that $F$ is a smooth functional on $\sE$ satisfying the property that for any $n\in \mathbb{N}$ and $\varphi \in \sE$, there is a neighbourhood $V$ of $\varphi$, a closed cone $\Xi \subset \Gamma_n^c$ and a compact set $K\subset M^n$ such that $F^{(n)}$ restricts to a continuous map
\begin{equation*}
    \left.F^{(n)}\right|_V : V \to \sD'_{\Xi}(K).
\end{equation*}
Then $F$ is equicausal.
\end{prop}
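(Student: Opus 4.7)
Plan of proof. The statement requires two things: first, that the pointwise microcausality condition $\WF(F^{(n)}(\varphi))\cap \Gamma_n=\emptyset$ holds for all $\varphi\in\sE$; and second, that $F^{(n)}(C)$ is an equicontinuous subset of $\egc{n}$ for every compact $C\subset \sE$. The first claim is immediate from the hypothesis: since $F^{(n)}(\varphi)\in \sD'_\Xi(K)$ for some closed cone $\Xi\subset \Gamma_n^c$, we have $\WF(F^{(n)}(\varphi))\subset \Xi\subset \Gamma_n^c$, and the intersection with $\Gamma_n$ is empty. So essentially all the work is in the equicontinuity statement.

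For the second claim, I would combine the local continuity hypothesis with a compactness argument, following the pattern of Proposition \ref{characterizationequicontinuous}. Fix $n$ and a compact $C\subset \sE$. By the hypothesis, for each $\varphi\in C$ there is an open neighbourhood $V_\varphi$, a closed cone $\Xi_\varphi\subset \Gamma_n^c$ and a compact $K_\varphi\subset M^n$ such that $F^{(n)}|_{V_\varphi}\colon V_\varphi\to \sD'_{\Xi_\varphi}(K_\varphi)$ is continuous. By compactness of $C$, extract a finite subcover $V_{\varphi_1},\ldots,V_{\varphi_m}$, set
\begin{equation*}
    \Xi=\bigcup_{i=1}^m \Xi_{\varphi_i}\subset \Gamma_n^c,\qquad K=\bigcup_{i=1}^m K_{\varphi_i}\subset M^n,
\end{equation*}
and observe that $\Xi$ is a closed cone (finite union of closed cones) and $K$ is compact.

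The next step is to promote the local continuity to continuity with values in the single space $\sD'_\Xi(K)$. For this I would verify that the inclusion $\sD'_{\Xi_{\varphi_i}}(K_{\varphi_i})\hookrightarrow \sD'_\Xi(K)$ is continuous: since $\Xi_{\varphi_i}\subset \Xi$, every admissible triple $(\chi,N,V)$ for $\Xi$ (i.e.\ with $\supp(\chi)\times V\cap \Xi=\emptyset$) is admissible for $\Xi_{\varphi_i}$, so the $P_{\chi,N,V}$ seminorms restrict; and the bounded-set seminorms $P_B$ on $\sD(M^n)$ are common to both. Thus each composite $V_{\varphi_i}\to \sD'_{\Xi_{\varphi_i}}(K_{\varphi_i})\hookrightarrow \sD'_\Xi(K)$ is continuous, and by the sheaf property these agree on overlaps and glue to a continuous map $F^{(n)}\colon \bigcup_i V_{\varphi_i}\to \sD'_\Xi(K)$. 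Since $C\subset \bigcup_i V_{\varphi_i}$ is compact, its image $F^{(n)}(C)$ is compact, hence bounded, in $\sD'_\Xi(K)$.

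Finally, invoke the converse direction of Proposition \ref{characterizationequicontinuous}: a subset of $\egc{n}$ that is contained in some $\sD'_\Xi(K)$ with $\Xi\subset \Gamma_n^c$ closed and $K$ compact, and that is bounded in the topology inherited from $\sD'_\Xi(M^n)$, is equicontinuous. Applied to $F^{(n)}(C)$, this gives precisely the equicontinuity required by the definition of equicausal, so $F$ is equicausal. The only non-bookkeeping step is the verification that the inclusion of normal-topology spaces $\sD'_{\Xi_{\varphi_i}}(K_{\varphi_i})\hookrightarrow \sD'_\Xi(K)$ is continuous; I expect this is the main (modest) obstacle and follows directly from unpacking the definition of the normal topology.
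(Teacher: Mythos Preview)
Your proof is correct and follows essentially the same approach as the paper: cover $C$ by finitely many neighbourhoods from the hypothesis, use continuity plus compactness to obtain boundedness in a suitable $\sD'_\Xi(K)$, and invoke (the converse of) Proposition~\ref{characterizationequicontinuous}. The only difference is cosmetic: the paper shrinks the $V_\varphi$ to \emph{closed} neighbourhoods so that each $C\cap V_i$ is compact and then concludes via ``a finite union of equicontinuous sets is equicontinuous'', whereas you merge the data into a single $\Xi$ and $K$ and verify continuity of the inclusions $\sD'_{\Xi_{\varphi_i}}(K_{\varphi_i})\hookrightarrow \sD'_\Xi(K)$---both routes arrive at the same place.
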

\begin{proof}
    Let $C\subset \sE$ be a compact set and let $n\in\mathbb{N}$. For each $\varphi\in C$, pick a neighbourhood as in the proposition, and denote it by $V_\varphi$. By restricing the $V_\varphi$ if necessary, we may assume they are all closed neighbourhoods. These $V_\varphi$ cover $C$, and hence we may choose a finite subset $\{\varphi\}_{i=1}^m \subset C$, such that the sets $V_i = V_{\varphi_i}$ still cover $C$. 
    
    By assumption, there exist $\Xi_{i}$ and $K_i$ so that 
    \begin{equation*}
        \left.F^{(n)}\right|_{V_i} : V_{i} \to \sD'_{\Xi_{i}}(K_i)
    \end{equation*}
is continuous. As $C\cap V_i$ is compact, being the intersection of a compact set with a closed set, $F^{(n)}(C\cap V_i)$ is compact in $\sD'_{\Xi_{i}}(K_i)$, and hence bounded. Therefore it is an equicontinuous subset of $\egc{n}$ by Proposition \ref{characterizationequicontinuous}. It follows that
\begin{equation*}
    F^{(n)}(C) = \bigcup_{i=1}^k F^{(n)} (C \cap V_i) \subset \egc{n}.
\end{equation*}
is equicontinuous as well. 
\end{proof}

\subsubsection{Local Functionals}
Morally speaking, a local functionals is a functional that can be written as an integral over spacetime of some function on the jet bundle:
\begin{equation*}
    F(\varphi) = \int f(j_x^k\varphi)dx ,
\end{equation*}
for some $k\in \mathbb{N}$, where $j_x^k\varphi$ is the $k$'th jet-prolongation of $\varphi$ and $f$ is a smooth function on the jet bundle. In a recent paper by Brouder et al.\ \cite{Brouder2018} an alternative characterization of local functionals was given in Theorem VI.3, which is the version that we will use:
\begin{definition}
    A smooth functional $F$ on $\sE$ is \textbf{local} if it satisfies the following three properties:
    \begin{itemize}
        \item $F$ is \textit{additive}, in the sense that
        \begin{equation*}
            F(\varphi_1+\varphi_2+\varphi_3) = F(\varphi_1+\varphi_2)+F(\varphi_2+\varphi_3) - F(\varphi_2),
        \end{equation*}
        whenever $\textup{supp}(\varphi_1)\cap\textup{supp}(\varphi_3)=\emptyset$.
        \item $F^{(1)}(\varphi) \in \sD(M)$ for any $\varphi\in\sE$, and furthermore,
        \item $F^{(1)}$ is implemented by a \textbf{smooth} map
\begin{equation*}
    \nabla F: \mathcal{E} \to \mathcal{D},
\end{equation*}
\end{itemize}
that is to say, for all $h\in\sE$
\begin{equation}\label{implementderivative}
    F^{(1)}(\varphi)\{h\} = \int \nabla F(\varphi)(x)\,h(x)\,dx. 
\end{equation}
\end{definition}
We shall need a generalisation of Lemma VI.9 in \cite{Brouder2018} that gives a concrete expression for the $n$'th derivative of a local functional. The proof of this fact is very similar to the proof of the $n=2$ case in the original paper. We will therefore sketch the proof, and defer the detailed proof to \cite{Thesis}.
 
\begin{prop}\label{localfuncform}
If $F$ is a local functional, $l\in\mathbb{N}$ and $\varphi_0 \in \sE$, then there is a neighbourhood $V$ of $\varphi_0$ and a $k\in \mathbb{N}$ such that in any coordinate system of $M$, the distributional kernel of $F^{(l)}$ takes the form
\begin{equation}\label{explicitform}
    F^{(l)}(\varphi)(x_1,\ldots,x_l) = \sum_{|\Vec{\alpha}| \leq k} f_{\Vec{\alpha}}(\varphi)(x_1)\partial^{\Vec{\alpha}}\left[\delta(x_2-x_1)\ldots \delta(x_l-x_1)\right],
\end{equation}
where the $f_{\Vec{\alpha}}$ are smooth maps $V \to \sD (K)$ for some compact $K\subset M$.
\end{prop}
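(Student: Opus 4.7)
My plan is to adapt the argument of the $l=2$ case (Lemma VI.9 of \cite{Brouder2018}) to arbitrary $l$. The strategy splits into three steps: first, use additivity of $F$ to show that $F^{(l)}(\varphi)$ is supported on the total diagonal $\Delta_l = \{(x,\dots,x) : x\in M\} \subset M^l$; second, invoke the classical structure theorem for distributions supported on an embedded submanifold to obtain a decomposition of the claimed form \eqref{explicitform} with distributional coefficients $f_{\vec\alpha}(\varphi)$; third, upgrade these coefficients to a smooth map $V \to \sD(K)$ using the hypothesis that $\nabla F$ is smooth together with Bastiani smoothness of $F$ to all orders.

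For the first step I would iterate the additivity identity. Given $h_1,\dots,h_l \in \sE$ with $\supp(h_i) \cap \supp(h_j) = \emptyset$ for some pair $i \neq j$, apply additivity with $\varphi_1 = t_i h_i$, $\varphi_2 = \varphi + \sum_{m\neq i,j} t_m h_m$ and $\varphi_3 = t_j h_j$. Each of the three terms on the right-hand side depends on at most one of $t_i,t_j$, so the mixed partial $\partial_{t_i}\partial_{t_j}$ annihilates it. Applying the remaining $\partial_{t_m}$ and evaluating at $t_m = 0$ then yields $F^{(l)}(\varphi)\{h_1,\dots,h_l\} = 0$, from which $\supp F^{(l)}(\varphi) \subset \Delta_l$.

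For the second step, any distribution of finite order supported on a closed embedded submanifold admits, in a tubular neighbourhood, a unique expansion as a sum of transverse derivatives of the $\delta$-distribution along the submanifold, with distributional coefficients on the submanifold. Taking the transverse directions to be $x_2-x_1,\dots,x_l-x_1$ and the submanifold to be $\Delta_l \cong M$ gives exactly the form \eqref{explicitform} with $f_{\vec\alpha}(\varphi) \in \sD'(M)$. To obtain a single order $k$ and a single compact $K$ that work uniformly for $\varphi$ in some neighbourhood $V$ of $\varphi_0$, I would combine the joint continuity of $F^{(l)}$ as a multilinear functional on $\sE^l$ (from Bastiani smoothness), compactness of $\supp F^{(l)}(\varphi)$ from Lemma \ref{derivativessupport}, and a Banach--Steinhaus style argument to extract a uniform seminorm bound on a small enough $V$.

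For the third step, the coefficients can be recovered by pairing $F^{(l)}(\varphi)$ against test functions of the form $\chi(x_1)(x_2-x_1)^{\vec\alpha_2}\cdots(x_l-x_1)^{\vec\alpha_l}\psi(x_2,\dots,x_l)$ and inverting a triangular linear system in the multi-indices, so that smoothness of $\varphi \mapsto F^{(l)}(\varphi)$ as a functional into distributions on $M^l$ (Proposition \ref{derissmoothfunctional}) transfers to smoothness of $\varphi \mapsto f_{\vec\alpha}(\varphi)$ into $\sD'(M)$. The main obstacle, and the reason the proof is deferred to \cite{Thesis}, is upgrading the target of this smooth map from $\sD'(K)$ to $\sD(K)$: this is where the locality hypothesis that $\nabla F : \sE \to \sD$ is itself smooth (rather than merely smooth-into-distributions) must be propagated up the derivative tower. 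The natural route is to iterate the identity $\bar{F}^{(n+1)} = \partial_\varphi \bar{F}^{(n)}$ from Proposition \ref{derissmoothfunctional}, starting from the base case $f_{\vec 0}^{(1)}(\varphi) = \nabla F(\varphi) \in \sD(K)$ and verifying inductively that the extraction operation preserves both the test-function-valued character of the coefficients and their Bastiani smoothness in $\varphi$.
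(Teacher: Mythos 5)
Your steps 1 and 2 match the paper's proof in substance: the paper also establishes diagonal support via additivity (citing Proposition V.5 of Brouder et al., which is the iterated argument you sketch) and invokes the standard structure theorem for distributions supported on a submanifold (Proposition 2.3.5 in H\"ormander). The paper obtains the uniform order $k$ and compact $K$ by citing Proposition III.11 of Brouder et al.\ rather than a separate Banach--Steinhaus argument, but those routes are morally equivalent.

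Where you diverge, and where there is a real gap, is in step 3. You propose to recover the coefficients $f_{\vec\alpha}(\varphi)$ by contracting $F^{(l)}(\varphi)$ against transverse polynomials, which lands you in $\sD'(K)$, and then to ``upgrade'' to $\sD(K)$ by inducting up the derivative tower of $\bar F^{(n)}$. The trouble is that $\bar F^{(n)}$, as produced by Proposition \ref{derissmoothfunctional}, is valued in $L_c(\sE^{\hat\otimes n},\mathbb{C})$ --- a space of scalar-valued multilinear maps --- and nothing in that iteration by itself makes the target $\sD(K)$ rather than $\sD'(K)$. The paper's proof avoids this entirely by differentiating $\nabla F$ rather than $F$: since $F^{(l)} = (\nabla F)^{(l-1)}$ and $\nabla F:\sE\to\sD$ is Bastiani smooth by hypothesis, $(\nabla F)^{(l-1)}$ is automatically a smooth functional into $L_c(\sE^{\hat\otimes(l-1)},\sD)$, so the $\sD$-valued character is present from the start and no upgrading step is needed. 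Concretely, the paper contracts the last $l-1$ slots with oscillatory functions $e_{\xi_i}$ (which factorize as a tensor product, unlike your polynomials $(x_i-x_1)^{\vec\alpha_i}$, which mix the free variable $x_1$ with the contracted ones), obtaining a smooth map $V\times(\mathbb{R}^n)^{l-1}\to\sD(\pi_1K)$ whose Taylor coefficients in $\xi$ are precisely the $f_{\vec\alpha}(\varphi)$. You do correctly identify $\nabla F$ as the source of $\sD$-valuedness, but your plan to work with $\bar F^{(n)}$ and ``propagate'' this property inductively does not pin down the mechanism; you should instead apply Proposition \ref{derissmoothfunctional} to $\nabla F$ directly, which makes the claim immediate.

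A small technical slip: pairing $F^{(l)}(\varphi)$ against a test function that includes $\chi(x_1)$ in all $l$ variables produces a number, not a distribution in $x_1$; you want to leave $x_1$ free (contract only $x_2,\ldots,x_l$) to obtain $f_{\vec\alpha}(\varphi)$ as an object on $M$.
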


\begin{proof}
As in Proposition III.11 of \cite{Brouder2018}, we pick a neighbourhood of $\varphi_0$ such that $F|_V$ is supported within some compact $K$, and of order bounded by some $k\in\mathbb{N}$. By going to a coordinate system and using a partition of unity if necessary, we may assume that $K\subset \mathbb{R}^n$ where $n$ is the dimension of $M$. By the additive property of $F$, it follows from Proposition V.5 in \cite{Brouder2018} that the $n$'th derivative is supported on the thin diagonal of $M$, i.e.\ on the set
\begin{equation*}
    \Delta_n=\{(x,x,\ldots,x) \in M^n \: | \: x \in M\} . 
\end{equation*}
Hence $F^{(l)}(\varphi)$ is supported on the thin diagonal of $K$ for all $\varphi\in V$. A standard result from distribution theory, see e.g.\ Proposition $2.3.5$ in \cite{hormander2015analysis}, implies that $F$ has the form \eqref{explicitform} when restricted to $V$, where the $f_{\Vec{\alpha}}(\varphi)$ are \textit{a priori} arbitrary distributions supported on $\pi_1 K$, the projection of $K$ on its first component. To show that they define smooth functions
\begin{equation*}
    f_{\Vec{\alpha}} :  V \to \sD(\pi_1K),
\end{equation*}
we contract the variables $x_2,\ldots, x_l$ with oscillatory functions $e_{\xi_i}$. Combining the fact that the map $\xi \mapsto e_\xi$ is smooth from $\mathbb{R}^n \to \sE$ with the fact that
\begin{equation*}
    F^{(l)} = (\nabla F)^{(l-1)}
\end{equation*}
implies that
\begin{equation*}
    (\varphi,\xi_2,\ldots,\xi_l)    \mapsto   (\varphi,e_{\xi_2},\ldots,e_{\xi_l})   \mapsto  (\nabla F)^{(l-1)}(\varphi)\{e_{\xi_2},\ldots,e_{\xi_l}\}    
\end{equation*}
is a smooth map $V \times (\mathbb{R}^n)^{l-1} \to \sD(\pi_1 K)$. Inserting the explicit expression from equation \eqref{explicitform}, we find that 
\begin{dmath*}
    (\varphi,\xi_2,\ldots,\xi_l) \mapsto \sum_{|\Vec{\alpha}| \leq k} f_{\Vec{\alpha}}(\varphi)(\_) (-i)^{|\Vec{\alpha}|} \prod_{j=2}^l\xi_j^{\alpha_j}e_{\xi_j}(\_)
\end{dmath*}
is a smooth map into $\sD(K)$. The result then follows by taking partial derivatives with respect to the $\xi_i$ and setting all $\xi_i$ to zero.
\end{proof}
\begin{theorem}\label{localfuncts}
Local functionals are equicausal.
\end{theorem}
\begin{proof}
  Let $F$ be a local functional, $\varphi_0\in \sE$, $n\in\mathbb{N}$ and let $V,k,K,$ and $f_\alpha$ be as in the previous proposition. We denote by $Z_n$ the conormal bundle to the diagonal, and note that  $\WF(\partial^{\alpha}\delta_{\Delta_n}) \in Z_n$ for any multi-index $\alpha$. 
  
  Recall that $\mathcal{V}\subset \dot T^*M$ is our choice of cone (generalising the future light cone). From the assumption that the convex hull of $\mathcal{V}$ does not intersect the zero section, it follows that $Z_n \cap \Gamma_n = \emptyset$. Indeed, let $\vec{\xi}=(x,\ldots,x;\xi_1,\ldots,\xi_n) \in Z_n$. By definition of the conormal bundle, it follows that $\sum_{i=1}^n \xi_i = 0$. But then $\vec{\xi}\in \mathcal{V}^n$ would imply that $0$ is in the convex hull of $\mathcal{V}$, which is a contradiction. The case for $(-\mathcal{V})^n$ is similar. 
  
  Hence
  \begin{equation*}
      \varphi \mapsto f_\alpha(\varphi)\mapsto f_\alpha(\varphi)\partial^\alpha\delta_{\Delta_n}
  \end{equation*}
is a smooth map $V\to\sD(K) \to \sD'_{Z_n}(K)$. Summing over $\alpha$, we obtain that $F^{(n)}|_V$ is a smooth map $V \to \sD_{Z_n}(K)$. The result then follows from Proposition \ref{checkec}. 
\end{proof}
\subsubsection{Wick polynomials}
The case for Wick polynomials is comparatively simpler. Recall (see e.g.\ \cite{Baer2009}) that a Wick polynomial is a functional of the form
\begin{equation*}
    P(\varphi) = \sum_{n=0}^k u_n(\varphi^{\otimes n})
\end{equation*}
for $u_n \in \egc{n}$ some fixed distributions, which we take to be symmetric without loss of generality. 
\begin{theorem}
Wick polynomials are equicausal.
\end{theorem}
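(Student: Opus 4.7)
The plan is to combine the explicit form of the derivatives of a Wick polynomial with the criterion for equicausality in Proposition \ref{checkec}. By linearity of equicausality it suffices to treat a single homogeneous term $P(\varphi) = u(\varphi^{\otimes n})$ with $u \in \egc{n}(M^n)$ symmetric. Expanding $(\varphi + th)^{\otimes n}$ by the binomial theorem and invoking the symmetry of $u$ yields
\begin{equation*}
    P^{(m)}(\varphi)\{h_1, \ldots, h_m\} = \frac{n!}{(n-m)!}\, u\bigl(\varphi^{\otimes(n-m)} \otimes h_1 \otimes \cdots \otimes h_m\bigr)
\end{equation*}
for $m \leq n$ (and $P^{(m)} \equiv 0$ otherwise). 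Up to a combinatorial constant, $P^{(m)}(\varphi)$ is therefore the partial evaluation of $u$ against $\varphi^{\otimes(n-m)}$ in the first $n-m$ slots.

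To apply Proposition \ref{checkec} with $V = \sE$, I need a compact $K \subset M^m$ and a closed cone $\Xi \subset \Gamma_m^c$ such that $P^{(m)}$ maps $\sE$ continuously into $\sD'_\Xi(K)$. The singleton $\{u\}$ is equicontinuous in $\egc{n}(M^n)$, so Proposition \ref{characterizationequicontinuous} supplies a compact $K_n \subset M^n$ and closed cone $\Xi_n \subset \Gamma_n^c$ with $u \in \sD'_{\Xi_n}(K_n)$. Writing $\pi'$ and $\pi''$ for the projections of $M^n$ onto its first $n-m$ and last $m$ factors, I set
\begin{equation*}
    K := \pi''(K_n), \qquad \Xi := \bigl\{(y;\eta) \in \dot{T}^*M^m \,:\, \exists\, x \in \pi'(K_n),\; (x, y; 0, \eta) \in \Xi_n\bigr\}.
\end{equation*}
Closedness of $\Xi$ follows from that of $\Xi_n$ together with the compactness of $\pi'(K_n)$, which makes the defining projection proper. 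The inclusion $\Xi \subset \Gamma_m^c$ is direct: if $\eta \in \mathcal{V}^{\dot\times m}$ then the dotted product tolerates zeros, so $(0, \eta) \in \mathcal{V}^{\dot\times n} \subset \Gamma_n$, contradicting $\Xi_n \subset \Gamma_n^c$; the $-\mathcal{V}$ case is identical.

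It remains to establish continuity of $\varphi \mapsto P^{(m)}(\varphi)$ into $\sD'_\Xi(K)$. I factor this as the smooth map $\varphi \mapsto \varphi^{\otimes(n-m)}$ from $\sE$ into $\sE(M^{n-m})$, followed by the linear partial evaluation $\Psi \colon \psi \mapsto u(\psi \otimes \_\,)$. The support claim $\supp(\Psi(\psi)) \subset K$ follows from the support of $u$, and $\WF(\Psi(\psi)) \subset \Xi$ from the standard wavefront-set bound for pushforwards. Continuity of $\Psi$ into the strong dual seminorms $P_B$ is routine: $u \in \sE'(M^n)$ is of finite order on a compact enlargement of $K_n$, and its derivatives factor through a product seminorm when applied to $\psi \otimes g$ for $g \in B$. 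For the Hörmander seminorms $P_{\chi, N, W}$ one needs polynomial decay of $\widehat{\chi \Psi(\psi)}(\xi) = u(\psi \otimes \chi e_{-\xi})$ as $\xi \to \infty$ in $W$, uniformly and continuously in $\psi$.

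The main obstacle is this last estimate. One must lift the admissibility condition $\supp(\chi) \times W \cap \Xi = \emptyset$ from $\dot{T}^*M^m$ to $\dot{T}^*M^n$: by the closedness of $\Xi_n$ and the compactness of $\pi'(K_n) \times \supp(\chi)$, there exists a closed conic neighborhood $\widetilde{W} \subset \dot{T}^*M^n$ of $\pi'(K_n) \times \supp(\chi) \times \{0\} \times W$ still disjoint from $\Xi_n$. Localizing via a fixed cutoff $\eta \in \sD(M^{n-m})$ equal to one near $\pi'(K_n)$ (so that $u(\psi \otimes \chi e_{-\xi}) = u((\eta \psi) \otimes \chi e_{-\xi})$ by the support of $u$) and applying the Hörmander seminorm of $u$ indexed by $\eta \otimes \chi$ and $\widetilde{W}$ then supplies the required polynomial decay, with continuous dependence on $\psi$ coming from the finite-order nature of $u$ as an element of $\sE'$. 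Once continuity of $\Psi$ is established, Proposition \ref{checkec} concludes the equicausality of $P$.
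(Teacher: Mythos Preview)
Your argument is essentially the same as the paper's: reduce to a monomial, compute the derivatives as partial evaluations of $u$, and then verify the hypotheses of Proposition~\ref{checkec} by exhibiting $P^{(m)}$ as a continuous map into $\sD'_\Xi(K)$ for a suitable closed $\Xi\subset\Gamma_m^c$. The paper streamlines the last step considerably by recognising that partial evaluation is a pushforward, $P^{(l)}(\varphi)=l!\,\pi_*\bigl((\varphi^{\otimes n-l}\otimes 1^{\otimes l})\cdot u\bigr)$, and then invoking the continuity of $\pi_*:\sD'_\Gamma(K^n)\to\sD'_{\pi_*\Gamma}(K^{n-l})$ from \cite{Brouder2016}; your cone $\Xi$ is exactly $\pi_*\Xi_n$ in this language.

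One point to tighten: in your final paragraph the H\"ormander seminorm of $u$ indexed by $\eta\otimes\chi$ and $\widetilde W$ controls $\widehat{(\eta\otimes\chi)u}(\zeta,\xi)$ on $\widetilde W$, not $u((\eta\psi)\otimes\chi e_{-\xi})$ directly. To pass from the former to the latter you still need to Fourier-invert $\eta\psi$ in the first block of variables and split the resulting $\zeta$-integral into a region where $(-\zeta,\xi)\in\widetilde W$ (rapid decay of $u$) and its complement (rapid decay of $\widehat{\eta\psi}$ against polynomial growth of $u$). This is routine but not quite what you wrote; it is, in effect, the proof of the pushforward continuity theorem that the paper simply cites.
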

\begin{proof}
It suffices to show this for monomials as the sum of equicausal functionals is equicausal. Let $P(\varphi)=u(\varphi^{\otimes n})$ for $u\in\egc{n}$. The derivatives of $P$ are, for $l\leq n$
\begin{equation*}
    P^{(l)}(\varphi)\{h_1,\ldots h_l\} = l! u(\varphi^{\otimes n-l}\otimes h_1\otimes \ldots \otimes h_l).
\end{equation*}
Set $\Gamma = \WF (u)$ and choose $K\subset M$ compact such that $\supp (u) \subset K^n$. Then $P^{(l)}$ can be exhibited as the following composition
\begin{equation} \label{composition}
\varphi \mapsto \varphi^{\otimes n-l}\otimes 1^{\otimes l} \mapsto \left( \varphi^{\otimes n-l}\otimes 1^{\otimes l} \right ) \cdot u \mapsto l! \pi_*\left(\left( \varphi^{\otimes n-l}\otimes 1^{\otimes l} \right ) \cdot u\right).
\end{equation}
where $\pi_*$ is the pushforward along the map
\begin{align*}
    \pi:\:&M^n \to M^{l},\\
            &(x_1,\ldots,x_n) \mapsto (x_{n-l+1},\ldots,x_{n}).
\end{align*}
Practically, $\pi_*$ integrates out the first $n-l$ variables. Using Theorem 6.3 of \cite{Brouder2016}, $\pi_*$ is continuous as a map
\begin{equation*}
    \pi_*:\sD'_\Gamma (K^n) \to \sD'_{\pi_*\Gamma} (K^{n-l}) 
\end{equation*}
where
\begin{equation*}
    \pi_* \Gamma = \{(x;\xi) \subset \dot{T}^*M^l \, | \, \exists\, y\in M^{n-l} \textup{ such that } (y,x;0,\xi) \in \Gamma \}.
\end{equation*}
It is a straightforward check that $\Gamma \subset \Gamma_n^c$ implies $\pi_* \Gamma\subset\Gamma_{n-l}^c$. Furthermore, the tensor product of functions is continuous by Theorem 34.1 and Corollary in \cite{Treves1967}. Hence the composition in equation \eqref{composition} can be viewed as a sequence of continuous maps 
\begin{equation*}
    P^{(l)}: \sE(M) \to \sE(M^n) \to \dg{}(K^n) \to \sD'_{\pi_* \Gamma} (K^{n-l}).
\end{equation*}
The result then follows from Proposition \ref{checkec}.
\end{proof}

\section{Time-slice axiom for equicausal functionals}\label{timeslice}
We now show that our definition gives a sub-complex of the multivector fields that is closed under the homotopy operator $\mathcal{H}$. As the primary consequence of this fact, $\sX_\ec$ satisfies the time-slice axiom, by an application of Theorem \ref{quasiinverse}. Recall that the quasi-inverse to the inclusion map is implemented as a pullback by the linear map $\gamma_0$ defined in equation \eqref{gamma0}.

\begin{prop}
If $F\in \sF_\ec$, then $\gamma_0^* F \in \sF_\ec$. Similarly, if $X\in \sX_\ec^l$ and $l\geq 0$, then $\mathcal{H}_l X \in \sX^{l+1}_\ec$. 
\end{prop}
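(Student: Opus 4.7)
Both claims reduce to computing the derivatives of $\gamma_0^*F$ and $\mathcal{H}_l X$ in closed form and then verifying that, for each compact $C\subset\sE$, the resulting families of distributions are equicontinuous on the appropriate spaces $\dg{\bullet}$. Since $\gamma_0$ is linear, the chain rule gives immediately
\begin{equation*}
    (\gamma_0^*F)^{(n)}(\varphi)\{h_1,\ldots,h_n\}=F^{(n)}(\gamma_0\varphi)\{\gamma_0 h_1,\ldots,\gamma_0 h_n\},
\end{equation*}
while Proposition \ref{Leibnizintegralrule} (applied to a distribution-valued target, justified via Theorem \ref{ecimpliesconvenient}) combined with the linearity of $\gamma_\lambda$ and $\alpha$ yields
\begin{dmath*}
    (\mathcal{H}_l X)^{(n)}(\varphi)\{g_1,\ldots,g_n;h_1,\ldots,h_{l+1}\} = \sum_{i=1}^{l+1}(-1)^{i-1}\int_0^1 X^{(n+1)}(\gamma_\lambda\varphi)\{\gamma_\lambda g_1,\ldots,\gamma_\lambda g_n;\alpha h_i;h_1,\ldots,\widehat{h_i},\ldots,h_{l+1}\}\lambda^l\,d\lambda.
\end{dmath*}

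The plan is then to factor each derivative as a composition of (a) the equicausal distribution $F^{(n)}(\gamma_0\varphi)$ (respectively $X^{(n+1)}(\gamma_\lambda\varphi)$) with (b) fixed continuous operators built slot-by-slot from $\gamma_0$, $\gamma_\lambda$ and $\alpha$, followed for $\mathcal{H}_l X$ by (c) an integration over $\lambda$. For the inner step (a), the continuous image $\gamma_0(C)$ (respectively $\gamma_{[0,1]}(C)$, which is the image of the compact set $[0,1]\times C$ under the continuous map $(\lambda,\varphi)\mapsto\gamma_\lambda\varphi$) is compact in $\sE$, so equicausality of $F$ or $X$ produces the needed equicontinuous family. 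Step (c) is handled by Lemma \ref{Curveiscontinuous}: a curve with equicontinuous image on $[0,1]$ integrates to an element of $\egc{\bullet}$, seminorms commute with the integral, and equicontinuity in $\varphi\in C$ survives the integration.

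The crux is step (b): one must verify that $\gamma_\lambda$ (on $E$-sections) and $\alpha$ (from $E^!$-sections to $E$-sections) extend from test functions to continuous linear operators on the Hörmander-topologised distribution spaces whose wavefront cones are compatible with the forbidden sets $\Gamma_\bullet=\mathcal{V}^{\dot\times\bullet}\cup(-\mathcal{V})^{\dot\times\bullet}$, and that the extensions are equicontinuous in $\lambda\in[0,1]$. This rests on the wavefront bound \eqref{wavefrontsetcommutatorfunction} together with the microlocality of $P$: propagation of singularities for $\alpha=\Delta^A(1-\theta)+\Delta^R\theta$ keeps wavefronts inside the complements of $\Gamma_\bullet$, and a tensor-product argument in the spirit of Proposition \ref{extensiontensorprod} lifts the single-slot statements to the full multi-variable operator acting on $\dg{n+l+1}(E^{\boxtimes n}\boxtimes E^{!\boxtimes(l+1)})$. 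This microlocal verification, and in particular the uniformity in $\lambda$ together with the bookkeeping of cones across the tensor slots where $\gamma_\lambda$ and $\alpha$ act, is the main obstacle; once it is in place the rest of the argument is a routine chaining of equicontinuity under composition and integration.
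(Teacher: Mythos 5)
Your overall architecture matches the paper's: compute $(\gamma_0^*F)^{(n)}$ and $(\mathcal{H}_l X)^{(n)}$ via the chain rule and Leibniz integral rule, factor each as (fixed slotwise operator)$\,\circ\,$(derivative of the given functional at a shifted configuration), exploit compactness of $\gamma_0(C)$ and $\gamma([0,1]\times C)$ to get equicontinuous families from the hypothesis, and control the $\lambda$-integral by a uniform seminorm bound (your appeal to Lemma~\ref{Curveiscontinuous} and the paper's direct bound via Proposition~\ref{characterizationequicontinuous} are essentially the same mechanism). That part is fine.

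The genuine gap is exactly where you flag it: step~(b). You state that one ``must verify'' that $\gamma_\lambda$ and $\alpha$ extend slotwise to continuous operators $\egc{n+l+1}\to\egc{n+l+1}$ compatible with the cones $\Gamma_\bullet^c$, call this ``the main obstacle,'' and then do not establish it. The proof cannot be considered complete with its central microlocal input left as a promissory note. Moreover, the tools you gesture at are not quite the right ones: a tensor-product hypocontinuity argument \`a la Proposition~\ref{extensiontensorprod} would give you products of distributions in separate variables, whereas what is needed is continuity of the transpose maps $\gamma_{\lambda,i}^*$ and $\alpha_i^*$ acting on a single slot of $\egc{n}$, together with the cone inclusion $\gamma_{\lambda,i}^*\Gamma_n^c\subset\Gamma_n^c$. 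The paper closes this step by invoking Proposition~1 of Chilian--Fredenhagen~\cite{Chilian2009}, which shows precisely that these pullback operators preserve the cones $\Gamma_n^c$ and are continuous on $\egc{n}$, and then notes that composition with a fixed continuous map sends equicontinuous sets to equicontinuous sets. You either need to cite that result or prove the slotwise continuity and cone-preservation directly from the wavefront bound~\eqref{wavefrontsetcommutatorfunction}; without one of those, the argument does not go through.
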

\begin{proof}
To show that $\gamma_0^* F$ is equicausal, we should show that, for any $n \in\mathbb{N}$ and $C\subset \sE$ compact, the set
\begin{equation*}
    \left( \gamma_0^* F \right)^{(n)}(C) \subset \egc{n},
\end{equation*}
and that it is equicontinuous in that space.

We can calculate the $n$'th derivative of $\gamma_0^*F$ explicitly using the chain rule:
\begin{equation*}
    (\gamma_0^* F)^{(n)}(\varphi)\{h_1,\ldots h_n\} = F^{(n)}(\gamma_0\varphi)\{\gamma_0 h_1, \ldots, \gamma_0 h_n\},
\end{equation*}
or, somewhat more succinctly in terms of integral kernels:
\begin{equation}\label{derivativesThetaF}
    (\gamma_0^* F)^{(n)}(\varphi) = (\gamma_0^*)^{\otimes n}  F^{(n)}(\gamma_0\varphi),
\end{equation}
where, on the right, $\gamma_0^*: \sE' \to \sE'$ now also denotes the pullback at the level of distributions by a slight abuse of notation.
 Now denote 
\begin{equation*}
   \gamma_{0,i}^* = \mathord{\underbrace{\mathbbm{1}\otimes\dots\otimes\mathbbm{1}}_{i-1\text{ times}}}\otimes \gamma_0^* \otimes \mathbbm{1}\otimes\dots\otimes\mathbbm{1} ,
\end{equation*}
so that 
 \begin{equation*}
     (\gamma_0^*)^{\otimes n} = \gamma_{0,1}^* \cdots \gamma_{0,n}^*.
 \end{equation*}
 In the proof of Proposition 1 of \cite{Chilian2009}, Chilian and Fredenhagen show that these maps (denoted $\alpha_i$ in their notation) respect the cone $\Gamma_n^c$ . Furthermore, they are continuous as maps
\begin{equation*}
    \gamma_{0,i}^* : \egc{n} \to \egc{n},
\end{equation*}
as contraction with propagators and acting by differential operators are both continuous operations on this space. 

If $C\subset \sE$ is compact, then $\gamma_0 C$ is compact as well, so that $F^{(n)}(\gamma_0 C)\subset \egc{n}$ is equicontinuous by assumption. As composition with continuous maps  respects equicontinuous sets, it follows that 
\begin{equation*}
    (\gamma_0^*)^{\otimes n}  F^{(n)}(\gamma_0 C) \subset \egc{n}
\end{equation*}
is equicontinuous as well, and hence that $\gamma_0^* F$ is an equicausal functional. 

To show that $\mathcal{H}_l X$ is equicausal, we set
\begin{equation*}
    Y(\varphi)\{h_1 \ldots h_{l+1}\} = \int_0^1 X^{(1)}(\gamma_\lambda \varphi)\{\alpha h_1;h_2, \ldots,h_{l+1}\}\lambda^l d\lambda.
\end{equation*}
It suffices to consider only  derivatives of $Y$ as all other terms in $\mathcal{H}_lX$ are related to it by swapping variables. Let $n\in\mathbb{N}$, by Proposition \ref{Leibnizintegralrule}, we can differentiate under the integral $n$ times to find
    \begin{equation}\label{integrateddistr}
        Y^{(n)}(\varphi) = \int_0^1 (\gamma^{*\otimes n}_\lambda \otimes \alpha^*\otimes \mathbbm{1}^{\otimes l})  X^{(n+1)}({\gamma_\lambda\varphi}) \lambda^k d \lambda.
    \end{equation}
 Similar to above, the pullbacks respect the equicontinuous subsets of $\egc{k+n+1}$, so that it is sufficient to show that 
     \begin{equation*}
        \left\{\left.\int_0^1 X^{(n+1)}(\gamma_\lambda \varphi)\lambda^p d\lambda \: \right| \: \varphi \in C \right\} \subset \egc{n+k+1}
    \end{equation*}
is equicontinuous for $C \subset \sE$ compact and $k\leq p \leq k+n $, possible extra factors of $\lambda$ stemming from the $\gamma_\lambda^{*\otimes n}$ factor in equation \eqref{integrateddistr}.  

The set $D = \gamma ([0,1]\times C)$ is compact by continuity of $\gamma$. As $X$ is equicausal, we conclude that $X^{(n+1)}(D)$ is  equicontinuous. Hence there is a continuous seminorm $p$ on $\dg{n+k+1}$ such that
\begin{equation*}
    \left| X^{n+1}(\gamma_\lambda \varphi) (Z)\right| \leq p(Z) \: \forall \lambda\in[0,1],\varphi \in C, Z \in \dg{n+k+1}.
\end{equation*}
Hence we find that
\begin{equation*}
    \left| \int_0^1 X^{n+1}(\gamma_\lambda \varphi) (Z) \lambda^p d\lambda \right| \leq \int_0^1 \left|  X^{n+1}(\gamma_\lambda \varphi) (Z)  \right| d\lambda < p(Z) \: \forall \varphi \in C, Z \in \dg{n+k+1}.
\end{equation*}
which is what we needed to show.
\end{proof}
\begin{theorem}
    The complex $\sX_\ec$ forms a resolution of the space of on-shell equicausal functionals $\iota^*(\sF_\ec)$. The precosheaf defined by mapping a region $N\subset M$ to
\begin{equation*}
    \sX_\ec(N) = \{ X \in \sX_\ec(M) \: | \: \supp(X) \subset N \}.
\end{equation*}
satisfies the time-slice axiom.
\end{theorem}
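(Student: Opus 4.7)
The plan is to promote the resolution argument of Section \ref{KoszulComplex} and the proof of Theorem \ref{quasiinverse} to the equicausal setting, with the preceding proposition supplying the main technical work. The task reduces to checking that the three ingredients appearing in those proofs — the differential $\delta$, the homotopy $\mathcal{H}$, and the projection $\gamma_0^*$ — preserve the equicausal subcomplex, and that the spacetime supports behave correctly.

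The first step is to verify that $\delta$ maps $\sX^k_{ec}$ into $\sX^{k-1}_{ec}$. This is a consequence of the fact that $P : \sE \to \sE^!$ is a continuous linear differential operator: it preserves wavefront sets and carries compact subsets of $\sE$ to compact subsets of $\sE^!$, so the contractions of $X^{(n)}(\varphi)$ with $P\varphi$ and with $P h_i$ that appear in the derivatives of $\delta X(\varphi) = X(\varphi)\{P\varphi,\_\}$ preserve equicontinuity, using the hypocontinuity statements of Propositions \ref{hypocontinuous} and \ref{tensorproductehypo} where needed. Combined with the preceding proposition, which handles $\mathcal{H}$ and $\gamma_0^*$, the chain-homotopy identities
\begin{equation*}
    \mathbbm{1} = \delta\mathcal{H} + \mathcal{H}\delta \text{ in degree } > 0, \qquad \mathbbm{1} = \delta\mathcal{H} + \gamma_0^* \text{ in degree } 0
\end{equation*}
of Proposition \ref{propostionhomoperator} restrict verbatim to $\sX^\bullet_{ec}$. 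Together with $\iota^*\gamma_0^* = \iota^*$ these yield $\ker(\iota^*|_{\sF_{ec}}) = \delta\sX^1_{ec}$ and vanishing of higher homology, so $(\sX^\bullet_{ec},\delta)$ is a resolution of $\iota^*\sF_{ec}$.

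For the time-slice axiom the recipe of Theorem \ref{quasiinverse} applies. Given $N\subset M$ containing a Cauchy surface, I would pick Cauchy surfaces $\Sigma^\pm\subset N$ with $\Sigma^+\subset I^+(\Sigma^-)$ and a cutoff $\theta$ with $\supp(d\theta)\subset N$, build $\gamma_0$, $\gamma_\lambda$ and $\mathcal{H}$ in terms of this $\theta$, and set $\Theta_0=\gamma_0^*$, $\Theta_k=0$ for $k>0$. Proposition \ref{supportsprop} places $\supp(\gamma_0^*F)\subset N$, while the preceding proposition preserves equicausality, so $\Theta$ is a well-defined chain map $\sX_{ec}(M)\to\sX_{ec}(N)$; the degree-zero homotopy supplies $i\circ\Theta\simeq\mathbbm{1}$ in $\sX_{ec}(M)$. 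Rerunning the resolution argument inside $\sX^\bullet_{ec}(N)$ with this same $\theta$ forces $H^{\geq 1}(\sX^\bullet_{ec}(N))=0$ and $\ker(\iota^*|_{\sF_{ec}(N)}) = \delta\sX^1_{ec}(N)$, whereupon chasing the homology identifies $i_*$ as an isomorphism.

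The main obstacle — and the step I would be most careful with — is the support lemma for $\mathcal{H}$ that is needed to conclude that $\mathcal{H}$ restricts to $\sX_{ec}(N)\to\sX_{ec}(N)$, without which the resolution argument on $N$ does not close up. This splits into two checks. The $\varphi$-sensitivity of $\mathcal{H}X$ is controlled by the constancy argument behind equation \eqref{thetaconstant} (perturbations supported in $\supp(d\theta)^c$ are either killed by $\gamma_0$ or produce a shift annihilated because $X^{(n+1)}(\varphi)$ is supported in $\supp(X)^{n+1}\subset N^{n+1}$). The distributional support of $\mathcal{H}X(\varphi)$ in the slot into which $\alpha$ is contracted is controlled by the support properties of the bidistribution $\alpha(x,y) = \Delta^A(x,y)(1-\theta(y)) + \Delta^R(x,y)\theta(y)$: when paired against $X^{(1)}(\gamma_\lambda\varphi)$ supported in the slab between $\Sigma^-$ and $\Sigma^+$, the retarded/advanced Green's functions together with the cut-off $\theta$ confine the $y$-support to that same slab. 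Choosing $\Sigma^\pm$ close enough to the boundary of $N$ that the slab contains $\supp(X)$ — possible because $X$ is compactly supported — then places the distributional support of $\mathcal{H}X$ inside $N$. Once this bookkeeping is in place, everything else is a direct consequence of the ingredients already established.
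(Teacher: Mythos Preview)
Your approach matches the paper's: the proof there is two sentences, observing that the preceding proposition makes $\gamma_0^*$ and $\mathcal H$ preserve equicausality, so that the arguments of Section~\ref{KoszulComplex} and Theorem~\ref{quasiinverse} restrict to $\sX_{ec}$ verbatim. Your additional verification that $\delta$ preserves $\sX_{ec}$ is harmless filler the paper does not spell out.

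You go beyond the paper in isolating the support lemma needed for $\mathcal H$ to restrict to $\sX_{ec}(N)$; Theorem~\ref{quasiinverse} leaves this implicit. It is a fair point to raise, but your argument for it is internally inconsistent: you fix $\theta$ once to define $\Theta$ and to ``rerun the resolution argument with this same $\theta$,'' yet then propose choosing $\Sigma^\pm$ so that the slab contains $\supp(X)$, a choice that varies with $X$. The clean fix keeps $\theta$ fixed and uses causal convexity of $N$ directly: for $x\in\supp(X)\subset N$ the $y$-support of $\alpha(x,y)=\Delta^A(x,y)(1-\theta(y))+\Delta^R(x,y)\theta(y)$ is contained in $(J^+(x)\cap J^-(\Sigma^+))\cup(J^-(x)\cap J^+(\Sigma^-))$, and each piece lies in $N$ because its endpoints do and $N$ is causally convex. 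The same causal-convexity argument combined with Proposition~\ref{supportsprop} controls the functional support. No $X$-dependent slab is needed, and with this correction your more detailed write-up simply makes explicit what the paper's two-line proof takes for granted.
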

\begin{proof}
    The previous proposition implies that all the operations in Section \ref{KoszulComplex} respect the sub-complex $\sX_\ec \subset \sX$. Hence all the proofs from that section apply verbatim.
\end{proof}

\section{Star product of equicausal functionals}\label{algebraicstructsection}
Finally, we turn to treating closure of the Poisson bracket and the $\star$-product on the equicausal functionals. They are defined by the following formulae:
\begin{align}
    \left\{F,G\right\}(\varphi) &=\left\langle F^{(1)}(\varphi),\Delta G^{(1)}(\varphi)\right\rangle,  \\ 
    F\star G(\varphi) &= \sum_n \frac{\hbar^n}{n!} \left\langle F^{(n)}(\varphi) , (\Delta^{+})^{\otimes n} G^{(n)}(\varphi)\right\rangle. \label{starprod}
\end{align}
As the equicausal functionals are a subset of the microcausal ones, these maps remain well-defined. We will show that these pairings do give smooth functionals, in contrast to when we act on functionals that are just microcausal. As the Poisson bracket is the lowest order term of the $\star$-commutator we need not check that case separately. The most general case is of course the $\star$-product of multivector fields, as defined in Section 5.2 of \cite{Rejzner2016} for $X\in \sX^p_{\mu c}$, $Y\in\sX^q_{\mu c}$:
\begin{dmath} \label{starproductmultivector}
 \left(X \star Y\right)(\varphi) \{ h_1, \ldots, h_{p+q}\} 
 \quad \doteq \sum_{n=0}^{\infty}\sum_{\sigma \in S_{p+q}} \frac{\hbar^n (-1)^{|\sigma|} }{n ! p ! q !} \left\langle X^{(n)}(\varphi) \{h_{\sigma(1)}, \ldots, h_{\sigma(p)}\},\left(\Delta^{+}\right)^{\otimes n}Y^{(n)}(\varphi) \{ h_{\sigma(p+1)}, \ldots, h_{\sigma(p+q)}\}\right\rangle.
\end{dmath}
This is nothing but the $\star$-product of functionals, combined with a wedge-product of the extra distributional variables. We have elected not to spell out an explicit proof in this scenario, as the extra variables complicate the notation considerably, but add no extra conceptual complication. Futhermore, in the interest of condensing notation, we suppress all vector bundles for the remainder of this section. We start by proving a version of the Leibniz rule.
\begin{prop}[Leibniz rule for equicausal functionals]
    If $W \in \sD'(M^{2n})$ satisfies 
    \begin{equation*}
        \WF(W) \cap (\Gamma_n^c \dot\times \Gamma_n^c) = \emptyset,
    \end{equation*}
and if $F,G$ are equicausal functionals, then the functional
    \begin{equation*}
        W\left(F^{(n)}\otimes G^{(n)}\right): \varphi \mapsto W\left(F^{(n)}(\varphi)\otimes G^{(n)}(\varphi)\right)
    \end{equation*}
    is smooth, and its derivatives are given by
    \begin{equation}\label{Leibniz}
        W\left(F^{(n)}\otimes G^{(n)}\right)^{(m)}(\varphi)\{h^{\otimes m}\} = \sum_{k+l=m} \binom{m}{k} W\left(F^{(n+k)}(\varphi)\{h^{\otimes k},\_\} \otimes G^{(n+l)}(\varphi)\{h^{\otimes l},\_\}\right),
    \end{equation}
    for $h\in \sE$.  
\end{prop}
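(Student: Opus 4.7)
The plan is to establish convenient smoothness of $H := W(F^{(n)}\otimes G^{(n)})$ using the machinery of Section \ref{equicausal}, and then promote this to Bastiani smoothness via the classical coincidence result (applicable since $\sE$ is Fréchet and $\mathbb{C}$ is complete). The derivative formula will then drop out of the chain rule combined with the Leibniz rule for the hypocontinuous tensor product, applied to the factorization $\varphi\mapsto F^{(n)}(\varphi)\otimes G^{(n)}(\varphi)\mapsto W(\cdots)$.

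First I would pick an open cone $\tilde\Lambda\subset\dot T^*(M^{2n})$ containing $\Gamma_n^c\dot\times\Gamma_n^c$ and disjoint from $\WF(W)$; this exists because $\WF(W)$ is closed and, using the symmetry $\Gamma_n=-\Gamma_n$, the hypothesis guarantees room to fatten. By the Brouder--Dabrowski duality one then has $W\in\sD'_{-\tilde\Lambda^c}(M^{2n})$, acting as a continuous linear functional on $\sE'_{\tilde\Lambda}(M^{2n})$. Next I would show that $\Phi:\varphi\mapsto F^{(n)}(\varphi)\otimes G^{(n)}(\varphi)$ is conveniently smooth into $\sE'_{\tilde\Lambda}(M^{2n})$. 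Given a smooth curve $\gamma:\mathbb{R}\to\sE$, Theorem \ref{ecimpliesconvenient} provides that $F^{(n)}\circ\gamma$ and $G^{(n)}\circ\gamma$ are smooth as curves in $\egc{n}(M^n)$, while the (jointly) continuous tensor product on $\sE'$ makes $\Phi\circ\gamma$ smooth as a curve in $\sE'(M^{2n})$. Its iterated derivatives, by the chain rule and the bilinear Leibniz rule, are finite sums of tensor products of the form $F^{(n+a)}(\gamma(s))\{\gamma^{(i_1)}(s),\ldots,\gamma^{(i_a)}(s),\_\}$ with analogous $G$-factors. Each such factor sends a bounded interval of $s$ to an equicontinuous subset of $\egc{n+a}$: equicausality makes $F^{(n+a)}(\gamma([a,b]))$ equicontinuous, and partial contraction against a bounded family in $\sE^a$ preserves equicontinuity by the hypocontinuity of the tensor product $\sE^a\times\dg{n}\to\dg{n+a}$. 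Proposition \ref{extensiontensorprod} then places the full tensor products in equicontinuous subsets of $\sE'_{\tilde\Lambda}(M^{2n})$, so Lemma \ref{CurveisSmooth} upgrades the $\sE'$-smoothness to $\sE'_{\tilde\Lambda}$-smoothness. Post-composing with the continuous linear $W$ yields smoothness of $H\circ\gamma$, hence convenient and therefore Bastiani smoothness of $H$.

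For the explicit formula \eqref{Leibniz}, I would argue by induction on $m$. The $m=1$ case follows from
\begin{equation*}
 H(\varphi+th)-H(\varphi) = W\bigl[(F^{(n)}(\varphi+th)-F^{(n)}(\varphi))\otimes G^{(n)}(\varphi+th)\bigr] + W\bigl[F^{(n)}(\varphi)\otimes(G^{(n)}(\varphi+th)-G^{(n)}(\varphi))\bigr]
\end{equation*}
by applying the fundamental theorem of calculus (Lemma \ref{Curveiscontinuous}) to the $\egc{n}$-valued curves $s\mapsto F^{(n)}(\varphi+sh)$ and $s\mapsto G^{(n)}(\varphi+sh)$, dividing by $t$, and letting $t\to 0$; the limit passes through both $\otimes$ and $W$ because, on bounded $t$, all participating families are equicontinuous in their respective $\sE'_{\tilde\Lambda}$. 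Equivalently, this is the chain rule applied to $\Phi$ followed by the linearity of $W$. The general $m$ follows by iterating, since each term on the right-hand side of \eqref{Leibniz} is itself of the form $W'(F^{(n+k)}\otimes G^{(n+l)})$ for a distribution $W'$ obtained by smearing $W$ against $h^{\otimes(k+l)}$ in the appropriate slots and whose wavefront set still satisfies the hypothesis.

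The main technical obstacle is that $\sE'_{\tilde\Lambda}$ is neither complete nor barrelled, so Banach--Steinhaus type shortcuts are unavailable: every interchange of a limit with the tensor product or with $W$ must be controlled by \emph{equicontinuity} rather than mere boundedness. This is exactly what the equicausal hypothesis supplies, through Proposition \ref{extensiontensorprod} and Lemma \ref{CurveisSmooth}, and is the essential place where equicausality is strictly stronger than microcausality.
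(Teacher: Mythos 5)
Your argument is correct and takes essentially the same route as the paper: factor $\varphi\mapsto W(F^{(n)}(\varphi)\otimes G^{(n)}(\varphi))$ through the bounded bilinear tensor product, establish convenient smoothness via Theorem~\ref{ecimpliesconvenient}, upgrade to Bastiani smoothness via the Fr\'echet-domain coincidence theorem, and obtain \eqref{Leibniz} from bilinearity. The one genuine difference is that you re-derive the convenient smoothness of the bilinear step by hand, running a curve-by-curve argument through Lemma~\ref{CurveisSmooth} and Proposition~\ref{extensiontensorprod}; the paper instead observes that hypocontinuity implies boundedness and then cites Lemma~5.5 of \cite{Kriegl1997} (bounded multilinear maps between convenient vector spaces are conveniently smooth) together with the convenient chain rule (Theorem~3.18, \emph{loc.~cit.}), which replaces your entire middle paragraph by a two-line citation. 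Your longer route is instructive but duplicates a standard fact.

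Two small corrections to the setup. There is no need to ``fatten'': the canonical target is $\tilde\Lambda = -\WF(W)^c$, which is automatically open (since $\WF(W)$ is closed), has $-\tilde\Lambda^c = \WF(W)$ so that $W$ trivially belongs to $\sD'_{-\tilde\Lambda^c}$, and contains $\Gamma_n^c\dot\times\Gamma_n^c$ by the hypothesis combined with $\Gamma_n = -\Gamma_n$. Also, what the Brouder--Dabrowski pairing requires is that $\tilde\Lambda$ be disjoint from $-\WF(W)$, not from $\WF(W)$; you wrote the latter. With the canonical choice above this distinction evaporates, but as stated your requested cone does not obviously give $W\in\sD'_{-\tilde\Lambda^c}$.
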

\begin{proof}
    We have shown that $F^{(n)},G^{(n)}:\sE \to \egc{n}$ are conveniently smooth. The tensor product 
    \begin{equation*}
        \egc{n}\times \egc{n} \to \sE'_{-\WF(W)^c}
    \end{equation*}
 is hypocontinuous, and hence bounded. By Lemma 5.5 in \cite{Kriegl1997}, it is conveniently smooth. Furthermore, $W$ defines a continuous linear map $\sE'_{-\WF(W)^c}\to \mathbb{C}$, which is hence conveniently smooth as well. It follows from the chain rule in the convenient setting, see e.g.\ Theorem 3.18 in \cite{Kriegl1997}, that $W(F^{ (n)}\otimes G^{(n)})$ is conveniently smooth. By Theorem 1 in \cite{Froelicher1982}, it  is also Bastiani smooth, as $\sE(M)$ is a metric space and $\mathbb{C}$ is complete. The Leibniz rule then follows from a standard calculation using the bilinearity of the tensor product.
\end{proof}

This shows that the formal calculations performed in \cite{Brunetti2019} are justified when restricted to equicausal functionals. This does not immediately imply that the resulting functional is again equicausal, merely that it is smooth and that we can use the Leibniz rule to calculate derivatives.
This is the task we will carry out now. For the higher order terms in the $\star$-product, the notation gets somewhat involved due to the required matching of distributional indices. We introduce some custom notation to handle this. Consider the following functional
\begin{dmath}
    \mathcal{B}_n(F,G) (\varphi) = \int F^{(n)}(\varphi) (x_1,\ldots x_n)  \Delta^+(x_1,y_1) \ldots \Delta^+(x_n,y_n) G^{(n)}(\varphi) (y_1,\ldots,y_n) d\vec{x}d\vec{y},
\end{dmath}
which is the $n$'th order term in $F\star G$. If we take an $l$-fold derivative with respect to $\varphi$, we obtain terms of the following form:
\begin{multline}\label{derstarprod}
    \Theta_{n,m,k}(\varphi)\{h_1,\ldots, h_{m+k}\} \\
    \equiv \int F^{(n+m)}(\varphi) (x_1,\ldots ,x_{n+m}) G^{(n+k)}(\varphi)(y_1,\ldots, y_{n+k})\\
     \Delta^+(x_1,y_1) \ldots \Delta^+(x_n,y_n)h_1(x_{n+1})\ldots h_m (x_{n+m})h_{m+1}(y_{n+1})\ldots h_{m+k}(y_{n+k})d\vec{x}d\vec{y}.
\end{multline}
for $m+k=l$. The first $n$ distributional variables of both $F^{(n+m)}$ and $G^{(n+k)}$ get contracted with each other through factors of $\Delta^+$. The remaining variables are contracted against the sections $h_i\in \sE$. 
\begin{figure}[h!]
    \centering
    \includegraphics[width = 0.9\textwidth]{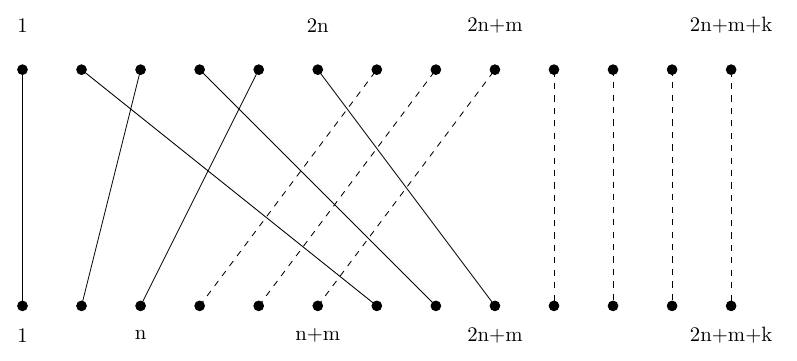}
    \caption{Graphical depiction of $\sigma_{n,m,k}$.}
    \label{graphicalpermutation}
\end{figure}
The following permutation in $S_{2n+m+k}$ keeps track of these arguments:
\begin{equation}\label{disgustingpermutation}
    \sigma_{n,m,k}^{-1}(i) = \begin{cases}
        2i-1            &\textup{ if } 1\leq i \leq n,\\
        i+n             &\textup{ if } n+1\leq i \leq n+m,\\
        2(i-n-m)        &\textup{ if } n+m+1\leq i \leq 2n+m, \\
        i               &\textup{ if } 2n+m+1 \leq i.
    \end{cases}
\end{equation}
We have graphically depicted $\sigma_{n,m,k}$ in Figure \ref{graphicalpermutation}. With this notation, we can simplify equation \eqref{derstarprod} to
\begin{equation}\label{derstarprod2}
\Theta_{n,m,k}(\varphi)\{h_1,\ldots,h_{m+k}\}=\left\langle F^{(n+m)}(\varphi)\otimes G^{(n+k)}(\varphi) , \sigma_{n,m,k}\left( (\Delta^+)^{\otimes n} \otimes h_1\otimes \ldots \otimes h_{m+k}\right)\right\rangle,
\end{equation}
where $\sigma_{n,m,k}$ permutes arguments in the appropriate fashion. We would then like to exchange the factor $h_1\otimes \ldots \otimes h_{m+k}$ for an element of $\dg{m+k}$. The following lemma concerns the wavefront set of such a product.
\begin{lemma}\label{coneinclusion}
The following identity of cones holds:
    \begin{equation}\label{wavefrontproduct}
    \sigma_{n,m,k}(\WF\left((\Delta^+)^{\otimes n}) \dot{\times} \Gamma_{m+k}\right) \cap \Gamma^c_{n+m} \dot{\times} \Gamma^c_{n+k} = \emptyset,
\end{equation}
where $\sigma_{n,m,k}$ is defined in equation \eqref{disgustingpermutation} and the dotted product of cones was introduced in equation \eqref{dotproduct}.
\end{lemma}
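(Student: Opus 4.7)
The plan is to unfold the definitions and reduce the statement to a finite case analysis organised by the ``sign'' carried by the $\Gamma_{m+k}$ factor. The two structural observations that do all the work are: (i) iterating $\WF(u\otimes v)\subset\WF(u)\dot\times\WF(v)$ together with the hypothesis $\WF(\Delta^+)\subset \mathcal{V}\times(-\mathcal{V})$ shows that each element of $\WF((\Delta^+)^{\otimes n})$ is specified by data $(\xi^{(i)}_L,\xi^{(i)}_R)$ at each of the $n$ factors, with each pair either $(0,0)$ or lying in $\mathcal{V}\times(-\mathcal{V})$ (so if one leg is zero, so is the other); and (ii) since $\mathcal{V}\cap(-\mathcal{V})=\emptyset$ (from the causal-hull assumption on $\mathcal{V}$), any nonzero element of $\Gamma_{m+k}=\mathcal{V}^{\dot\times(m+k)}\cup(-\mathcal{V})^{\dot\times(m+k)}$ has a well-defined sign $s\in\{+,-\}$, with every entry in $s\mathcal{V}\cup\{0\}$ and at least one nonzero.

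Next I would trace through the permutation explicitly. From the definition of $\sigma_{n,m,k}^{-1}$, after permutation the first $n+m$ positions collect, in order, the $n$ first legs $\xi^{(i)}_L$ followed by the first $m$ entries $\eta_1,\dots,\eta_m$ of the $\Gamma_{m+k}$-tuple, while the last $n+k$ positions collect the $n$ second legs $\xi^{(i)}_R$ followed by $\eta_{m+1},\dots,\eta_{m+k}$. The case analysis on $s$ is now almost mechanical. If $s=+$, every entry of the first block lies in $\mathcal{V}\cup\{0\}$; if the first block is not identically zero, it sits in $\mathcal{V}^{\dot\times(n+m)}\subset\Gamma_{n+m}$, so the covector is not in $\Gamma^c_{n+m}\dot\times\Gamma^c_{n+k}$. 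If the first block happens to be identically zero, then by observation (i) all $\xi^{(i)}_R$ vanish too, and all $\eta_i$ with $i\le m$ vanish, so every nonzero entry of the full covector occurs in the second block among $\eta_j\in\mathcal{V}\cup\{0\}$ with $j>m$, placing the second block in $\mathcal{V}^{\dot\times(n+k)}\subset\Gamma_{n+k}$. The case $s=-$ is strictly symmetric, exchanging the roles of the two blocks and of $\mathcal{V}$ and $-\mathcal{V}$. Finally, the degenerate sub-cases built into the dotted product on the left (where either the $\Delta^+$-factor or the $\Gamma_{m+k}$-factor is entirely zero) are immediate: in the first, only $\mathcal{V}$-entries appear in the first block and only $-\mathcal{V}$-entries in the second; in the second, one argues exactly as for the generic $s=\pm$ sub-cases.

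The genuine difficulty is not analytic but combinatorial: one must avoid confusing which block receives which leg of $\Delta^+$ and which slice of $\Gamma_{m+k}$, and correctly account for every possibility that a sub-tuple is zero. A diagram like Figure~\ref{graphicalpermutation} makes the routing unambiguous. Once the routing is fixed, the proof is a direct consequence of the two colour-coding properties $\WF(\Delta^+)\subset \mathcal{V}\times(-\mathcal{V})$ and $\mathcal{V}\cap(-\mathcal{V})=\emptyset$: the permutation sends every $\mathcal{V}$-coloured slot into the first block and every $(-\mathcal{V})$-coloured slot into the second, so the block containing the $\Gamma_{m+k}$-entries of matching colour automatically lies in the corresponding $\Gamma$-cone.
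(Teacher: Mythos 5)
Your proof is correct and follows essentially the same route as the paper's: trace the permutation, split on the sign $s\in\{+,-\}$ of the $\Gamma_{m+k}$ factor, and use $\WF(\Delta^+)\subset\mathcal{V}\times(-\mathcal{V})$ together with the fact that paired $\Delta^+$ legs vanish together. The only difference is presentational: the paper introduces the notation $\tilde{\Xi}=\Xi\cup\underline{0}$, which lets it replace your explicit case analysis on zero sub-tuples by the single observation $\widetilde{\Gamma_{n+m}}\cap\widetilde{\Gamma^c_{n+m}}=\underline{0}$.
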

\begin{proof}
If $\Xi$ is \textbf{any} cone, we shall denote by $\Tilde{\Xi}$ the union $\Xi \cup \underline{0}$, where $\underline{0}$ is the zero-section in the relevant cotangent bundle. This means that, for cones $\Xi_1, \Xi_2$
\begin{equation*}
 \widetilde{\Xi_1 \dot{\times} \Xi_2} = \tilde{\Xi}_1 \times \tilde{\Xi}_2.   
\end{equation*}
In particular, we have that
\begin{equation*}
    \widetilde{\Gamma_l} = \widetilde{\mathcal{V}}^l \cup \left(-\widetilde{\mathcal{V}}\right)^l
\end{equation*}
for any $l\in\mathbb{N}$. This operation of including the zero section obviously commutes with permuting variables, so that equation \eqref{wavefrontproduct} is equivalent to
    \begin{equation} \label{intersection}
        \sigma_{n,m,k}\left(\widetilde{\WF(\Delta^+)}^{ n}\times\widetilde{\Gamma_{m+k}}\right) \, \cap \, \left(\widetilde{\Gamma^c_{n+m}} \times \widetilde{\Gamma^c_{n+k}}\right) = \underline{0}.  
    \end{equation}
Suppose that $\Vec{\boldsymbol{\xi}}=(\boldsymbol{\xi}_1,\ldots,\boldsymbol{\xi}_{2n+m+k})$ is in this intersection, where $\boldsymbol{\xi}_i  \in T^*M$. The situation is graphically represented in figure \ref{covecfig}, where the dashed covectors belong to $\widetilde{\Gamma^c_{n+m}}$ and the solid covectors belong to $\widetilde{\Gamma^c_{n+k}}$.
\begin{figure}
    \centering
    \includegraphics[width=\textwidth]{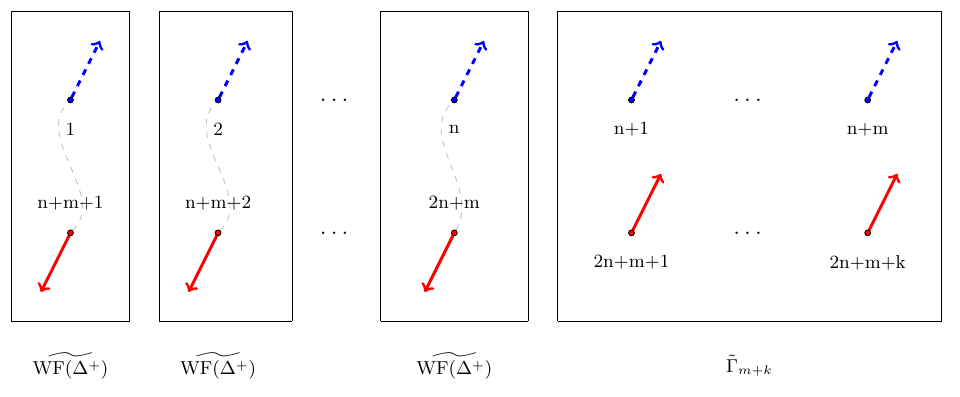}
    \caption{The covectors in Lemma \ref{coneinclusion}.}
    \label{covecfig}
\end{figure}

From the left factor in equation \eqref{intersection}, we see that $\boldsymbol{\xi}_i\in \tilde{\mathcal{V}}\cup -\tilde{\mathcal{V}}$ for all $i$. Furthermore, from the expression of the wavefront set of $\Delta^+$ in equation \eqref{wavefronttwoppoint}, we see that $\boldsymbol{\xi}_1,\ldots, \boldsymbol{\xi}_n \in \tilde{\mathcal{V}}$, and $\boldsymbol{\xi}_{n+m+1},\ldots,\boldsymbol{\xi}_{2n+m} \in -\tilde{\mathcal{V}}$. The remaining covectors are 
    \begin{dmath*}
        (\boldsymbol{\xi}_{n+1}, \ldots, \boldsymbol{\xi}_{n+m},\boldsymbol{\xi}_{2n+m+1},\ldots, \boldsymbol{\xi}_{2n+m+k}) \in \widetilde{\Gamma_{m+k}}.
    \end{dmath*}
Suppose that all of these $\boldsymbol{\xi}_i$ are elements of $\tilde{\mathcal{V}}$ (the case where they are in $-\tilde{\mathcal{V}}$ is analogous). This implies that
    \begin{equation*}
        (\boldsymbol{\xi}_1, \ldots \boldsymbol{\xi}_{n+m}) \in \widetilde{\Gamma_{m+n}}\cap \widetilde{\Gamma^c_{n+m}} = \underline{0}.
    \end{equation*}
In particular, $\boldsymbol{\xi}_1, \ldots, \boldsymbol{\xi}_n \in \underline0$, which implies that $\boldsymbol{\xi}_{n+m+1}, \ldots, \boldsymbol{\xi}_{2n+m} \in \underline0$ as these pairs of covectors are elements of $\widetilde{\WF(\Delta^+)}$. This in turn implies that 
\begin{equation*}
        (\boldsymbol{\xi}_{n+m+1}, \ldots \boldsymbol{\xi}_{2n+l}) \in \widetilde{\Gamma_{k+n}}\cap \widetilde{\Gamma^c_{k+n}} = \underline{0}.
    \end{equation*}
    Hence we conclude that $\Vec{\boldsymbol{\xi}} \in \underline{0}$.
\end{proof}

\begin{prop}\label{starproduct}
If $F$ and $G$ are equicausal functionals and $n\in \mathbb{N}$, then $\mathcal{B}_n(F,G)$ is an  equicausal functional as well.
\end{prop}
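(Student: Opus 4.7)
The plan is to compute the derivatives of $\mathcal{B}_n(F,G)$ via the Leibniz rule just established, express each such derivative as a composition of a continuous $\varphi$-independent map with a $\varphi$-dependent pairing, and then deduce the required equicontinuity from the equicausality of $F$ and $G$ together with Proposition \ref{extensiontensorprod}.

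Concretely, let $W \in \sD'(M^{2n})$ denote the bidistribution $W(x_1,\ldots,x_n,y_1,\ldots,y_n) = \prod_{i=1}^n \Delta^+(x_i,y_i)$. Lemma \ref{coneinclusion} specialised to $m=k=0$ gives $\WF(W)\cap (\Gamma_n^c\dot\times \Gamma_n^c) = \emptyset$, so the Leibniz rule for equicausal functionals applies: $\mathcal{B}_n(F,G)$ is smooth, and its $l$-th derivative at $\varphi$ evaluated on $(h_1,\ldots,h_l)$ is a finite sum, over $m+k=l$ and permutations of the $h_i$, of the quantities $\Theta_{n,m,k}(\varphi)\{h_1,\ldots,h_l\}$ from equation \eqref{derstarprod2}. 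It therefore suffices to show that each $\Theta_{n,m,k}(\varphi)$ extends continuously to $\dg{m+k}$ and that the family $\{\Theta_{n,m,k}(\varphi)\}_{\varphi\in C}$ is equicontinuous for every compact $C\subset \sE$.

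Define the closed cone $\Gamma := \sigma_{n,m,k}\bigl(\WF((\Delta^+)^{\otimes n})\dot\times \Gamma_{m+k}\bigr)$ and the map
\[
A\colon \dg{m+k} \to \sD'_\Gamma(M^{2n+m+k}), \qquad Z \mapsto \sigma_{n,m,k}\bigl((\Delta^+)^{\otimes n}\otimes Z\bigr),
\]
which is continuous and independent of $\varphi$ by Proposition \ref{hypocontinuous} applied with fixed first argument $(\Delta^+)^{\otimes n}$. By Lemma \ref{coneinclusion}, $\Gamma \cap (\Gamma_{n+m}^c \dot\times \Gamma_{n+k}^c) = \emptyset$; negating this identity and using the symmetry $-\Gamma_l^c = \Gamma_l^c$ also yields $(\Gamma_{n+m}^c \dot\times \Gamma_{n+k}^c)\cap (-\Gamma) = \emptyset$. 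The first identity makes Hörmander's product well-defined, so upon unpacking the definitions the pairing $\langle F^{(n+m)}(\varphi)\otimes G^{(n+k)}(\varphi), A(Z)\rangle$ equals $\Theta_{n,m,k}(\varphi)\{Z\}$. The second identity verifies the hypothesis of Proposition \ref{extensiontensorprod}, which applied to the equicontinuous sets $F^{(n+m)}(C)\subset \egc{n+m}$ and $G^{(n+k)}(C)\subset \egc{n+k}$ produces an equicontinuous family $\{F^{(n+m)}(\varphi)\otimes G^{(n+k)}(\varphi)\}_{\varphi\in C}$ of linear functionals on $\sD'_\Gamma$. Pulling back along the continuous map $A$ preserves equicontinuity, yielding the desired equicontinuity of $\{\Theta_{n,m,k}(\varphi)\}_{\varphi\in C}$ on $\dg{m+k}$.

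The main obstacle is aligning the wavefront directions: both the Hörmander criterion and the hypothesis of Proposition \ref{extensiontensorprod} must be extracted from Lemma \ref{coneinclusion}, which directly supplies only one of the two required disjointness conditions. The negation trick enabled by the symmetry $\Gamma_l = -\Gamma_l$ circumvents this, allowing both conditions to be read off from a single statement of the lemma.
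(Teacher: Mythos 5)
Your proof follows the same strategy as the paper's: apply the Leibniz rule to reduce to the $\Theta_{n,m,k}$ terms, factor each as a $\varphi$-independent continuous map $\Upsilon$ (your $A$) into $\sD'_{\Gamma_{n,m,k}}$ composed with the tensor-product pairing $L(\varphi)$ extended via Proposition~\ref{extensiontensorprod}, and conclude equicontinuity by pullback. The only difference is that you spell out explicitly why the hypotheses of the Leibniz rule and of Proposition~\ref{extensiontensorprod} are met via Lemma~\ref{coneinclusion} and the symmetry $\Gamma_l=-\Gamma_l$, which the paper leaves implicit; this is a welcome clarification but not a different proof.
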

\begin{proof}
Let $F,G\in \mathcal{F}_\ec$, $l,n\in\mathbb{N}$ and $C\subset \sE$ compact. By the Leibniz rule \eqref{Leibniz}, the following identity holds
\begin{equation*}
    \mathcal{B}_n(F,G)^{(l)} (\varphi) = \sum_{m+k=l} \binom{l}{m}S_l\Theta_{n,m,k}(\varphi),
\end{equation*}
where $\Theta_{n,m,k}$ is defined in equation \eqref{derstarprod}, and $S_l$ is the operator that symmetrises the distributional variables. Hence it suffices to show that, for all $n,m$ and $k$, $\Theta_{n,m,k}(\varphi)$ admits an extension from $\sE^{m+k}$ to $\dg{m+k}$, which is equicontinuous on $C$. 

We introduce the closed cones
\begin{equation*}
    \Gamma_{n,m,k} \equiv \sigma_{n,m,k}(\WF\left((\Delta^+)^{\otimes n}) \dot{\times} \Gamma_{m+k}\right).
\end{equation*}
Let $\varphi \in \sE$, using Proposition \ref{extensiontensorprod} and Lemma \ref{coneinclusion}, it follows that $F^{(n+m)}(\varphi) \otimes G^{(n+k)}(\varphi)$ admits a continuous extension to $\dg{n,m,k}$, which we denote by $L(\varphi)$. Furthermore, $L(C)$ is an equicontinuous set, as both $F^{(n+m)}(C)$ and $G^{(n+k)}(C)$ are.

Multiplying by factors of $\Delta^+$ and then permuting arguments defines a continuous map
\begin{equation*}
   \Upsilon =\sigma_{n,m,k} \circ \left((\Delta^+)^{\otimes n}\otimes \_ \right) : \dg{m+k} \to \sD'_{\Gamma_{n,m,k}},
\end{equation*}
using hypocontinuity of the tensor product. We denote the restriction to $\sE^{ m+k}$ by the same symbol, so that we can write 
\begin{equation*}
    \Theta_{n,m,k}(\varphi) = L(\varphi) \circ \Upsilon.
\end{equation*}
We obtain the following commutative diagram
\begin{equation*}
\begin{tikzcd}[column sep = large]
\sE^{ m+k} \ar[rd,"\Upsilon "] \ar[dd] \ar[drr, bend left, "\Theta_{n,m,k}(\varphi)"]&\\
&\mathcal{D}'_{\Gamma_{n,m,k}} \ar[r,"L(\varphi)"] & \mathbb{C} \\
\dg{m+k} \ar[ur,"\Upsilon"']  \ar[urr, bend right,"\Upsilon^* L(\varphi)"']
\end{tikzcd}
\end{equation*}
Hence $\Upsilon^* L(\varphi)$ is the sought after extension to $\dg{m+k}$. As $\Upsilon$ is continuous if follows that $\Upsilon^*L(C)$ is equicontinuous, which is what we needed to show.
\end{proof}
We note that, due to the previous proposition at $n=0$, i.e.\ the pointwise product, and Theorem \ref{localfuncts}, we obtain that multilocal functionals are equicausal. More generally, we have shown the following:
\begin{theorem}
    The complex $(\sX_\ec(E)[[\hbar]],\delta)$, equipped with the $\star$-product defined in equation \eqref{starproductmultivector}, is a differential graded algebra. 
\end{theorem}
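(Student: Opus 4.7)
Four DGA axioms must be verified: $\delta^2 = 0$, closure of $\star$ on $\sX_{ec}(E)[[\hbar]]$, associativity of $\star$, and the graded Leibniz rule $\delta(X \star Y) = \delta X \star Y + (-1)^{|X|} X \star \delta Y$. The identity $\delta^2=0$ follows from equation \eqref{differential} by antisymmetry of multivector fields together with linearity of $P$, and $\delta$ preserves equicausality as was already used implicitly in Section \ref{timeslice} (it is immediate since $\delta X(\varphi)$ is a continuous partial evaluation of $X(\varphi)$ against $P\varphi \in \sE^!$, and equicontinuity transfers along the Leibniz rule).

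\textbf{Closure of the star product.} This is the main new analytic content, and amounts to upgrading Proposition \ref{starproduct} from the functional case to multivector fields. For $X \in \sX^p_{ec}$ and $Y \in \sX^q_{ec}$, each fixed $n$ and permutation $\sigma$ in equation \eqref{starproductmultivector} contributes a term that differs from the scalar Moyal case only by an additional wedge of the external arguments $h_1, \ldots, h_{p+q}$. Applying the Leibniz rule of this section and differentiating $l$ times yields terms analogous to $\Theta_{n,m,k}$ in \eqref{derstarprod2}, but with an enlarged block of external distributional indices. The cone computation of Lemma \ref{coneinclusion} generalizes directly: the additional $p+q$ external slots carry covectors in $\Gamma^c_{p+q+\cdots}$ and enter the analysis in the same ``free'' role as the $\Gamma_{m+k}$-slots, so the zero-section collapse forced by the causal hull condition on $\mathcal{V}$ proceeds verbatim. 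Equicontinuity in $\varphi$ on a compact $C \subset \sE$ then follows as in Proposition \ref{starproduct}, from Proposition \ref{extensiontensorprod}, hypocontinuity of the tensor product, and continuity of pullback under the permutation and under contraction with $(\Delta^+)^{\otimes n}$. Since closure holds separately at each order, $X \star Y$ lies in $\sX_{ec}(E)[[\hbar]]$.

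\textbf{Associativity and Leibniz rule.} Both properties are standard for Moyal-type deformation quantizations with $\varphi$-independent kernel. Associativity at each order in $\hbar$ reduces to a combinatorial identity for sums of paired $\Delta^+$-contractions between factors drawn from $\{X,Y,Z\}$; the scalar version is well-known (see \cite{Rejzner2016}), and the multivector generalization only adds an external-slot antisymmetrization that is transparent to the contractions. For the graded Leibniz rule, writing $\delta(X\star Y)(\varphi) = (X\star Y)(\varphi)\{P\varphi, \_\}$ and splitting the permutation sum in \eqref{starproductmultivector} according to whether the $P\varphi$-slot is routed into an $X$-argument or a $Y$-argument, one recovers $\delta X \star Y$ from the first block and $(-1)^{|X|} X \star \delta Y$ from the second, the sign arising from the unshuffle. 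That no extra terms appear uses the $\varphi$-independence of $P$ (so that derivatives of $P\varphi$ reorganize neatly into $(\delta X)^{(n)}$ and $(\delta Y)^{(n)}$ via linearity of $P$) and that the $P\varphi$-slot is always external, never paired against $\Delta^+$.

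\textbf{Main obstacle.} The conceptually delicate step is the closure argument, and within it the only non-routine point is the diagrammatic bookkeeping needed to generalize Lemma \ref{coneinclusion} to the multivector setting. Once the additional $p+q$ external slots are inserted into the diagram alongside the $\Delta^+$-pairs and the internal derivatives, the closure of the causal hull of $\mathcal{V}$ away from the zero section forces the same collapse as in the original proof. The obstacle is thus essentially notational rather than mathematical: the analytic content has already been packaged in Proposition \ref{starproduct}, Proposition \ref{extensiontensorprod}, and Lemma \ref{coneinclusion}, and the verification of the remaining axioms is algebraic.
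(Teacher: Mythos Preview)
Your proposal is correct and matches the paper's approach. In fact the paper gives no explicit proof of this theorem at all: it states the result immediately after Proposition~\ref{starproduct}, having already remarked (just before equation~\eqref{starproductmultivector}) that the multivector case ``adds no extra conceptual complication'' beyond notational bookkeeping, and then moves on to the conclusions. Your write-up is thus more detailed than the paper itself, but the core idea---closure is Proposition~\ref{starproduct} with extra external slots handled by the same cone argument, while associativity and the graded Leibniz rule are standard algebraic facts for Moyal-type products with $\varphi$-independent kernel---is exactly what the paper relies on implicitly.
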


\section{Conclusions and outlook}\label{outlook}
In this paper, we have critically examined the definition of microcausal functionals. We found that the pointwise fashion in which conditions are imposed results in unwanted behaviour: There exist smooth functionals, whose derivatives have the required singular behaviour at all points in $\sE$, but whose singularities do not vary continuously. As a result, the class of microcausal functionals does not provide a good foundation for constructing models in pAQFT, as they do not close under the $\star$-product and are unlikely to satisfy the time-slice axiom.

We stress that this  is not a consequence of the particular form of the singular structure imposed, or of the fact that we work with open cones, rather than closed ones. Indeed, our counterexample required only regular functionals, which shows that the derivatives of functionals need not have \textbf{any} singularities in order for the Poisson bracket to fail to close. Similarly, counterexamples (not given in this text) to the closure of the homotopy operator $\mathcal{H}$ used in Proposition~\ref{propostionhomoperator} on $\sX_{\mu c}$ can also be phrased in terms of regular multivector fields.  

Our solution to the failure of $\mathcal{H}$ to close was to impose that locally, the singular structure of $F^{(n)}$ should be bounded away from $\Gamma_n$, and that the `coefficients of boundedness' should be uniform when varying $\varphi \in \sE$ slightly. We stress that these extra assumptions are minimal from a practical point of view: The first point is required to avoid covectors approaching the boundary of $\Gamma_n$, which could cause $\WF\left(\int F^{(n)}(\gamma_t) dt \right)$ to fail to be a subset of $\Gamma_n^c$. The second point is needed to make sure that the integrand is bounded, so that the integral can be sensibly performed within $\egc{n}$. It was surprising to us that this minimal solution implies that $F^{(n)}$ is conveniently smooth into $\egc{n}$, which is a far stronger result. It is this fact that also ensures the closure under the Poisson bracket. 

There are other plausible solutions to these problems. One is to use functionals whose derivatives lie in the completion of $\egc{n}$. This modification has been suggested in \cite{Brunetti2019} in order to obtain a space of functionals that is \textit{complete} with respect to a certain topology on the space of field functionals (which is something we have not gone into in the present paper). This space can be characterized as the space of distributions whose \textit{dual} wavefront set, i.e.\ the union of all Sobolev wavefront sets, is contained within $\Lambda$. Functionals whose derivatives lie in these spaces are studied in \cite{Dabrowski2014b,Dabrowski2014c}. We elected not to go into this solution as we preferred working with the smooth wavefront set, which is standard in the pAQFT literature. We expect that similar results to those we obtained in this article can be obtained in that more general setting.

We mention some open problems relating to our present investigation. The first is to extend the present constructions to gauge theories using the BV formalism, as in \cite{Fredenhagen2013,Fredenhagen2012}. Theories with non-trivial local gauge group are never Green hyperbolic, due to the presence of compactly supported solutions to the equations of motion. For this reason, our present results are only applicable after gauge fixing, as that procedure restores Green hyperbolicity. However, it is a priori unclear whether this procedure can always be performed, as it requires a gauge-fixing scheme to be introduced on a model-by-model basis. It would be interesting to see if we could, for free gauge theories, rigorously link the gauge-invariant, on-shell functionals to the homology of the BV complex, without requiring a specific gauge fixing. This problem will be addressed in \cite{HRV2}.

Secondly, in \cite{Brunetti2019}, Brunetti et al.\ studied families of differential operators $P$ arising from the linearisation of the Euler-Lagrange equations around some configuration $\varphi\in\sE$. They imposed that $P$ is normally hyperbolic at each configuration, and that the metric stemming from this operator's principal symbol has smaller light cones than the physical metric of the spacetime $M$, so as not to break causality. 

However, we feel that the class of normally hyperbolic operators is overly restrictive. The Dirac equation, Maxwell's equations, and the Proca equation are not given by normally hyperbolic operators, but are of clear physical interest. For this reason, we have tried to make only minimal assumptions on $P$. The first of these assumptions is Green hyperbolicity, which is the existence of propagators and compatibility with causal structure. This is particularly important for the time-slice axiom. The second assumption is the existence of a cone $\mathcal{V}\subset \dot T^*M$ that governs the wavefront set of the Green's functions. This is used to formulate the generalized Hadamard condition for $\Delta^+$ and to define microcausal and equicausal functionals. However, it is unclear to us how to encode these assumptions into a set of requirements in the Lagrangian formalism.   



Finally, it would be interesting to investigate how we can fit our results in with the program carried out in \cite{Gwilliam2020,Gwilliam2022}, where quantization in terms of nets of algebras is compared to that in terms of factorization algebras. Our equicausal multivector fields are an algebra extending the multilocal functionals used in those sources. In proving their equivalence statement, a central role was played by the time-ordering operator on multilocal functionals. This is an operator that is constructed through Epstein-Glaser renormalisation, and is thus intricately tied to local functionals. It is unclear to us if, and how, this operator might be extended to more general classes of functionals.

\section*{Acknowledgements}
We would like to thank C. Brouder, Y. Dabrowski, C. Fewster, A. Hoffman, A. Riello, and A. Strohmaier for helpful discussions and email exchanges over the course of this project. This research was carried out in part during visits to O. Gwilliam at The University of Massachusetts and to The Perimeter Institute. BV is grateful to  the UKRI for the financial support for his PhD.
\appendixpage
\appendix
\section{Proof of the Leibniz integral rule}
The proof requires the following lemma, which is a more general version of Theorem 2.1.5 in \cite{Hamilton1982}.
\begin{lemma}
    Let $X$ be a topological space and $B$ a complete LCTVS. If $f:X \times [0,1] \to B$ is a continuous map, then 
    \begin{equation*}
        \Tilde{f}(x) = \int_0^1 f(x,s) ds
    \end{equation*}
   defines a continuous map $\Tilde{f}:X \to B$.
\end{lemma}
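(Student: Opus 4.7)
The plan is to reduce this to the classical argument that continuity of $f$ on a compact slice $\{x_0\} \times [0,1]$ upgrades to a uniform statement near $x_0$, and then to use the fact that seminorms commute with integration up to an inequality.

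Concretely, I would fix $x_0 \in X$, a continuous seminorm $p$ on $B$, and $\varepsilon > 0$, and aim to produce a neighbourhood $U$ of $x_0$ such that $p(\tilde f(x) - \tilde f(x_0)) < \varepsilon$ for all $x \in U$. The first step is the tube-lemma-style argument: for each $s \in [0,1]$, continuity of $f$ at $(x_0,s)$ gives open neighbourhoods $U_s \ni x_0$ and $V_s \ni s$ such that $p(f(x,t) - f(x_0,s)) < \varepsilon/3$ on $U_s \times V_s$. By compactness of $[0,1]$, finitely many $V_{s_1},\ldots,V_{s_n}$ cover $[0,1]$; set $U = \bigcap_i U_{s_i}$. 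For any $s \in [0,1]$, pick $i$ with $s \in V_{s_i}$ (so in particular $p(f(x_0,s) - f(x_0,s_i)) < \varepsilon/3$); then for $x \in U$ the triangle inequality yields
\begin{equation*}
    p(f(x,s) - f(x_0,s)) \leq p(f(x,s) - f(x_0,s_i)) + p(f(x_0,s_i) - f(x_0,s)) < \tfrac{2\varepsilon}{3}.
\end{equation*}

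The second step is to transfer this bound to the integrated functional. Since $B$ is complete and $s \mapsto f(x,s) - f(x_0,s)$ is continuous $[0,1] \to B$, the integral $\int_0^1 (f(x,s) - f(x_0,s))\, ds$ exists in $B$ and equals $\tilde f(x) - \tilde f(x_0)$ by linearity of the integral. The standard inequality $p\bigl(\int_0^1 a_s\, ds\bigr) \leq \int_0^1 p(a_s)\, ds$ for continuous curves in an LCTVS (a consequence of the definition of the integral as a limit of Riemann sums together with continuity of $p$) then gives
\begin{equation*}
    p(\tilde f(x) - \tilde f(x_0)) \leq \int_0^1 p(f(x,s) - f(x_0,s))\, ds \leq \tfrac{2\varepsilon}{3} < \varepsilon
\end{equation*}
for all $x \in U$. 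Since $p$ and $\varepsilon$ were arbitrary and the topology of $B$ is generated by its continuous seminorms, this proves continuity of $\tilde f$ at $x_0$, and $x_0$ was arbitrary.

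There is no serious obstacle here; the only point worth flagging is that $X$ is merely a topological space, so one cannot use a sequential/metric formulation of uniform continuity. The finite subcover of $[0,1]$ by the $V_{s_i}$ is exactly what replaces this, and the completeness of $B$ is used only to ensure that the integrals in question actually lie in $B$ (rather than in its completion).
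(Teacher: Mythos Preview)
Your proof is correct and follows essentially the same route as the paper: a tube-lemma compactness argument on $\{x_0\}\times[0,1]$ to get a uniform estimate near $x_0$, followed by passing under the integral. The only cosmetic difference is that you work directly with continuous seminorms and the inequality $p\bigl(\int a_s\,ds\bigr)\le\int p(a_s)\,ds$, whereas the paper phrases the same estimate via equicontinuous subsets of $B'$ and the identity $l\bigl(\int a_s\,ds\bigr)=\int l(a_s)\,ds$.
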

\begin{proof}
    We note that the integral is well-defined as $B$ is complete and $s\mapsto f(x,s)$ is continuous for fixed $x$. We will show that it is continuous as we vary $x$. 
     
    Let $x_0 \in X$, $H\subset B'$ equicontinuous and $\epsilon > 0$. By definition of equicontinuous sets, there is a neighbourhood of zero $V\subset B$ so that
\begin{equation*}
    |l(b)| < \epsilon \: \forall \: b \in V \: , \: l\in H.
\end{equation*}

Let $W\subset V$ be a \nbhd such that $W+W\subset V$. As $f$ is continuous, there exist, for all $s \in [0,1]$, neighbourhoods $U_s$ of $x_0$ and $I_s$ of $s$ such that 
\begin{equation*}
    x\in U_s, t\in I_s \implies f(x,t)-f(x_0,s) \in W.
\end{equation*}
The sets $U_s\times I_s$ define a cover of $\{x_0\}\times[0,1]$. By compactness, there exists a finite subcover $\{U_i,I_i\}_{i=1}^n$ relating to some $s_1,\ldots, s_n \in [0,1]$. Setting $U=\cap_{i=1}^n U_i$, it follows that, for $x \in U$ and $s\in[0,1]$ 
\begin{equation*}
    f(x,s)-f(x_0,s) = \left( f(x,s)-f(x_0,s_i)\right)+\left(f(x_0,s_i)-f(x_0,s)\right) \in W+W\subset V
\end{equation*}
for some $i$. This means that 
\begin{equation*}
    \left|l\left(\Tilde{f}(x)-\Tilde{f}(x_0)\right)\right| \leq \int_0^1 |l(f(x,s)-f(x_0,s))| ds < \epsilon \: \forall \: l \in H \: , x \in U
\end{equation*}
Hence $\tilde f(x)$ converges to $\tilde f(x_0)$ uniformly on equicontinuous subsets of $B'$ as $x\to x_0$, and therefore also in the original topology on $B$. As the point $x_0$ was arbitrary, it follows that $\tilde{f}$ is continuous.
\end{proof}
\LeibnizIntegralRule*
\begin{proof}
As the partial derivative of $F$ with respect to $\varphi$ exists and is continuous, there is a continuous map $L: \mathbb{R} \times A \times A \times \mathbb{R} \to B$ such that
\begin{equation*}
   L(s,\varphi,\psi,t) = \begin{cases}
       \frac{1}{t}(F(s,\varphi + t \psi) - F(s,\varphi)), \ &t\neq 0 \\
       \frac{\delta}{\delta \varphi} F(s,\varphi)\{\psi\}, \ &t=0.
   \end{cases}
\end{equation*}
We find that, for $t\neq 0$
\begin{align*}
    \frac{1}{t}(G(\varphi + t \psi) - G(\varphi)) &= \int_0^1 \frac{1}{t}(F(s,\varphi + t \psi) - F(s,\varphi))ds \\ &=  \int_0^1  L(s,\varphi,\psi,t)ds \\ &\equiv \tilde{L}(\varphi,\psi,t).
\end{align*}
The map $\tilde{L}$ is continuous by the previous lemma. Hence we can take the limit of $t\to 0$ to conclude that
\begin{equation*}
    G^{(1)}(\varphi)\{\psi\} = \tilde{L}(\varphi,\psi,0) = \int_0^1 L(s,\varphi,\psi,0)ds = \int_0^1 \frac{\delta}{\delta \varphi} F(s,\varphi)\{\psi\}ds,
\end{equation*}
Iterating this argument shows that $G$ is smooth, and that we may perform all derivatives under the integral sign.
\end{proof} \label{integralruleappendix}
\section{Technical details on microlocal analysis}\label{technicallemmata}
This appendix gathers some technical proofs relating to microlocal analysis.

\tensorproductehypo*
\begin{proof}
    Let $v \in \sE'_{\Lambda_b}$ and write $\Gamma = \WF(v)$ and $K=\supp(v)$. We first show that 
    \begin{equation*}
        \_ \otimes v : \: \sE'_{\Lambda_a} \to \sE'_{\Lambda},
    \end{equation*}
is a continuous map.

We use an alternate characterization of the topology on $\sE'_{\Lambda_a}$. Let $\Xi_l \in \Lambda_a$ be an exhausting sequence of \textbf{closed} cones $\Xi_l \in \Lambda_a$ and let $K_l \subset N$ be a compact exhaustion. We refer the reader to Section 3.1 of \cite{Dabrowski2014} for explicit details on this construction. We set
\begin{equation*}
    E_l = \sD'_{\Xi_l}(K_l)\subset \sD'_{\Xi_l}(N),
\end{equation*}
endowed with the subspace topology. Clearly, there are continuous injections between $E_l$ and $E_{l+1}$. It follows that
\begin{equation}\label{directlimitcharact}
    \sE'_{\Lambda_a} = \varinjlim_{l\in\mathbb{N}} E_l
\end{equation}
as a vector-space. By Lemma 10 in \cite{Dabrowski2014}, the inductive limit topology from this diagram matches the strong topology on $\el{a}(N)$, meaning that equation \eqref{directlimitcharact} in fact holds in the category of locally convex topological vector spaces.

The tensor product 
\begin{equation*}
    E_l \xrightarrow{\_\otimes v} \sD_{\Xi_l\dot{\times}\Gamma}(K_l \times K) 
\end{equation*}
is continuous by Proposition \ref{hypocontinuous}, for all $l\in \mathbb{N}$. We obtain the following commutative diagram
\begin{equation*}
\begin{tikzcd}
    E_l \ar[r] \ar[d,"\_\otimes v"] & E_{l+1} \ar[r] \ar[d,"\_\otimes v"] &\el{a}(N) \ar[d,dashed,"\_\otimes v"]\\
    \sD'_{\Xi_l \dot\times \Gamma} \ar[r](K_l \times K) & \sD'_{\Xi_{l+1} \dot\times \Gamma} \ar[r](K_{l+1} \times K) \ar[r] &\sE'_{\Lambda}(N\times P)
\end{tikzcd}
\end{equation*}
in which the horizontal arrows are continuous inclusions. It then follows from the universal property of the direct limit in equation \eqref{directlimitcharact} that the rightmost arrow in the diagram is continuous as well.

Exchanging roles of $N$ and $P$, we have shown that the tensor product
\begin{equation*}
    \el{a}(N) \times \el{b}(P) \to \sE'_{\Lambda}(N\times P)
\end{equation*}
is separately continuous. As both these spaces are barreled, by Proposition 28 in \cite{Dabrowski2014}, it follows from Theorem 41.2 in \cite{Treves1967} that it is in fact hypocontinuous. 
\end{proof}

We turn to proving Lemmas \ref{Curveiscontinuous} and \ref{CurveisSmooth}. The proofs rest on another lemma, which we prove first.
\begin{lemma}\label{fouriertransformsmooth}
If  $u$ is a smooth curve $\mathbb{R}\to \sE'(\mathbb{R}^n)$ for some $n\in\mathbb{N}$, then the Fourier transform of $u$ defines a smooth function
\begin{equation*}
    \hat{u}: \mathbb{R}\times \mathbb{R}^n \to \mathbb{C},
\end{equation*}
given by
\begin{equation*}
    \hat{u}:(t,\xi)\mapsto \hat{u}_t(\xi).
\end{equation*}
\end{lemma}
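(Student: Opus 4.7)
The plan is to establish smoothness of $\hat u:\mathbb{R}\times\mathbb{R}^n\to\mathbb{C}$ by verifying that all iterated partial derivatives exist and are jointly continuous, which is the standard sufficient condition for smoothness of scalar functions on a finite-dimensional domain. The formal computation should give
\begin{equation*}
    \partial_t^k\partial_\xi^\alpha \hat u(t,\xi) = \bigl(\partial_t^k u_t\bigr)\bigl((-ix)^\alpha e_{-\xi}\bigr),
\end{equation*}
where $e_{-\xi}(x)=e^{-i\langle x,\xi\rangle}$. Both differentiations can be justified separately: the $t$-derivative can be pulled inside by smoothness of $u$ combined with continuity of the linear evaluation at a fixed test function; the $\xi$-derivative can be pulled inside by the chain rule, using that $\xi\mapsto e_{-\xi}$ is smooth $\mathbb{R}^n\to \sE(\mathbb{R}^n)$ with derivatives $\partial_\xi^\alpha e_{-\xi}=(-ix)^\alpha e_{-\xi}$, and that $u_t$ is a continuous linear functional.

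The main obstacle is the joint continuity of the right-hand side in $(t,\xi)$. The evaluation pairing $\sE'(\mathbb{R}^n)\times\sE(\mathbb{R}^n)\to\mathbb{C}$ is only hypocontinuous (not jointly continuous) when $\sE'$ carries the strong topology, so the naive composition argument fails. To get around this, I will localize in $t$: fixing $(t_0,\xi_0)$, choose a compact neighbourhood $I$ of $t_0$ in $\mathbb{R}$. Since $u$ is smooth, the curve $t\mapsto \partial_t^k u_t$ is continuous into $\sE'(\mathbb{R}^n)$, hence its image $B=\{\partial_t^k u_t:t\in I\}$ is compact, therefore bounded. Because $\sE(\mathbb{R}^n)$ is a Fréchet (in particular barreled) space, the Banach--Steinhaus theorem implies that $B$ is equicontinuous in $\sE'(\mathbb{R}^n)$. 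On an equicontinuous set the strong dual topology agrees with the topology of uniform convergence on compact subsets of $\sE(\mathbb{R}^n)$, and on that restricted set the evaluation pairing $B\times \sE(\mathbb{R}^n)\to\mathbb{C}$ is jointly continuous. Combining this with continuity of $\xi\mapsto (-ix)^\alpha e_{-\xi}$ into $\sE(\mathbb{R}^n)$ gives joint continuity of $(t,\xi)\mapsto (\partial_t^k u_t)((-ix)^\alpha e_{-\xi})$ on $I\times\mathbb{R}^n$, and since $(t_0,\xi_0)$ was arbitrary, on all of $\mathbb{R}\times\mathbb{R}^n$.

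Having shown that every mixed partial derivative of $\hat u$ exists and is jointly continuous, the standard criterion for smoothness on $\mathbb{R}^{n+1}$ yields that $\hat u$ is smooth. The hard part, as noted, is navigating the failure of joint continuity of the evaluation pairing; the resolution via equicontinuity of the image of a compact parameter interval is the key technical step.
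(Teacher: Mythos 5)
Your proof is correct, and it is a genuinely different route from the paper's. The paper establishes joint continuity of $\hat u$ (and then of its derivatives, iterating) by a direct sequential $\epsilon/2$-argument: it fixes convergent sequences $t_m\to t$, $\xi_m\to\xi$, observes that $B=\{e_{\xi_m}\}$ is a bounded subset of $\sE(\mathbb{R}^n)$, and uses that continuity of $u$ in the strong topology of $\sE'$ means exactly uniform convergence of $u_{t_m}(f)$ on the bounded set $B$; the remaining error term $|\hat u_t(\xi_m)-\hat u_t(\xi)|$ is handled by continuity of $\hat u_t$ at fixed $t$. You instead localize in the other variable: you bound $\{\partial_t^k u_t : t\in I\}$ for a compact interval $I$, invoke Banach--Steinhaus (using that $\sE(\mathbb{R}^n)$ is Fr\'echet, hence barreled) to promote boundedness to equicontinuity, and then appeal to joint continuity of the evaluation pairing on an equicontinuous set times $\sE(\mathbb{R}^n)$. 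Both proofs resolve the same obstruction --- the evaluation pairing $\sE'\times\sE\to\mathbb{C}$ is only hypocontinuous --- but they localize on opposite factors: the paper exploits boundedness of $\{e_{\xi_m}\}$ in $\sE$, while you exploit equicontinuity of $\{\partial_t^k u_t\}$ in $\sE'$. The paper's argument is more elementary and entirely self-contained (no Banach--Steinhaus, no general facts about evaluation on equicontinuous sets); yours is shorter once the abstract machinery is admitted and handles all mixed partials at once rather than iterating the $C^1$-criterion. One cosmetic point: the paper's $e_\xi(x)=\exp(-i\langle\xi,x\rangle)$ already carries the minus sign, so your $e_{-\xi}$ denotes the same function with the opposite sign convention for the base family --- harmless, just be consistent.
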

\begin{proof}
For $\xi\in \mathbb{R}^n$, we define the oscillatory function
\begin{equation*}
    e_\xi(x) = \exp(-i\langle \xi,x\rangle),
\end{equation*}
where $\langle,\rangle$ denotes the Euclidean scalar product on $\mathbb{R}^n$. Suppose that $t_m \to t\in \mathbb{R}$ and $\xi_m \to \xi \in \mathbb{R}^n$, let $\epsilon > 0$ and define the set $B=\{e_{\xi_m} \, | \, m \in \mathbb{N}\}\subset \sE(\mathbb{R}^n)$. As the map $\xi \mapsto e_\xi$ is continuous and $\{\xi_n\, | \, n \in \mathbb{N}\}$ is bounded, $B$ is bounded as well. Due to the fact that $u$ is continuous with respect to the strong topology, we can find a $\delta>0$ such that
\begin{equation*}
    |s-t|<\delta \implies \sup_{f \in B}|u_s(f)-u_t(f)| < \epsilon/2.
\end{equation*}
Also, since $\hat{u}_t$ is a continuous function of $\xi$, there is a $\delta'>0$ such that 

\begin{equation*}
    |\xi - \zeta|<\delta' \implies |u_t(\xi)-u_t(\zeta)| < \epsilon/2.
\end{equation*}
Hence we find, for sufficiently large $n$
\begin{equation*}
    |\hat{u}_{t_n}(\xi_n)-\hat{u}_t(\xi)| \leq |\hat{u}_{t_n}(\xi_n)-\hat{u}_t(\xi_n)| + |\hat{u}_{t}(\xi_n)-\hat{u}_t(\xi)|< \epsilon,
\end{equation*}
This shows that $\hat{u}$ is continuous. To show smoothness, we calculate
\begin{equation*}
    \partial_t \hat{u}_t (\xi) = \partial_t u_t (e_\xi) = \widehat{\partial_t u_t}(\xi).
\end{equation*}
As $\partial_t u_t$ is a smooth curve in its own right, the previous argument implies that this is a jointly continuous function of $t$ and $\xi$. For the partial derivatives with respect to $\xi$, we calculate
\begin{equation*}
    \partial_{\xi_i} \hat{u}_t(\xi) = u_t(- i x_i e_\xi) = - i \widehat{x_i u_t}(\xi).
\end{equation*}
Where we have used the fact that $u_t$ is linear and continuous to move the differentiation inside $u_t$. Again, $x_iu$ is a smooth curve in its own right, and hence $\widehat{x_iu}$ is jointly continuous.

Hence $\hat{u}$ is a $C^1$ function, as both its partial derivatives exist and are continuous. Repeating these steps to higher order shows that $\hat{u}$ is smooth. 
\end{proof}
\Curveiscontinuous*
\begin{proof}
We start with the scalar case, ie. $\tilde{E}=N\times \mathbb{R}$, and show that $t\mapsto u_t(Z)$ is continuous at $0$, as all other points are equivalent. As $\{u_t\}_{t\in(-1,1)}$ is an equicontinuous set, there is a closed cone $\Xi \subset \Lambda$ such that
\begin{equation*}
  \{u_t\}_{t\in(-1,1)} \subset \sD'_\Xi(N), 
\end{equation*}
and is bounded therein. 

We break up the problem using a partition of unity. As $\Xi$ and $\WF(Z)$ are closed sets that are non-intersecting, we can find a collection of opens $U_i$ forming a locally finite cover of $N$, and closed cones $V_i,W_i\subset \mathbb{R}^n$ such that $V_i \cap -W_i = \emptyset$ and 
\begin{align*}
    \Xi &\subset \bigcup_{i} U_i \times V_i^\circ, \\
    \WF(Z) &\subset \bigcup_{i} U_i \times W_i^\circ,
\end{align*}
where the $U_i$ are coordinate patches, so that we can locally trivialize both $\dot{T}^*N$ and $N$ (which we do implicitly to condense notation). We refer the reader to the proof of proposition 14.3 in \cite{Eskin2011} for details on this construction.

Let $\{\psi_i^2\}$ be a partition of unity subordinate to the cover $\{U_i\}$. Then we can write
\begin{equation*}
    u_t(Z) = \sum_i \psi_i^2 u_t(Z) = \sum_i \psi_i u_t(\psi_i Z).
\end{equation*}
As $\bigcup_{t\in(-1,1)}\supp(u_t)$ is compact, only a finite number of terms in this sum contribute. It therefore suffices to show that
\begin{equation*}
    t \mapsto \psi_i u_t(\psi_i Z)
\end{equation*}
is continuous at $0$ for all $i$.  

We write 
\begin{equation}\label{splitintegral}
    \psi_i u_t(\psi_i Z) = \int_{\mathbb{R}_n} \widehat{\psi_i u_t}(\xi)\widehat{\psi_i Z}(-\xi) d\xi = \int_{V_i} \widehat{\psi_i u_t}(\xi)\widehat{\psi_i Z}(-\xi)d\xi + \int_{V_i^c} \widehat{\psi_i u_t}(\xi)\widehat{\psi_i Z}(-\xi)d\xi.
\end{equation}
By the previous lemma, the integrand in both these terms is a smooth function of $t$ and $\xi$. We obtain bounds
\begin{align*}
    |\widehat{\psi_i Z}(\xi)| &\leq   P_{\psi_i,N,V_i}(Z) (1+|\xi|)^{-N} & &\forall \xi \in V_i, & \\
    |\widehat{\psi_i u_t}(\xi)| &\leq \sup_{t\in[-1,1]} P_{\psi_i,N,\overline{V_i^c}}(u_t) (1+|\xi|)^{-N} & &\forall \xi \in V_i^c, t \in (-1,1). &
\end{align*}
Note that the supremum across $t$ is finite as $\{u_t\}_{t\in[-1,1]}$ is bounded in $\sD'_\Xi$. There are also polynomial bounds 
\begin{align*}
    |\widehat{\psi_i Z}(\xi)|     &\leq   C (1+|\xi|)^{M} &&\:\forall \xi \in \mathbb{R}^n, \\
    |\widehat{\psi_i u_t}(\xi)|   &\leq D (1+|\xi|)^{K}   &&\:\forall \xi \in \mathbb{R}^n, t \in (-1,1),
\end{align*}
for some $C,D,M,K$, where we used that $\{u_t\}_{t\in[-1,1]}$ is an equicontinuous subset of $\sE'(N)$ to obtain a uniform bound. It follows that the integrand in both terms of equation \eqref{splitintegral} can be bounded by an integrable function, uniformly in $t$. Hence we can apply the dominated convergence theorem to conclude that
\begin{equation*}
    \lim_{t\to 0} \psi_i u_t(\psi_i Z) = \psi_i u_0(\psi_i Z),
\end{equation*}
as required.

We now treat the case where $\tilde{E}$ is arbitrary. We take a locally finite cover $\{U_i\}$ of $N$ that trivialises $\tilde{E}$ and $\tilde{E}^!$, and select a partition of unity $\{\chi_i^2\}$ subordinate to that cover. This gives isomorphisms
\begin{equation*}
    \sE'_\Lambda(N;\tilde{E}) \xrightarrow{\sim} \bigoplus_i \sE'_{\Lambda|_{U_i}}(U_i;\tilde{E}|_{U_i})\xrightarrow{\sim} \bigoplus_i \sE'_{\Lambda|_{U_i}}(U_i)^k,
\end{equation*}
where $k$ is the rank of $\tilde{E}$. We then apply the first part of this proof to conclude that the curve 
\begin{equation*}
   t\mapsto (\pi_l \chi_i u_t)\left(\pi_l \chi_i Z\right)
\end{equation*}
is continuous, where $\pi_l$ is the projection on the $l$'th factor after acting with the local trivialization of the bundles. As the supports of $u_t$ for $t$ in some bounded interval can be bounded by a compact set $K\subset N$, only finitely many $\chi_i u_t$ are nonzero. It follows that 
\begin{equation*}
    t\mapsto u_t(Z) = \sum_i \sum_{l=1}^k (\pi_l \chi_i u_t)\left(\pi_l \chi_i Z\right),
\end{equation*}
is continuous as well.

For the integration statement, the integral $\int_a^b u_t dt \in \sE'(N;\tilde{E})$ has the defining feature that
    \begin{equation}
        \left(\int_a^b u_tdt\right)(f) = \int_a^b u_t(f) dt \: \: \forall f \in \sE(N;\tilde{E}^!).
    \end{equation}
 We can extend this map to $\sD'_{-\Lambda^c}(N;\tilde{E}^!)$ by setting
\begin{equation*}
        \left(\int_a^b u_tdt\right)(Z) \equiv \int_a^b u_t(Z) dt \: \: \forall Z\in\sD'_{-\Lambda^c},
\end{equation*}
where the right-hand side is well defined by the first part of the lemma. As $\{u_t\}_{t\in [a,b]}$ is an equicontinuous set, there is a continuous seminorm $p$ on $\sD'_{-\Lambda^c}$ so that
\begin{equation*}
    |u_t(Z)| \leq p(Z).
\end{equation*}
Hence we see that 
\begin{equation*}
 \left|\left(\int_a^b u_tdt\right)(Z)\right| \leq \int_a^b|u_t(Z)| dt \leq (b-a) p(Z),  
\end{equation*}
which implies that the extension is continuous. Hence we conclude that
\begin{equation*}
    \int_a^b u_t dt \in \left(\sD'_{-\Lambda^c}\right)' = \el{}.
\end{equation*}
\end{proof}
\CurveisSmooth*
\begin{proof}
 Recall that the topology on $\el{}(N;\tilde{E})$ is that of uniform convergence on bounded sets of $\sD'_{-\Lambda^c}(N;\tilde{E}^!)$. Let $B$ be such a bounded set. By Taylor's theorem, we have the following identity in $\sE'(N;\tilde{E})$:
    \begin{equation*}
        u_t -u_0-t\partial_tu_0 = \int_0^t s \partial^2_su_{s} ds,
    \end{equation*}
where we use the shorthand notation
\begin{equation*}
    \partial_tu_0 := (\partial_t u_t)|_{t=0}.
\end{equation*}
By the previous lemma, applied to the curve $s\mapsto s\partial_s^2u_s$, we see that this is in fact an identity in $\el{}(N;\tilde{E})$: Both sides extend to $\sD'_{-\Lambda^c}(N;\tilde{E}^!)$, and extensions from dense subsets are unique.

By assumption on $u$, there exists continuous seminorm $p$ on $\sD'_{-\Lambda^c}(N;\tilde{E}^!)$ satisfying
\begin{equation*}
    | \partial_s^2u_s(Z)| \leq p(Z) \: \forall s\in (-1,1), Z\in \sD'_{-\Lambda^c}(N;\tilde{E}^!).
\end{equation*}
It follows that
\begin{equation*}
    \left|\frac{1}{t}\left(u_t(Z)-u_0(Z)\right)-\partial_tu_0(Z)\right| \leq \frac{1}{t}\int_0^t |s \partial^2_su_{s}(Z)| ds\leq \frac{t}{2} p(Z).
\end{equation*}
Hence the left hand side converges to zero uniformly on $B$ as $t$ goes to zero, so that $\partial_t u_0$ is the derivative of $u_t$ at $0$ in $\el{}(N;\tilde{E})$. The same argument can be used to show that $u_t$ is differentiable on the whole of $\mathbb{R}$.

As $\partial_tu$ satisfies the same hypotheses as $u$, the result follows from iterating this argument.
\end{proof}
\section*{Declarations}

\subsection*{Data availability statement}
Data sharing is not applicable to this article as no new data were created or analyzed in this study.

\subsection*{Funding and competing interests}
This research was performed as part of BV's PhD, and was supported financially by an EPSRC grant (award number EP/V52010X/1). The authors have no further competing interests to declare that are relevant to the content of this article.

\printbibliography

\end{document}